\documentclass[11pt,reqno]{article}

\usepackage{amssymb}
\usepackage{latexsym}
\usepackage{amsmath}
\usepackage{calrsfs}

\usepackage{hyperref}

\usepackage[noadjust]{cite}

\usepackage{amsthm}
\usepackage{empheq}
\usepackage{bm}
\usepackage[capitalise]{cleveref}

\usepackage{todonotes}

\usepackage[margin=2.8cm]{geometry}
\numberwithin{equation}{section}
\theoremstyle{definition}
\newtheorem{theorem}{Theorem}[section]
\newtheorem{corollary}[theorem]{Corollary}
\newtheorem{proposition}[theorem]{Proposition}
\newtheorem{definition}[theorem]{Definition}

\newtheorem{notation}[theorem]{Notation}
\newtheorem{remark}[theorem]{Remark}
\newtheorem{lemma}[theorem]{Lemma}

\newcounter{alp}
\newcounter{ara}
\newcounter{rom}

\newenvironment{alphalist}{\begin{list}{(\alph{alp})\hfill}{\usecounter{alp}
     \topsep0ex \labelwidth.6cm \leftmargin.6cm \labelsep0cm
     \rightmargin0cm \parsep0ex \itemsep0ex
     \partopsep0ex}}{\end{list}}


\newcommand{\numberset}{\mathbb}
\newcommand{\N}{\numberset{N}}
\newcommand{\Z}{\numberset{Z}}

\newcommand{\R}{\numberset{R}}

\newcommand{\F}{\numberset{F}}

\newcommand{\mV}{\mathcal{V}}

\newcommand{\mL}{\mathcal{L}}

\newcommand{\mI}{\mathcal{I}}

\newcommand{\wt}{\textnormal{wt}}

\newcommand{\disc}{\textnormal{disc}}

\newcommand{\mmod}[1]{\mbox{$\;(\textnormal{mod}\;{#1})$}}

\newcommand{\T}{^{\sf T}}
\newcommand{\GL}{\textnormal{GL}}

\newcommand{\subspace}[1]{\mbox{$\langle{#1}\rangle$}}
\newcommand{\type}{\textnormal{type}}

\newcommand{\NOne}{\mbox{\textnormal{N}$_1$}}
\newcommand{\NA}{\mbox{\textnormal{N}$_{0,\text{a}}$}}
\newcommand{\NNA}{\mbox{\textnormal{N}$_{0,\text{na}}$}}
\newcommand{\Binom}[2]{{\genfrac{(}{)}{0pt}{1}{#1}{#2}}}

\newcommand{\Gaussian}[2]{{\genfrac{[}{]}{0pt}{1}{#1}{#2}}}
\newcommand{\GaussianD}[2]{\genfrac{[}{]}{0pt}{0}{#1}{#2}}


\newcommand*{\myproofname}{Proof}

\usepackage{authblk}

\usepackage{array}
\newcolumntype{x}[1]{>{\centering\arraybackslash\hspace{0pt}}p{#1}}
\newcolumntype{y}[1]{>{\centering\arraybackslash\hspace{0pt}}m{#1}}

\renewcommand{\to}{\longrightarrow}


\usepackage{setspace}
\setstretch{1.05}

\title{\textbf{$\ell$-Complementary Subspaces and Codes \\ in Finite Bilinear Spaces}}

\author[1]{Heide Gluesing-Luerssen\thanks{H. G-L was partially supported by the grant \#422479 from the Simons Foundation.}}
\author[2]{Alberto Ravagnani\thanks{A. R. was partially supported by the Dutch Research Council through the grants VI.Vidi.203.045,
OCENW.KLEIN.539,
and by the Royal Academy of Arts and Sciences of the Netherlands.}}

\affil[1]{Department of Mathematics, University of Kentucky, USA}
\affil[2]{Department of Mathematics and Computer Science, Eindhoven University of Technology, the Netherlands}

\date{December 8, 2022}

\usepackage{enumitem}
\setitemize{itemsep=0pt}
\setenumerate{itemsep=0pt}

\begin{document}
\maketitle
	
\begin{abstract}\label{sec:Abstract}
We consider (symmetric, non-degenerate) bilinear spaces over a finite field and investigate the properties of their
$\ell$-complementary subspaces, i.e., the subspaces that intersect their dual in dimension $\ell$.
This concept generalizes that~of a totally isotropic subspace and, in the context of coding theory, specializes to the notions of self-orthogonal, self-dual and linear-complementary-dual (LCD) codes.
In this paper, we focus on the
enumerative and asymptotic combinatorics of all these objects, giving formulas for their numbers and describing their typical behavior (rather than the behavior of a single object).
For example, we give a closed formula for the average weight distribution of an $\ell$-complementary code in the Hamming metric, generalizing a result by Pless and Sloane on the aggregate weight enumerator of binary self-dual codes.
Our results also show that self-orthogonal codes, despite being very sparse in the set of codes of the same dimension over a large field, asymptotically behave quite similarly to a typical, not necessarily self-orthogonal, code.
In particular, we prove that most self-orthogonal codes are MDS over a large field by computing the asymptotic proportion of the non-MDS ones for growing field size. 
\end{abstract}

\section{Introduction}

\textit{Self-orthogonal} (or \textit{totally isotropic} or sometimes just \textit{isotropic}) subspaces  over finite fields with respect to a symmetric, non-degenerate bilinear form are a classical topic in finite geometry, algebra, and group theory; see for instance~\cite{Taylor92,dieudonne1963geometrie,artin2016geometric} and the references therein. These are subspaces that are contained in their dual with respect to the underlying bilinear form. In the extreme case where their dimension is exactly half of that of the ambient space, self-orthogonal subspaces coincide with their dual and are thus called \textit{self-dual}.

When the ambient space is $\F_q^n$ and the bilinear form is the standard inner product,
self-orthogonal subspaces are called self-orthogonal \textit{codes}.
They have been extensively studied in coding theory over the past few decades, especially in connection with the invariant theory of error-correcting codes and their \textit{MacWilliams identities}; see~\cite{nebe2006self} for a general reference.
At the time of writing this paper, several questions about self-orthogonal and self-dual codes are still open, most notably the existence of a binary self-dual code of length 72 and minimum Hamming distance~16. In~\cite{CS96,S96}, self-orthogonal codes were used to construct the currently best known family of quantum codes based on classical codes.

In this paper, we consider the notion of an \textit{$\ell$-complementary} subspace over a finite field with respect to a symmetric, bilinear, non-degenerate form. These are the subspaces that intersect their dual in dimension exactly $\ell$ and thus generalize all the concepts we just discussed.
The intersection of a subspace with its dual is called the \textit{radical} of the subspace.
In the context of coding theory, the radical of a code is often called its \textit{hull}. It
plays an important role
in determining the complexity of certain algorithms that are relevant to code-based cryptography; see e.g.~\cite{sendrier2000finding}.
For $\ell=0$ we obtain the notion of a \textit{linear complementary dual} (\textit{LCD}) code,
whose study has been initiated by Massey~\cite{massey1992linear} in 1992.
More recently,
LCD codes
have received increasing attention in connection with information theory security~\cite{carlet2016complementary}.

In contrast with most of the literature on finite bilinear spaces, in this paper we do not focus on $\ell$-complementary subspaces as ``individual'' objects.
Instead, we investigate the \textit{average} and \textit{typical} properties
of the entire family of $\ell$-complementary spaces.
Furthermore, different from most of the coding theory literature, we often prove our results for an arbitrary finite bilinear space (i.e., we do not restrict to the standard inner product on $\F_q^n$).
As the reader will see, our results generally depend on the type of the underlying bilinear form, but the
differences almost become unnoticeable over sufficiently large fields.
In this kind of study, questions with a strong combinatorial flavor naturally arise. For example:

\begin{itemize}\setlength\itemsep{0em}
    \item[(Q1)] What is the asymptotic proportion of $\ell$-complementary subspaces over large fields?
     \item[(Q2)] What is the average value of the weight distribution of an $\ell$-complementary code?
     \item[(Q3)] Does the typical $\ell$-complementary code have good distance properties?
\end{itemize}

To answer these questions, we often use a hybrid approach that combines techniques from order theory with classical results in algebra and finite geometry.
We survey and sometimes re-establish the latter results using concise arguments.
This approach allows us
to capture how the algebra of the underlying field interacts with the combinatorics of $\ell$-complementary spaces.

We conclude the introduction by outlining the organization of the paper, which also gives us the chance to briefly illustrate its main contributions.

\paragraph*{Outline.}

In Section~\ref{sec:ellcompl}
we review and extend the theory of self-orthogonal subspaces of finite bilinear spaces, focusing on the
enumerative aspects.
We then introduce the concept of an $\ell$-complementary subspace and establish a formula for the number of such subspaces with a given dimension (for any bilinear, symmetric and nondegenerate form).
In Section~\ref{sec:3}
we focus on the standard inner product
of $\F_q^n$ as a bilinear form, which leads to the concept of
$\ell$-complementary error-correcting code. By specializing the previous results, we obtain closed formulas for the number of $\ell$-complementary codes of any given dimension.
In Section~\ref{sec:4} we study the average weight distribution of an $\ell$-complementary code endowed with the Hamming metric, giving a closed formula for it and generalizing a classical result by Pless and Sloane.
In Section~\ref{sec:5} we study the asymptotics of the average weight distribution of self-orthogonal codes, and use it to describe the overall behavior of this family of codes over large fields.
Finally, in Section~\ref{sec:6}
we prove that -- over a sufficiently large field --
MDS self-orthogonal codes are dense within the set of self-orthogonal codes of the same dimension. We also compute the asymptotic density of the non-MDS codes, and compare it with the asymptotic density of (not necessarily self-orthogonal) non-MDS codes.

\section{\texorpdfstring{$\ell$}{}-Complementary Subspaces over Finite Fields}
\label{sec:ellcompl}

Throughout this paper, $q$ denotes a prime power and $\F_q$ is the finite field with $q$ elements.
The main goal of this section is to give a general, closed formula for the number of $\ell$-complementary $k$-dimensional subspaces of an $\F_q$-linear space $V$ with respect to a given symmetric, bilinear and non-degenerate form~$B$.
By definition, a subspace is $\ell$-complementary if it intersects its orthogonal (with respect to~$B$) in dimension exactly~$\ell$.

We will take a hybrid approach that combines algebra and combinatorics. We start by establishing the notation for the rest of the paper and by recalling some known facts on bilinear spaces.

\begin{notation}\label{N-VB}
Throughout this paper $n \ge 1$ is an integer, $V$ is an $n$-dimensional $\F_q$-vector space, and~$B$ is a symmetric, bilinear, non-degenerate
form on $V$. The pair $(V,B)$ is called a \textbf{bilinear space}.
Unless otherwise stated, $k$ denotes an integer with $0 \le k \le n$.
\end{notation}

Recall that the form~$B$ is \textbf{alternating} if $B(v,v)=0$ for all $v\in V$.
Since~$B$ is also symmetric, this can only occur if~$q$ is even.
Moreover, in this case~$n$ must be even as well; see e.g.~\cite[p.~142]{Moor17}.

Let $\{e_1,\ldots,e_n\}$ be an ordered basis of~$V$. Then the matrix
\[
    G=\big(B(e_i,e_j)\big)_{i,j=1,\ldots,n}
\]
is called a \textbf{Gram matrix} of $(V,B)$.
It is non-singular and unique up to congruence $G\sim MGM\T$, where $M\in\GL_n(\F_q)$.
Define $Q_q$ as the set of nonzero squares in~$\F_q$.
Note that $Q_q$ is a subgroup of~$\F_q^*$ and $Q_q=\F_q^*$ if $q$ is even and $|Q_q|=(q-1)/2$ otherwise.
The determinants of different Gram matrices of~$B$ differ by a scalar factor in~$Q_q$.
All of this provides us with a well-defined invariant
\[
     \disc(V,B)=\det(G)Q_q\in\F_q^*/Q_q,
\]
which is called the \textbf{discriminant} of the bilinear space $(V,B)$.
We say that $\disc(V,B)$ \textbf{is a square} if $\det(G)\in Q_q$ for any (hence every) Gram matrix~$G$, and that it \textbf{is a nonsquare} otherwise.
Note that for even~$q$, the discriminant is always a square and therefore does not carry any information.

We define the following basic notions, which all depend on the choice of the form~$B$.
For ease of notation, we do not indicate this dependence.

\begin{definition}\label{def:Notation}
Let $C \le V$ be a $k$-dimensional subspace. The \textbf{orthogonal} of $C$ is the subspace
\[
     C^\perp=\{v \in V \mid B(v,w)=0 \mbox{ for all $w \in C$}\} \le V.
\]
The \textbf{radical} of $C$ is $C\cap C^\perp$.
If $\ell\in\{0,\ldots,k\}$, then $C$ is called an $\ell$-\textbf{complementary} subspace if $\dim(C\cap C^\perp)=\ell$.
We say that $C$ is \textbf{self-orthogonal} if it is $k$-complementary, that is $C \subseteq C^\perp$, and that~$C$ is \textbf{LCD} if
it is $0$-complementary (linear-complementary-dual).
If $2k=n$ and $C$ is self-orthogonal, then $C$ is \textbf{self-dual}. Finally, a vector $v \in V$ is called \textbf{self-orthogonal} if the subspace $\subspace{v} \le V$ is self-orthogonal.
\end{definition}

In the coding literature, the radical of a subspace~$C$ of~$\F^n$, i.e., $C\cap C^\perp$, is often called its \textbf{hull}; see~\cite{sendrier1997dimension}.
Since this space is the radical of the bilinear form restricted to~$C$, we prefer to continue calling it the radical.
Self-orthogonal subspaces are also called \textbf{totally isotropic} and self-orthogonal vectors are also known as \textbf{isotropic} or \textbf{singular} vectors.

In this section we are primarily concerned with the following sets and quantities.

\begin{notation} \label{not:qties}
For $k\in\{0,\ldots,n\}$ and $\ell\in\{0,\ldots,k\}$, let
\[
   \Sigma(V,B,k,\ell)=\{C\leq V\mid \dim(C)=k \mbox{ and }  C\text{ is $\ell$-complementary}\}.
\]
We also define $\sigma(V,B,k,\ell):=|\Sigma(v,B,k,\ell)|$ and $\sigma(V,B,k):=\sigma(V,B,k,k)$, which is the number of
$k$-dimensional self-orthogonal subspaces of $V$.
\end{notation}


The non-degeneracy of~$B$ implies the following standard consequences. The proof is omitted.

\begin{lemma}
Let $C,D \le V$. Then $\dim(C^\perp) = n-\dim(C)$,
$(C+D)^\perp=C^\perp \cap D^\perp$, and
$(C \cap D)^\perp = C^\perp + D^\perp$.
In particular,
an LCD subspace~$C \le V$ satisfies $C\oplus C^\perp=V$. Furthermore, $\sigma(V,B,k,\ell)=0$ for $\ell>\min\{k,n-k\}$.
\end{lemma}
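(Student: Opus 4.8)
The plan is to establish the dimension formula $\dim(C^\perp)=n-\dim(C)$ first, deduce from it the involution property $C^{\perp\perp}=C$, and then obtain every remaining assertion as a formal consequence.

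First I would prove the dimension formula. Consider the map $\Phi\colon V\to V^\ast$ sending $v$ to the linear functional $B(v,\cdot)$. Non-degeneracy of~$B$ says precisely that $\Phi$ is injective, hence an isomorphism by comparing dimensions. Composing $\Phi$ with the (surjective) restriction map $V^\ast\to C^\ast$ yields a surjective linear map $V\to C^\ast$ whose kernel is exactly $C^\perp$, so rank--nullity gives $\dim(C^\perp)=n-\dim(C^\ast)=n-\dim(C)$. Applying this identity twice together with the trivial inclusion $C\subseteq C^{\perp\perp}$ forces $C^{\perp\perp}=C$.

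Next, the identity $(C+D)^\perp=C^\perp\cap D^\perp$ is immediate from the definition: a vector annihilates all of $C+D$ under $B$ iff it annihilates all of $C$ and all of $D$. For the dual identity $(C\cap D)^\perp=C^\perp+D^\perp$, I would apply the one just proved to the subspaces $C^\perp$ and $D^\perp$, obtaining $(C^\perp+D^\perp)^\perp=C^{\perp\perp}\cap D^{\perp\perp}=C\cap D$; taking orthogonals once more and invoking $C^{\perp\perp}=C$ gives $C^\perp+D^\perp=(C\cap D)^\perp$. The LCD claim is then a one-line count: if $C\cap C^\perp=0$ then $\dim(C+C^\perp)=\dim(C)+\dim(C^\perp)=n$, hence $C\oplus C^\perp=V$. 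Finally, if $C$ is $k$-dimensional and $\ell$-complementary, its radical $C\cap C^\perp$ lies inside both $C$ (of dimension $k$) and $C^\perp$ (of dimension $n-k$), so $\ell\le\min\{k,n-k\}$; contrapositively $\sigma(V,B,k,\ell)=0$ whenever $\ell$ exceeds this bound.

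None of these steps is a real obstacle; the only point demanding a little care is to avoid circular reasoning in the proof of $(C\cap D)^\perp=C^\perp+D^\perp$, which is why I would derive it from the already-proved identity for $(C+D)^\perp$ and the involution $C^{\perp\perp}=C$ rather than from an ad hoc dimension argument.
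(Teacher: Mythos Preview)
Your argument is correct and is the standard one; the paper itself omits the proof, treating the lemma as well known. There is nothing to compare against, and each of your steps (the isomorphism $V\to V^\ast$ for the dimension formula, the involution $C^{\perp\perp}=C$, and the derivation of $(C\cap D)^\perp$ from $(C+D)^\perp$) is exactly what one would expect.
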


As already mentioned, the goal of this section is to determine the numbers $\sigma(V,B,k,\ell)$ explicitly.
They clearly depend on the choice of the bilinear form~$B$.
As it turns out, they are fully determined by the maximum possible dimension of a self-orthogonal subspace of $(V,B)$.

We will start with determining $\sigma(V,B,k)$.
These numbers are in fact known from~\cite{Pless65}, which is based on~\cite{segre1959geometrie}.
Since~\cite{Pless65} is hard to obtain, we will derive these results again and in a quite different fashion.
We will do so in a recursive manner by using the numbers $\sigma(V,B,1)$, which can be found in more easily accessible
literature on finite geometry.

The following fact is a consequence of  Witt's Theorem, which pertains to the extendability of isometries on subspaces of~$V$
to isometries on the full space~$V$.

\begin{theorem}[\text{see~\cite[p.~59]{Taylor92}}]
\label{thwitt}
All maximal self-orthogonal subspaces of~$V$ have the same dimension. As a consequence, every self-orthogonal subspace $C \le V$ can be extended to a maximal one.
\end{theorem}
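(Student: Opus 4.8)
The plan is to deduce both assertions from Witt's theorem, which asserts that every isometry between two subspaces of the non-degenerate bilinear space $(V,B)$ extends to an isometry of the whole space~$V$. I would take Witt's theorem as an external input, citing~\cite{Taylor92}, and carry out only the (short) formal deduction.

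I would begin with the second assertion, which is in fact elementary on its own: if $C \le V$ is self-orthogonal but not maximal, then it is strictly contained in some self-orthogonal subspace, and since $\dim(V) = n$ is finite, iterating this produces a maximal self-orthogonal subspace containing~$C$. The reason this is phrased as a consequence of the first assertion is that, once the first assertion is proved, ``maximal'' coincides with ``of maximum possible dimension'', so that in fact $C$ extends to a self-orthogonal subspace of the largest possible dimension.

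For the first assertion, let $W_1, W_2 \le V$ be maximal self-orthogonal subspaces and assume, without loss of generality, that $k_1 := \dim(W_1) \le \dim(W_2) =: k_2$. The restriction of~$B$ to a self-orthogonal subspace is identically~$0$, so if $U \le W_2$ is any $k_1$-dimensional subspace, then both $B|_{W_1}$ and $B|_U$ vanish, and hence every linear isomorphism $\phi \colon W_1 \to U$ is an isometry. By Witt's theorem, $\phi$ extends to an isometry $\Phi$ of $(V,B)$. Since an isometry maps self-orthogonal subspaces to self-orthogonal subspaces and preserves inclusions, $\Phi(W_1) = U$ is again a \emph{maximal} self-orthogonal subspace; but $\Phi(W_1) = U \subseteq W_2$ and $W_2$ is self-orthogonal, so maximality of $\Phi(W_1)$ forces $U = W_2$, and therefore $k_1 = k_2$. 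This shows that all maximal self-orthogonal subspaces have the same dimension.

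The argument presents no serious obstacle once Witt's theorem is in hand; the only point requiring care is that the version of Witt's theorem being invoked must cover every symmetric, non-degenerate form under consideration, in particular the alternating case and the case of even~$q$. This is precisely why I would cite a reference such as~\cite{Taylor92}, which treats finite orthogonal and symplectic geometry uniformly, rather than reprove Witt's theorem here.
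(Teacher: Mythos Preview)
Your proposal is correct and matches the paper's approach exactly: the paper does not give a proof but simply states, just before the theorem, that it ``is a consequence of Witt's Theorem, which pertains to the extendability of isometries on subspaces of~$V$ to isometries on the full space~$V$,'' and cites~\cite[p.~59]{Taylor92}. Your write-up supplies precisely this deduction in detail, including the correct observation about needing a version of Witt's theorem valid in all characteristics.
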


\begin{definition}
The common dimension defined in Theorem~\ref{thwitt}
is called the the
\textbf{Witt index} of the bilinear space $(V,B)$ and denoted by $w(V,B)$.
\end{definition}

By definition, $\sigma(V,B,k)=0$ for all $k>w(V,B)$ and $\sigma(V,B,k,\ell)=0$ for $\ell>w(V,B)$ because $C\cap C^\perp$ is self-orthogonal.

The Witt index of a bilinear space takes the following values.

\begin{theorem}[\mbox{\cite[pp.~150]{Moor17}} and \mbox{\cite[Cor.~3.1]{Pless64}} for $q$ even]\ \label{thm:witt}
\begin{alphalist}
\item Let~$n$ be odd. Then $w(V,B)=(n-1)/2$.
\item Let $n$ be even. If~$q$ is even we have $w(V,B)=n/2$ and if $q$ is odd we have
         \[
             w(V,B)=\left\{\begin{array}{cl}n/2& \text{if }(-1)^{n/2}\,\disc(V,B)\in Q_q,\\[.5ex]n/2-1&\text{otherwise.}\end{array}\right.
         \]
\end{alphalist}
\end{theorem}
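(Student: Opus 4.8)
The plan is to reduce $(V,B)$ to an \emph{anisotropic core} --- an orthogonal summand on which the restricted form carries no nonzero self-orthogonal vector --- by repeatedly splitting off hyperbolic planes, and then to read off $w(V,B)$ from the dimension of this core, which is in turn pinned down by $n$ and $\disc(V,B)$. For the splitting I would argue that whenever $(V,B)$ possesses a nonzero self-orthogonal vector $v$, one can write $V=P\perp W$ with $P$ a non-degenerate plane containing the self-orthogonal line $\subspace{v}$ and $W=P^\perp$ non-degenerate of dimension $n-2$: non-degeneracy produces $u$ with $B(v,u)=1$, and then for odd $q$ one replaces $u$ by $u-\tfrac12 B(u,u)v$ to arrange $B(u,u)=0$ (making $P$ a genuine hyperbolic plane), for even $q$ with $B$ alternating one already has $B(u,u)=0$, and for even $q$ with $B$ non-alternating the Gram matrix \SmallMat{0}{1}{1}{c} of $P=\subspace{v,u}$ with $c=B(u,u)$ is still non-singular. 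Iterating, $V=P_1\perp\cdots\perp P_r\perp V_0$ where each $P_i$ carries a self-orthogonal line $\subspace{e_i}$ and $V_0$ is anisotropic.

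Next I would show $w(V,B)=r$. Setting $C=\subspace{e_1,\dots,e_r}$, a direct computation gives $C^\perp=C\oplus V_0$ together with $B(c+v_0,c+v_0)=B(v_0,v_0)$ for $c\in C$ and $v_0\in V_0$; hence every self-orthogonal vector of $C^\perp$ already lies in $C$, because $V_0$ is anisotropic. Consequently $C$ is a maximal self-orthogonal subspace, so $w(V,B)=\dim C=r=(n-\dim V_0)/2$ by Theorem~\ref{thwitt}. The whole statement therefore comes down to computing $\dim V_0$, which I claim satisfies $\dim V_0\equiv n\pmod2$ (clear from the decomposition) and $\dim V_0\le 2$. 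For odd $q$ the latter follows by diagonalising $B|_{V_0}$ and invoking the classical count that $\{ax^2:x\in\F_q\}$ and $\{c-by^2:y\in\F_q\}$ each have $(q+1)/2$ elements and hence meet, so that every three-dimensional non-degenerate symmetric form over $\F_q$ is isotropic. For even $q$ one gets the stronger bound $\dim V_0\le 1$: if $B|_{V_0}$ is alternating then every vector of $V_0$ is self-orthogonal, forcing $V_0=0$; if it is non-alternating, then $v\mapsto B(v,v)$ is additive and, because the Frobenius $\lambda\mapsto\lambda^2$ is a bijection on $\F_q$, surjective onto $\F_q$, so its kernel --- the self-orthogonal vectors of $V_0$ --- has codimension $1$.

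This already proves part~(a) ($n$ odd forces $\dim V_0=1$, hence $w(V,B)=(n-1)/2$) and the even-$q$ half of part~(b) ($n$ even and $q$ even force $\dim V_0=0$, hence $w(V,B)=n/2$). For $n$ even and $q$ odd we are left with $\dim V_0\in\{0,2\}$, so $w(V,B)\in\{n/2,\,n/2-1\}$, and it remains to separate the two cases by the discriminant. I would use that $\disc$ is multiplicative over orthogonal sums, that each $P_i$ may be taken to be the standard hyperbolic plane (with determinant $-1$), and that a two-dimensional anisotropic form $\diag(a,b)$ over $\F_q$ has $-\det=-ab\notin Q_q$ (isotropy of $\diag(a,b)$ being equivalent to $-ab\in Q_q$). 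Writing $\disc(V,B)=(-1)^r\det(V_0)\,Q_q$ with $r=(n-\dim V_0)/2$, a short computation shows $(-1)^{n/2}\disc(V,B)\in Q_q$ when $\dim V_0=0$, and $(-1)^{n/2}\disc(V,B)=-\det(V_0)\,Q_q\notin Q_q$ when $\dim V_0=2$; hence $w(V,B)=n/2$ exactly when $(-1)^{n/2}\disc(V,B)$ is a square, which is part~(b).

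The part I expect to cause the most trouble is the characteristic-$2$ analysis. There the rescaling that cleanly produces a hyperbolic plane for odd $q$ is unavailable (since $2=0$), the alternating and non-alternating cases must be treated separately both when splitting and when bounding the anisotropic core, and it is precisely the Frobenius-kernel argument --- which has no analogue over a general field --- that drives $\dim V_0$ down to $\le 1$ and gives $w(V,B)=\lfloor n/2\rfloor$ there. Everything else is bookkeeping, the only delicate point being that the square class of $-1$ itself depends on $q\bmod 4$; this is why the two-dimensional anisotropic case is best phrased through the square class of $-\det(V_0)$ rather than through explicit signs.
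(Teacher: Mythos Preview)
Your argument is correct. Note, however, that the paper does not supply its own proof of this theorem: it is quoted from the literature (\cite{Moor17} for odd~$q$, \cite{Pless64} for even~$q$). What you have written is essentially the standard textbook derivation --- repeatedly split off non-degenerate planes containing an isotropic line until an anisotropic kernel $V_0$ remains, identify $w(V,B)=(n-\dim V_0)/2$ via Theorem~\ref{thwitt}, bound $\dim V_0\le 2$ (respectively $\le 1$ in characteristic~$2$) by the pigeonhole/Frobenius arguments you give, and finally read off the discriminant from the orthogonal decomposition --- and this is in substance the route taken in those references. Your handling of the characteristic-$2$ case (where the plane $P$ need not be hyperbolic but is still non-degenerate, and where the additive map $v\mapsto B(v,v)$ replaces diagonalisation) is clean and correct; the only property of~$P_i$ actually used in Step~2 is that $\langle e_i\rangle^\perp\cap P_i=\langle e_i\rangle$, which holds regardless of whether $B(u_i,u_i)$ vanishes.
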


For the remainder of this paper
it will be very convenient to classify bilinear spaces according to their ``type'', defined as follows.

\begin{definition}\label{def:types}
We introduce the following \textbf{types}.
\begin{alphalist}
\item Let~$q$ be odd. The space $(V,B)$ is said to be of type \textbf{(P)} if~$n$ is odd; it is of type \textbf{(H)} if $n$ is even and
$w(V,B)=n/2$, and it is of type \textbf{(E)} if $n$ is even and $w(V,B)=n/2-1$. The types stand for ``parabolic'', ``hyperbolic'', and ``elliptic'', respectively.
\item Let~$q$ be even. We say that $(V,B)$ is of
type \textbf{(N$_{\textbf{1}}$)} if $n$ is odd; it is of type \textbf{(N$_{\textbf{0,a}}$)} if~$n$ is even and $B$ is alternating, and
of type \textbf{(N$_{\textbf{0,na}}$)} if~$B$ is non-alternating.
\end{alphalist}
\end{definition}

Note that if~$q$ is odd, then~$n$ together with $w(V,B)$ fully determine the type, whereas for even~$q$ the Witt index does not distinguish between (\NA) and (\NNA).

We now turn to the problem of counting the number of self-orthogonal subspaces of a given dimension for each of the above types.
The following result will be a main tool in our approach.
A similar result appears in \cite[Thm.~24.3]{Moor17}.

\begin{proposition}\label{prop:quotient}
Let $w=w(V,B)$ and let $U\leq V$ be a self-orthogonal subspace of dimension  $0\leq r\leq w$.  On the quotient space
$U^\perp/U$ define the bilinear form $B_U$ via $B_U(v+U,w+U)=B(v,w)$.
Then $B_U$ is  well-defined and a symmetric, bilinear, non-degenerate form.
Moreover, for every $k\in\{r,\ldots,w\}$ the projection $\pi: U^\perp \to U^\perp/U$ induces a bijection between the sets
\[
    \{C\leq V\mid \dim C=k,\, U\leq C,\, C\text{ self-orthogonal w.r.t. }B\}
\]
and
\[
    \{\tilde{C}\leq U^\perp/U\mid \dim\tilde{C}=k-r,\, \tilde{C}\text{ self-orthogonal w.r.t. }B_U\}.
\]
In particular, $w(U^\perp/U,B_U)=w(V,B)-r$.
Furthermore, if the type of $(V,B)$ is not (\NNA), then
$(U^\perp/U,B_U)$ is of the same type as $(V,B)$ unless $r=n/2$, in which case $U^\perp/U=\{0\}$.
\end{proposition}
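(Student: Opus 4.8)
The plan is to verify the five assertions of the proposition in the order in which they are stated, since each one feeds into the next. First, \emph{well-definedness and non-degeneracy of $B_U$}. Because $U$ is self-orthogonal we have $U\le U^\perp$, so the quotient $U^\perp/U$ is meaningful. If $v,v'\in U^\perp$ with $v-v'\in U$ and $w,w'\in U^\perp$ with $w-w'\in U$, then $B(v,w)-B(v',w')=B(v-v',w)+B(v',w-w')=0$, since $w,v'\in U^\perp$ while $v-v',w-w'\in U$; hence $B_U$ is well-defined, and it is symmetric and bilinear because $B$ is. For non-degeneracy, observe that $v+U$ (with $v\in U^\perp$) lies in the radical of $B_U$ if and only if $B(v,w)=0$ for all $w\in U^\perp$, i.e. $v\in(U^\perp)^\perp$. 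Since $\dim X^\perp=n-\dim X$ for every subspace $X$ we get $\dim(U^\perp)^\perp=\dim U$, and as $U\subseteq(U^\perp)^\perp$ always holds, $(U^\perp)^\perp=U$; thus the radical of $B_U$ is trivial.

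Next, \emph{the bijection}. If $C\le V$ is self-orthogonal with $U\le C$, then $C\le C^\perp\le U^\perp$, so $\pi(C)=C/U$ is a subspace of $U^\perp/U$ of dimension $\dim C-r$, and it is self-orthogonal with respect to $B_U$ because $B_U(c+U,c'+U)=B(c,c')=0$ for all $c,c'\in C$. Conversely, if $\tilde C\le U^\perp/U$ is self-orthogonal with respect to $B_U$, then $\pi^{-1}(\tilde C)\le U^\perp$ contains $U$, has dimension $\dim\tilde C+r$, and is self-orthogonal with respect to $B$ by the same identity. The maps $C\mapsto\pi(C)$ and $\tilde C\mapsto\pi^{-1}(\tilde C)$ are mutually inverse by the correspondence theorem for subspaces, which gives the stated bijection for each $k$ (both sides being empty when $k>w$).

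Then, \emph{the Witt index}. Applying the bijection with $k=w$: by \Cref{thwitt} the subspace $U$ extends to a maximal, hence $w$-dimensional, self-orthogonal subspace of $V$, whose image is a self-orthogonal subspace of $U^\perp/U$ of dimension $w-r$, so $w(U^\perp/U,B_U)\ge w-r$. Conversely, a self-orthogonal subspace of $U^\perp/U$ of dimension $w(U^\perp/U,B_U)$ pulls back to a self-orthogonal subspace of $V$ of dimension $w(U^\perp/U,B_U)+r\le w$; hence $w(U^\perp/U,B_U)=w-r$. For \emph{the type}, note that $\dim(U^\perp/U)=n-2r=:n'$. If $n$ is even and $r=n/2$ then $n'=0$ and $U^\perp/U=\{0\}$; otherwise $n'\ge 1$ — and this exception can occur only when $w=n/2$, since $r\le w=(n-1)/2<n/2$ for $n$ odd and $r\le w=n/2-1<n/2$ in the elliptic case. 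Assume $n'\ge1$. If $q$ is odd, the type of a finite bilinear space is determined by the parity of its dimension together with its Witt index (\Cref{thm:witt}, \Cref{def:types}); since $n'\equiv n\pmod 2$ and $w(U^\perp/U,B_U)=w-r$, a case check gives: type $(\mathrm P)$ forces $n'$ odd and Witt index $(n'-1)/2$, type $(\mathrm H)$ forces $n'$ even and Witt index $n'/2$, and type $(\mathrm E)$ forces $n'$ even and Witt index $n'/2-1$, so $(U^\perp/U,B_U)$ has the same type as $(V,B)$. If $q$ is even and $(V,B)$ has type \NOne, then $n$ and hence $n'$ is odd, so $(U^\perp/U,B_U)$ again has type \NOne; if $q$ is even and $(V,B)$ has type \NA, then $n'$ is even and $B_U(v+U,v+U)=B(v,v)=0$ for all $v\in U^\perp$, so $B_U$ is alternating and $(U^\perp/U,B_U)$ has type \NA.

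I expect none of the steps to be deep; the main care is required in the last one, where one must keep track of the exceptional case $r=n/2$ and notice that in even characteristic the non-alternating type \NNA{} is genuinely excluded, because passing to $U^\perp/U$ can turn a non-alternating symmetric form into an alternating one.
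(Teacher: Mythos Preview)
Your proof is correct and follows essentially the same route as the paper's: verify well-definedness and non-degeneracy of $B_U$, use $C\le C^\perp\le U^\perp$ to set up the bijection via the lattice correspondence, read off the Witt index from \cref{thwitt}, and then check type preservation case by case via parity of $n-2r$ and the Witt index (plus the observation that alternating forms stay alternating). You are simply more explicit than the paper, which compresses the bijection step into the one-line remark $\pi(C)^\perp=\pi(C^\perp)$.
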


The last statement about the invariance of the type is not true for type (\NNA), because~$B_U$ may be alternating even if~$B$ is not.
We will return to this later in \cref{ex:InducedAlternating}  and \cref{prop:SOContainx}, where we determine the number of self-orthogonal codes in $\F_q^n$ containing a given nonzero vector.
This will then be used in \cref{cor:sumweight} to determine the average Hamming weight distribution of self-orthogonal codes.

\begin{proof}[Proof of Proposition~\ref{prop:quotient}]
The well-definedness and non-degeneracy of $B_U$ is easily verified.
For the rest note that the containment $U\leq C$ for a  self-orthogonal space~$C$ implies $C\leq C^\perp\leq U^\perp$.
Hence the subspaces in the first set of the statement are actually contained in $U^\perp$.
Moreover, one easily checks that $\pi(C)^\perp=\pi(C^\perp)$, and thus the bijection follows.
The conclusion about the Witt index follows now directly from \cref{thwitt}.
Finally, the invariance of the type for odd~$q$ follows from the fact that $\dim(U^\perp/U)=n-2r$, which has the same parity as~$n$,
together with $n/2-r=(n-2r)/2$, which lets us infer the type from the Witt index.
For even~$q$ and odd~$n$, the invariance of the type is straightforward.
Furthermore, it is clear that the form~$B_U$ is alternating, if~$B$ is.
\end{proof}

\begin{notation} \label{not:B_U}
In the remainder of the paper, for a self-orthogonal space $U \le V$ we let $B_U$ denote the bilinear form defined in Proposition~\ref{prop:quotient}.
\end{notation}

Proposition~\ref{prop:quotient} has the following immediate consequence.

\begin{corollary}\label{cor:SOCodesContaining}
Let  $U\leq V$ be a self-orthogonal subspace of dimension $0\le r \leq w(V,B)$.
The number of $k$-dimensional self-orthogonal subspaces $C \le V$ containing $U$ is
$\sigma(U^\perp/U,B_U,k-r)$.
\end{corollary}

The above result allows us to determine, for most types, the number of $k$-dimensional self-orthogonal codes recursively. We just need the number of $1$-dimensional self-orthogonal codes.
Note that the case (\NA) below is clear, and so is the vector space property of~$\mI$.

\begin{theorem}[\mbox{see \cite[p.~153]{Moor17} for (P), (H), (E) and \cite[Cor.~2.2]{Pless64} for (\NOne) and (\NNA)}]\label{thm:SOVectors}\ \\
Let $w=w(V,B)$. Then
\[
     \sigma(V,B,1)=\left\{\begin{array}{cl}
         \frac{(q^w-1)(q^{n-w-1}+1)}{q-1}&\text{for type (P), (H), (E)},\\[1ex]
         \frac{q^n-1}{q-1}&\text{for type (\NA)},\\[1ex] \frac{q^{n-1}-1}{q-1}&\text{for type (\NOne) and (\NNA).}\end{array}\right.
\]
Furthermore, if $q$ is even then the set $\mI=\{v\in V\mid B(v,v)=0\}$ is a subspace of~$V$.
We have $\mI=V$ if~$B$ is alternating, and
$\dim\mI=n-1$
if~$B$ is non-alternating.
\end{theorem}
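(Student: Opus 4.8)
The plan is to reduce the entire statement to counting self-orthogonal \emph{vectors}. Set $N(V,B)=|\{v\in V\mid B(v,v)=0\}|$. Since a $1$-dimensional subspace $\langle v\rangle$ is self-orthogonal exactly when $B(v,v)=0$, and each such subspace contains $q-1$ nonzero (self-orthogonal) vectors while distinct $1$-dimensional subspaces meet only in $0$, we get $\sigma(V,B,1)=(N(V,B)-1)/(q-1)$. So it suffices to compute $N(V,B)$, and I would split the argument according to the parity of $q$, which coincides with the split between the types $\{(\NOne),(\NA),(\NNA)\}$ and $\{$(P),(H),(E)$\}$.

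First I would dispatch the case $q$ even (which also yields the ``furthermore'' assertion). Here $\mathrm{char}\,\F_q=2$, and using symmetry of $B$ the map $Q\colon V\to\F_q$, $Q(v)=B(v,v)$, is additive: $Q(v+w)=B(v,v)+B(w,w)+2B(v,w)=Q(v)+Q(w)$. Since $x\mapsto x^{1/2}$ (the inverse of the Frobenius $x\mapsto x^{2}$) is a field automorphism of $\F_q$, the composite $\varphi(v)=Q(v)^{1/2}$ satisfies $\varphi(v+w)=\varphi(v)+\varphi(w)$ and $\varphi(\lambda v)=(\lambda^{2}Q(v))^{1/2}=\lambda\varphi(v)$, so $\varphi$ is $\F_q$-linear and $\mI=\ker\varphi$ is a subspace. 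If $B$ is alternating then $\mI=V$ by definition, whence $N(V,B)=q^{n}$ and $\sigma(V,B,1)=(q^{n}-1)/(q-1)$, settling type (\NA). If $B$ is non-alternating then $\varphi\neq 0$, so $\dim\mI=n-1$, $N(V,B)=q^{n-1}$ and $\sigma(V,B,1)=(q^{n-1}-1)/(q-1)$; type (\NNA) is non-alternating by definition, and type (\NOne) has $n$ odd and hence is non-alternating as well (alternating forms live only in even dimension, as recalled in the excerpt). This covers all three even-$q$ types.

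Next, the case $q$ odd, where $\mI$ is generally not a subspace. Put $w=w(V,B)$. By the Witt decomposition (a consequence of Witt's theorem, already invoked) I would write $V=H_{1}\perp\cdots\perp H_{w}\perp V_{0}$ orthogonally, each $H_{i}$ a hyperbolic plane (a basis $\{e,f\}$ with $B(e,e)=B(f,f)=0$, $B(e,f)=1$) and $V_{0}$ anisotropic of dimension $n-2w\in\{0,1,2\}$; these three subcases are precisely (H), (P), (E). The core is a recursion obtained by peeling off one hyperbolic plane. A direct count in $H_{1}$ gives $|\{h\in H_{1}\mid B(h,h)=c\}|=2q-1$ for $c=0$ and $q-1$ for $c\neq 0$. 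Writing $W=H_{1}^{\perp}$ ($\dim W=n-2$) and $v=h+u$ with $h\in H_{1}$, $u\in W$, we have $B(v,v)=B(h,h)+B(u,u)$, so summing over $u$ yields
\[
   N(V,B)=q\,N(W)+(q-1)\,q^{n-2}.
\]
The base cases are $N=1$ for anisotropic $V_{0}$ of dimension $0,1,2$ (trivial for dimensions $0$ and $2$; in dimension $1$ because $\lambda^{2}B(v_{0},v_{0})=0$ forces $\lambda=0$). Iterating down to $V_{0}$ gives $N(V,B)=q^{2w-1}+q^{w}-q^{w-1}$ in type (H), $N(V,B)=q^{2w}$ in type (P), and $N(V,B)=q^{2w-1}-q^{w}+q^{w-1}$ in type (E); subtracting $1$, dividing by $q-1$, and factoring (e.g.\ $q^{2w-1}+q^{w}-q^{w-1}-1=(q^{w-1}+1)(q^{w}-1)$, with analogous identities in the other two cases) produces exactly the claimed closed forms once $w$ is replaced by $n/2$, $(n-1)/2$, $n/2-1$ respectively.

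I expect the main effort to sit in the odd-$q$ case: one must quote the Witt decomposition correctly, check the hyperbolic-plane count and the ensuing recursion, and then carry out the routine-but-fiddly solution of the recursion and the final factorizations matching each type to the displayed formula. The even-$q$ case and the ``furthermore'' part are short once the characteristic-$2$ additivity of $v\mapsto B(v,v)$ is noticed. As an alternative for odd $q$, one can instead identify $\sigma(V,B,1)$ with the number of $\F_q$-rational points of a parabolic, hyperbolic, or elliptic quadric in $\Pro(V)$ and quote the classical point counts; the recursion above is essentially a self-contained proof of those.
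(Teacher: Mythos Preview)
Your approach is sound and in fact goes well beyond what the paper does: the paper does not prove this theorem but simply cites \cite[p.~153]{Moor17} for types (P), (H), (E) and \cite[Cor.~2.2]{Pless64} for (\NOne) and (\NNA), remarking only that the (\NA) case and the subspace property of~$\mI$ are ``clear.'' Your proposal supplies a fully self-contained argument---the characteristic-$2$ linearization $\varphi(v)=B(v,v)^{1/2}$ for the even-$q$ cases and the subspace claim, and a Witt-decomposition recursion for the odd-$q$ counts. Both are standard and correct, and the alternative you mention at the end (quoting the classical point counts for parabolic/hyperbolic/elliptic quadrics) is exactly what the cited reference does.

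One arithmetic slip to flag: in type (E) your stated value $N(V,B)=q^{2w-1}-q^{w}+q^{w-1}$ is wrong. With $\dim V_{0}=2$ the recursion $N\mapsto qN+(q-1)q^{\dim-2}$ iterated $w$ times from $N_{0}=1$ gives $N(V,B)=q^{2w+1}-q^{w+1}+q^{w}$, and then $N-1=(q^{w}-1)(q^{w+1}+1)$, which matches the claimed formula since $n-w-1=w+1$ in type (E). It looks like you pattern-matched the exponents from the (H) case and only flipped the signs. This does not affect the validity of the method; it is precisely the step you yourself labelled ``routine-but-fiddly,'' and carrying it out carefully fixes it.
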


Combining the previous theorem
with Proposition~\ref{prop:quotient} we obtain a recursive formula for the number of $k$-dimensional self-orthogonal subspaces.
This works for all types except (\NNA).

\begin{theorem} \label{thm:witt2}
Suppose $\dim V=n\ge2$ and let $w=w(V,B)$. Suppose the type of $(V,B)$ is not~(\NNA).
Let $1 \le k \le w$.
Then
\[
     \sigma(V,B,k)\!=\!\sigma(V,B,k\!-\!1) Q,\; \text{ with }
      Q=\!\left\{\!\!\begin{array}{cl}
       \frac{(q^{w-k+1}-1)(q^{n-k-w}+1)}{q^k-1}&\!\!\text{if the type is (P), (H), or (E)},\\[1ex]
       \frac{(q^{n-2k+2}-1)}{q^k-1}&\!\!\text{if the type is (\NA)},\\[1ex]
        \frac{(q^{n-2k+1}-1)}{q^k-1}&\!\!\text{if the type is (\NOne)}.\\[1ex]
       \end{array}\right.
\]
\end{theorem}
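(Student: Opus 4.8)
The plan is to set up a recursion on $k$ by counting, in two different ways, the pairs $(U,C)$ where $U$ is a self-orthogonal subspace of dimension $k-1$, $C$ is a self-orthogonal subspace of dimension $k$, and $U \le C$. First I would fix such a pair and note that there are $\sigma(V,B,k)$ choices for $C$, and then inside a fixed $k$-dimensional self-orthogonal $C$ the number of $(k-1)$-dimensional subspaces $U \le C$ is the Gaussian binomial $\qbinsmall{k}{k-1}{q} = (q^k-1)/(q-1)$; hence the number of pairs equals $\sigma(V,B,k)\,(q^k-1)/(q-1)$. On the other hand, I would fix $U$ first: there are $\sigma(V,B,k-1)$ choices for $U$, and by \cref{cor:SOCodesContaining} the number of $k$-dimensional self-orthogonal $C$ containing $U$ equals $\sigma(U^\perp/U, B_U, k-(k-1)) = \sigma(U^\perp/U, B_U, 1)$, the number of one-dimensional self-orthogonal subspaces of the quotient space. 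Equating the two counts gives
\[
   \sigma(V,B,k) \, \frac{q^k-1}{q-1} = \sigma(V,B,k-1)\,\sigma(U^\perp/U, B_U, 1),
\]
so it remains to compute $\sigma(U^\perp/U, B_U, 1)$ and divide by $(q^k-1)/(q-1)$.

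Next I would invoke \cref{prop:quotient}: since the type of $(V,B)$ is not (\NNA), the quotient space $(U^\perp/U, B_U)$ has the same type as $(V,B)$ (the case $r = n/2$ is excluded because $k \le w \le n/2$ forces $k-1 < n/2$, so the quotient is nonzero), it has dimension $n - 2(k-1) = n-2k+2$, and its Witt index is $w - (k-1) = w-k+1$. Then I would apply \cref{thm:SOVectors} to the bilinear space $(U^\perp/U, B_U)$ with these parameters. For types (P), (H), (E) this yields
\[
   \sigma(U^\perp/U, B_U, 1) = \frac{(q^{w-k+1}-1)(q^{(n-2k+2)-(w-k+1)-1}+1)}{q-1} = \frac{(q^{w-k+1}-1)(q^{n-k-w}+1)}{q-1};
\]
for type (\NA) it gives $(q^{n-2k+2}-1)/(q-1)$; and for type (\NOne) it gives $(q^{(n-2k+2)-1}-1)/(q-1) = (q^{n-2k+1}-1)/(q-1)$. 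Substituting each of these into the displayed identity and cancelling the common factor $(q-1)$ produces exactly the three claimed expressions for $Q = \sigma(V,B,k)/\sigma(V,B,k-1)$.

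The only genuinely delicate point is the bookkeeping of parameters when feeding the quotient space into \cref{thm:SOVectors} — in particular making sure the exponent in the factor $q^{n-w-1}+1$ of that theorem is re-evaluated with $n$ replaced by $n-2k+2$ and $w$ replaced by $w-k+1$, which collapses neatly to $q^{n-k-w}+1$. I would also double-check that the hypotheses are met: $U^\perp/U$ is a genuine bilinear space (non-degenerate by \cref{prop:quotient}), it has dimension $\ge 1$ so that $\sigma(\cdot,\cdot,1)$ makes sense, and its type is inherited from $(V,B)$ so the correct branch of \cref{thm:SOVectors} applies. Everything else is a routine substitution, so I do not expect any real obstacle beyond careful indexing; the proof is essentially a two-way count glued to the quotient bijection of \cref{prop:quotient}.
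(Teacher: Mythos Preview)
Your proposal is correct and follows essentially the same approach as the paper: a double count of pairs $(U,C)$ with $U$ a $(k-1)$-dimensional hyperplane of a $k$-dimensional self-orthogonal $C$, combined with the type-preservation of the quotient from \cref{prop:quotient} and the count of one-dimensional self-orthogonal subspaces from \cref{thm:SOVectors}. The only cosmetic difference is that the paper makes explicit, before writing the recursion, that $\sigma(U^\perp/U,B_U,1)$ is independent of the choice of $U$ (via type preservation), whereas you write the recursion first and justify this independence afterward.
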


\begin{proof}
Consider the set
\begin{equation}\label{e-setS}
    S=\big\{(W,W') \,\big|\, W \le W' \le V, \,  \mbox{$W'$ is self-orthogonal}, \, \dim(W)=k-1, \,
    \dim(W') =k\big\}.
\end{equation}
Note that every subspace~$W$ of a self-orthogonal space~$W'$ is also self-orthogonal.
We count the elements of~$S$ in two ways.
On the one hand we have
\begin{equation}\label{e-setS2}
   |S|= \GaussianD{k}{k-1}_q \cdot \big|\{W' \le V\mid  W' \mbox{ is self-orthogonal}, \, \dim(W')=k\}\big|
        =\GaussianD{k}{1}_q\sigma(V,B,k).
\end{equation}
On the other hand, thanks to \cref{prop:quotient}, the type of $\smash{(W^{\perp}/W, B_W)}$ agrees with the type of $(V,B)$ and thus
is the same for all subspaces~$W$ of dimension $k-1$.
Hence we obtain
\begin{align*}
    |S|&= \sum_{\substack{W \le V \\ W \textnormal{ self-orthogonal} \\ \dim(W)=k-1}}\hspace*{-.5em}
                \big|\{L \le W^{\perp}/W \mid L \mbox{ is self-orthogonal w.r.t. } B_W, \, \dim(L)=1\}\big|\\
         &=\sigma(V,B,k-1)\, \sigma(W^{\perp}/W,B_W,1).
\end{align*}
Thus we have
\[
  \sigma(V,B,k)=\frac{q-1}{q^k-1}\, \sigma(V,B,k-1)\, \sigma(W^{\perp}/W,B_W,1).
\]
By \cref{prop:quotient},  the Witt index of $(W^{\perp}/W, B_W)$ is $w-k+1$ and $\dim(W^{\perp}/W, B_W)=n-2(k-1)$.
Now \cref{thm:SOVectors} leads to the stated results.
\end{proof}

Now we are ready to present $\sigma(V,B,k)$ explicitly for all types except (\NNA).
For the remaining type we cite the formula given in \cite{Pless65}.
The second expressions for $\sigma(V,B,k)$ in (a) can also be found in \cite{Pless65}.

\begin{theorem}[\mbox{\cite[p.~421]{Pless65} for (c)}]\label{thm:Sigma}
Let $\dim V=n\ge2$ and $w=w(V,B)$. 
Let $1 \le k \le w$. Set $a_k=\prod_{i=1}^k(q^i-1)$.
\begin{alphalist}
\item Let the type of $(V,B)$ be (P), (H), (E), or (\NOne). Then
        \begin{align*}
                \sigma(V,B,k) &= \frac{\prod_{i=1}^k (q^{n-i-w}+1)(q^{w-i+1}-1)}{a_k}\\[2ex]
                &=\left\{\begin{array}{cl}
     \frac{\prod_{i=1}^{k}(q^{n+1-2i}-1)}{a_k}
     &\text{for type (P) and (\NOne)},\\[2ex]
     \frac{(q^{n-k}+q^{n/2}-q^{n/2-k}-1)\prod_{i=1}^{k-1}(q^{n-2i}-1)}{a_k}
     &\text{for type (H)}, \\[2ex]
     \frac{(q^{n-k}-q^{n/2}+q^{n/2-k}-1)\prod_{i=1}^{k-1}(q^{n-2i}-1)}{a_k}
     &\text{for type (E)}.
     \end{array} \right.
        \end{align*}
\item For type~(\NA) we have
         \[
                \sigma(V,B,k) 
                = \frac{\prod_{i=1}^{k}(q^{n-2i+2}-1)}{a_k}.
        \]
\item Let the type be (\NNA). Then
         \[
   \sigma(V,B,k)=\frac{(q^{n-k}-1) \prod_{i=1}^{k-1}(q^{n-2i}-1)}{a_k}.
        \]
\end{alphalist}
\end{theorem}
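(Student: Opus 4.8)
The plan is to treat parts (a) and (b) by unwinding the recursion in \cref{thm:witt2}, and to handle part (c) by reduction to the already-settled types via \cref{prop:SOContainx} (which is cited as forthcoming in the paper), or else simply by citing \cite{Pless65} as the statement already announces. So the real work is in parts (a) and (b).

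For the first displayed formula in (a), covering types (P), (H), (E), I would argue by induction on $k$, with base case $k=1$ given directly by \cref{thm:SOVectors}: indeed $\sigma(V,B,1)=(q^w-1)(q^{n-w-1}+1)/(q-1)$, which is exactly the $k=1$ instance of $\prod_{i=1}^k(q^{n-i-w}+1)(q^{w-i+1}-1)/a_k$ since $a_1=q-1$. For the inductive step, \cref{thm:witt2} gives
\[
   \sigma(V,B,k)=\sigma(V,B,k-1)\cdot\frac{(q^{w-k+1}-1)(q^{n-k-w}+1)}{q^k-1},
\]
and multiplying the inductive hypothesis by this factor telescopes: the numerator picks up exactly the $i=k$ term $(q^{n-k-w}+1)(q^{w-k+1}-1)$, and the denominator $a_{k-1}(q^k-1)=a_k$. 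For type (\NOne) one does the same with the factor $(q^{n-2k+1}-1)/(q^k-1)$, and one checks this equals $(q^{n-k-w}+1)(q^{w-k+1}-1)/(q^k-1)$ after substituting $w=(n-1)/2$ (so $n-k-w=(n-1)/2-k+1$ and $w-k+1=(n-1)/2-k+1$, giving numerator $(q^{(n-1)/2-k+1})^2-\ldots$ — more simply, $(q^a+1)(q^a-1)=q^{2a}-1$ with $a=w-k+1$, and $2a=n-2k+1$). Part (b), type (\NA), is the identical induction driven by the factor $(q^{n-2k+2}-1)/(q^k-1)$, whose product telescopes to $\prod_{i=1}^k(q^{n-2i+2}-1)/a_k$.

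It remains to derive the three closed forms in the second display of (a) from the unified product $P:=\prod_{i=1}^k(q^{n-i-w}+1)(q^{w-i+1}-1)/a_k$. For (P) and (\NOne), $w=(n-1)/2$, so for each $i$ the two factors are $q^{(n-1)/2-i+1}+1$ and $q^{(n-1)/2-i+1}-1$, whose product is $q^{n-2i+1}-1$; this gives $\prod_{i=1}^k(q^{n+1-2i}-1)/a_k$. For (H), $w=n/2$, so the $i$-th pair is $(q^{n/2-i}+1)(q^{n/2-i+1}-1)$; I would split off the $i=1$ factor, which is $(q^{n/2-1}+1)(q^{n/2}-1)=q^{n-1}+q^{n/2}-q^{n/2-1}-1$ — wait, that is not quite the claimed leading bracket, so instead I would split off the $i=k$ factor $(q^{n/2-k}+1)(q^{n/2-k+1}-1)$, leaving $\prod_{i=1}^{k-1}$ of pairs $(q^{n/2-i}+1)(q^{n/2-i+1}-1)$; but the claimed product is $\prod_{i=1}^{k-1}(q^{n-2i}-1)$, so I would instead pair the $(q^{n/2-i+1}-1)$ from index $i$ with the $(q^{n/2-(i-1)}+1)=(q^{n/2-i+1}+1)$ from index $i-1$, i.e. reindex the product as a ``staggered'' product. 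Concretely, write $P\cdot a_k=\big(\prod_{i=1}^k(q^{w-i+1}-1)\big)\big(\prod_{i=1}^k(q^{n-i-w}+1)\big)$; with $w=n/2$ the first bracket is $\prod_{j=n/2-k+1}^{n/2}(q^j-1)$ and the second is $\prod_{j=n/2-k}^{n/2-1}(q^j+1)$; matching $q^j-1$ against $q^j+1$ for $j=n/2-k+1,\dots,n/2-1$ gives $\prod_{j}(q^{2j}-1)=\prod_{i=1}^{k-1}(q^{n-2i}-1)$, and the leftover factors are $(q^{n/2}-1)$ from the first and $(q^{n/2-k}+1)$ from the second, whose product is $q^{n-k}+q^{n/2}-q^{n/2-k}-1$. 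Type (E) is identical with $w=n/2-1$: the first bracket becomes $\prod_{j=n/2-k}^{n/2-1}(q^j-1)$ and the second $\prod_{j=n/2-k+1}^{n/2}(q^j+1)$; matching over $j=n/2-k+1,\dots,n/2-1$ again yields $\prod_{i=1}^{k-1}(q^{n-2i}-1)$, with leftovers $(q^{n/2-k}-1)$ and $(q^{n/2}+1)$, whose product is $q^{n-k}-q^{n/2}+q^{n/2-k}-1$.

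The main obstacle is purely bookkeeping: getting the staggered reindexing in the (H) and (E) cases exactly right so that the ``edge'' factors land as the stated degree-$(n-k)$ brackets, and double-checking the off-by-one in the exponents of $q$ coming from the two slightly different Witt indices $n/2$ and $n/2-1$. There is no conceptual difficulty — once \cref{thm:witt2} and \cref{thm:SOVectors} are in hand, everything is a telescoping product. For part (c) I would either cite \cite[p.~421]{Pless65} directly (as the theorem header already does) or, for self-containedness, note that it will follow from \cref{prop:SOContainx}, deferring that derivation to the relevant later section.
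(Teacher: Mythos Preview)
Your proposal is correct and takes essentially the same approach as the paper: induction on $k$ via \cref{thm:witt2} with base case \cref{thm:SOVectors} for parts (a) and (b), and citation of \cite{Pless65} for part (c). In fact you spell out more than the paper does, since the paper's proof is a two-line remark that leaves the algebraic identification of the second display in (a) with the unified product entirely to the reader; your staggered pairing argument for types (H) and (E) is exactly the right computation and checks out.

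One caveat: your alternative suggestion of deriving part (c) from \cref{prop:SOContainx} would be circular as the paper is organized, since that proposition (and the surrounding material in Section~\ref{sec:3}) already relies on \cref{thm:Sigma}(c) through \cref{P-sigmaqnk}. Stick with the direct citation of \cite{Pless65}, which is what the paper does.
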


\begin{proof}
Parts (a) and (b): For $k=1$ the formulas agree with \cref{thm:SOVectors}. The rest follows from induction using \cref{thm:witt2} and the fact that $\sigma(V,B,0)=1$ for any type.
For part (c) we refer to~\cite{Pless65}.
\end{proof}

The only case we cannot derive with our recursion in \cref{thm:witt2} is the type (\NNA), because if $B$ is non-alternating then it depends on
the subspace~$W$ whether $B_W$ in \cref{prop:quotient} is alternating or not; see also Proposition~\ref{ex:InducedAlternating} in the next section.
However, we can use the above expression for $\sigma(V,B,k)$ in type (\NNA) to determine the number of $k$-dimensional
subspaces~$W$ such that $B_W$ is alternating.

\begin{proposition}\label{prop:NonAltInduced}
Let $(V,B)$ be of type (\NNA) and let $1\leq k\leq w(V,B)=n/2$.
Then
\[
       \big|\{W\leq V\mid \dim W=k,\,W\text{ self-orthogonal and $B_W$ is alternating}\}\big|=\prod_{i=1}^{k-1}\frac{q^{n-2i}-1}{q^i-1}.
\]
\end{proposition}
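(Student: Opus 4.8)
The plan is to count self-orthogonal $k$-dimensional subspaces $W$ with $B_W$ \emph{non}-alternating, and then subtract from the total $\sigma(V,B,k)$ given in \cref{thm:Sigma}(c). A $k$-dimensional self-orthogonal subspace $W$ has $W^\perp/W$ of dimension $n-2k$, carrying the non-degenerate form $B_W$. If $B_W$ is non-alternating, then $(W^\perp/W,B_W)$ is again of type (\NNA), and \cref{prop:quotient} applies to it: the Witt index of $(W^\perp/W,B_W)$ equals $w(V,B)-k=n/2-k$. So the key observation is that the chains $0\le W\le W'$ with $W$ self-orthogonal of dimension $k$, $B_W$ non-alternating, and $W'$ a \emph{maximal} self-orthogonal subspace extending $W$, can be related — via \cref{cor:SOCodesContaining} applied to $(W^\perp/W,B_W)$, which is of type (\NNA) — to the counting of maximal self-orthogonal subspaces in a type-(\NNA) space of smaller dimension.

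Concretely, I would set up a two-way count of the set
\[
   T=\big\{(W,M)\mid W\le M\le V,\ \dim W=k,\ B_W\text{ non-alternating},\ M\text{ a maximal self-orthogonal subspace}\big\}.
\]
Counting by $M$ first: every maximal self-orthogonal $M$ has dimension $w=n/2$, and the $k$-dimensional subspaces $W$ of $M$ with $B_W$ non-alternating are exactly those not contained in the hyperplane $\mI\cap M$ of $M$ coming from \cref{thm:SOVectors} (since $B_W$ alternating $\iff$ every vector of $W$ is isotropic relative to $B$, i.e.\ $W\subseteq\mI$, and $\mI\cap M$ is a hyperplane of $M$ because $B$ is non-alternating so $\mI$ is a hyperplane of $V$ not containing $M^\perp\supseteq$ ... — one must check $M\not\subseteq\mI$, which holds precisely because $M$ is maximal and $B$ is non-alternating). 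This gives the number of such $W$ per $M$ as $\GaussianD{w}{k}_q-\GaussianD{w-1}{k}_q$ times... wait — more carefully, it is the number of $k$-subspaces of an $w$-space not inside a fixed hyperplane, namely $\GaussianD{w}{k}_q-\GaussianD{w-1}{k}_q=q^{w-k}\GaussianD{w-1}{k-1}_q$. Multiplying by $\sigma(V,B,w)$ (the number of maximal self-orthogonal subspaces, from \cref{thm:Sigma}(c) with $k=w=n/2$) yields $|T|$.

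Counting by $W$ first: for each self-orthogonal $W$ of dimension $k$ with $B_W$ non-alternating, the maximal self-orthogonal $M\supseteq W$ correspond under $\pi$ (\cref{cor:SOCodesContaining}) to maximal self-orthogonal subspaces of $(W^\perp/W,B_W)$, a type-(\NNA) space of dimension $n-2k$ and Witt index $n/2-k$; so the count is $\sigma(W^\perp/W,B_W,n/2-k)$, which by \cref{thm:Sigma}(c) is a fixed number $N$ independent of $W$. Hence $|T|=N\cdot\big(\sigma(V,B,k)-A_k\big)$, where $A_k$ is the quantity we want. Solving for $A_k$ gives a closed form, and the remaining work is to verify — by a direct manipulation of the products $\prod(q^i-1)$ and the two instances of the formula in \cref{thm:Sigma}(c) — that $A_k=\prod_{i=1}^{k-1}\frac{q^{n-2i}-1}{q^i-1}$.

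The main obstacle I expect is twofold: first, pinning down \emph{exactly} which $k$-subspaces $W$ of a given maximal $M$ are $B_W$-alternating (the claim is $W\subseteq\mI$, and one must confirm $M\cap\mI$ is a genuine hyperplane of $M$, equivalently $M\not\subseteq\mI$ — this uses that $M$ is maximal self-orthogonal in a \emph{non}-alternating space, so $M$ cannot be totally isotropic for the "extra" structure); and second, the bookkeeping of the product identities to match the two computations. Both are elementary but require care; the second is the kind of routine-but-finicky Gaussian-binomial algebra that I would not spell out in full in a plan. An alternative, cleaner route avoiding the hyperplane subtlety: induct on $k$ directly. For $k=1$, $B_W$ is automatically non-alternating unless ... actually $B_{\langle v\rangle}$ lives on a $1$-dimensional-quotient... no, $\dim(W^\perp/W)=n-2$, and $B_W$ alternating for $\dim W=1$ would force $n-2$ even (fine) — here one checks directly that for a single isotropic vector $v\notin$ (something), $B_v$ is non-alternating, giving the base case $A_1=1$ (empty product), matching $\prod_{i=1}^{0}=1$. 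Then the inductive step counts pairs $W'\subset W$ with $\dim W'=k-1$, $\dim W=k$, both self-orthogonal, $B_W$ alternating: if $B_W$ is alternating then so is $B_{W'}$ (restriction of alternating is alternating, via $W'\subseteq W\subseteq\mI$), so such $W$ sit inside the count for $k-1$; conversely for $B_{W'}$ alternating, $(W'^\perp/W',B_{W'})$ is type (\NA), and the number of $1$-dimensional self-orthogonal $L\le W'^\perp/W'$ extending to our $W$ with $B_W$ still alternating is $\sigma(W'^\perp/W',B_{W'},1)=\frac{q^{n-2k+2}-1}{q-1}$ by \cref{thm:SOVectors} (type (\NA) case — every vector is isotropic, and $B_W$ is automatically alternating since $W\subseteq W'^\perp$ and ... $W$ maps into the totally isotropic locus). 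A double count of $\{(W',W)\}$ then gives the recursion $A_k\cdot\GaussianD{k}{1}_q = A_{k-1}\cdot\frac{q^{n-2k+2}-1}{q-1}$, i.e.\ $A_k=A_{k-1}\cdot\frac{q^{n-2k+2}-1}{q^k-1}$, which telescopes to $\prod_{i=1}^{k-1}\frac{q^{n-2i}-1}{q^i-1}$ after reindexing $i\mapsto k-i$. This is the approach I would actually write up, as it sidesteps \cref{thm:Sigma}(c) entirely and only uses \cref{thm:SOVectors} and \cref{prop:quotient}; the one point requiring a clean argument is that $B_{W'}$ alternating forces every one-step self-orthogonal extension $W\supset W'$ to have $B_W$ alternating too, which follows because such $W$ corresponds to an isotropic line in the type-(\NA) space $W'^\perp/W'$ where the whole space is isotropic, hence $W\subseteq\mI$.
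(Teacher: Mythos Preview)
Your proposal contains a genuine gap: the characterization ``$B_W$ alternating $\iff W\subseteq\mI$'' is wrong. Since $W$ is self-orthogonal, every $v\in W$ satisfies $B(v,v)=0$, so $W\subseteq\mI$ holds \emph{automatically} for every self-orthogonal $W$; your condition is vacuous. The correct criterion is
\[
   B_W\text{ alternating}\ \Longleftrightarrow\ W^\perp\subseteq\mI\ \Longleftrightarrow\ \mI^\perp\subseteq W,
\]
because $B_W(\bar v,\bar v)=B(v,v)$ for $v\in W^\perp$. (Here $\mI^\perp$ is a single isotropic line, the analogue of $\langle\mathbf{1}\rangle$ in \cref{ex:InducedAlternating}.) This breaks both of your approaches. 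In the first, $\mI\cap M=M$ is not a hyperplane of $M$. In the second, the implication goes the wrong way: for $W'\subset W$ one has $W^\perp\subset W'^\perp$, hence $B_{W'}$ alternating $\Rightarrow B_W$ alternating, not conversely. Your recursion $A_k\Gaussian{k}{1}_q=A_{k-1}\frac{q^{n-2k+2}-1}{q-1}$ therefore omits the contribution from pairs with $B_{W'}$ non-alternating but $B_W$ alternating, and indeed it does not telescope to the claimed product: already for $k=2$ it yields $\frac{q^{n-2}-1}{q^2-1}$ rather than $\frac{q^{n-2}-1}{q-1}$.

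The paper's proof uses the same double count of pairs $(W,W')$ with $\dim W=k-1$, $\dim W'=k$, but splits the count over $W$ according to whether $B_W$ is alternating or not, and then solves the resulting linear equation for the unknown using the already-known closed form for $\sigma(V,B,k)$ from \cref{thm:Sigma}(c). Your inductive idea can be repaired once you use the correct criterion: double-count pairs $(W',W)$ with $\mI^\perp\subseteq W'\subset W$, giving $A_k\Gaussian{k-1}{1}_q=A_{k-1}\frac{q^{n-2k+2}-1}{q-1}$ (the factor $\Gaussian{k-1}{1}_q$ because $W'$ must contain the fixed line $\mI^\perp$), which now telescopes correctly. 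Equivalently, $A_k$ is simply the number of $k$-dimensional self-orthogonal subspaces containing the line $\mI^\perp$, and since $B_{\mI^\perp}$ is of type~(\NA) on a space of dimension $n-2$, \cref{cor:SOCodesContaining} and \cref{thm:Sigma}(b) give the formula directly.
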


\begin{proof}
Consider again the set~$S$ from \eqref{e-setS}. Then as in \eqref{e-setS2} we have $\smash{|S|=\Gaussian{k}{1}_q \,\sigma(V,B,k)}$.
On the other hand, let $\smash{\alpha:= \big|\{W\leq V\mid \mbox{$\dim W=k-1$},\,W \text{ self-orthogonal and $B_W$ is alternating}\}\big|}$. Then  \cref{thm:SOVectors} implies
\begin{align*}
     |S|&=\sum_{\substack{W \le V \\ W \textnormal{ self-orthogonal} \\ \dim(W)=k-1}}\hspace*{-.5em}
                \big|\{L \le W^{\perp}/W \mid L \mbox{ is self-orthogonal w.r.t. } B_W, \, \dim(L)=1\}\big|\\[1ex]
     &=\alpha\,\frac{q^{n-2(k-1)}-1}{q-1}+\big(\sigma(V,B,k-1)-\alpha\big)\frac{q^{n-2(k-1)-1}-1}{q-1}\\
     &=\alpha q^{n-2k+1}+\sigma(V,B,k-1)\frac{q^{n-2k+1}-1}{q-1}.
\end{align*}
Equating this with the above expression for $|S|$ and using \cref{thm:Sigma}(c) leads to
\begin{align*}
    \alpha&=\frac{(q^k-1)\sigma(V,B,k)-(q^{n-2k+1}-1)\sigma(V,B,k-1)}{q^{n-2k+1}(q-1)}\\[1ex]
        &=\frac{\prod_{i=1}^{k-2}(q^{n-2i}-1)}{q^{n-2k+1}(q-1)\prod_{i=1}^{k-1}(q^i-1)}
              \Big((q^{n-k}-1)(q^{n-2k+2}-1)-(q^{n-2k+1}-1)(q^{n-k+1}-1)\Big)\\[1ex]
        &=\frac{\prod_{i=1}^{k-2}(q^{n-2i}-1)}{q^{n-2k+1}(q-1)\prod_{i=1}^{k-1}(q^i-1)}q^{n-2k+1}(q^k+1-q^{k-1}-q)\\[1ex]
        &=\prod_{i=1}^{k-2}\frac{q^{n-2i}-1}{q^i-1}.
\end{align*}
The statement follows.
\end{proof}

We can finally return to the main goal of this section, which is
determining the number of $\ell$-complementary subspaces of $(V,B)$ for general~$\ell$.
Our approach uses M\"obius inversion in the lattice of subspaces~\cite{stanley2011enumerative}.
This allows us to reduce the computation to the case of self-orthogonal codes, which we discussed earlier in this section. Note that the following result generalizes~\cite[Theorem 4.5]{sendrier1997dimension} to an arbitrary bilinear form.
Recall the notation $\Sigma(V,B,k,\ell)$ from \cref{def:Notation}.

\begin{theorem}\label{thm:sigmakell}
Let $k\in\{0,\ldots,n\}$ and $\ell\in\{0,\ldots,k\}$. Then
\[
   \sigma(V,B,k,\ell)=\sum_{s=\ell}^{w(V,B)}\sigma(V,B,s)\GaussianD{s}{\ell}_q\GaussianD{n-2s}{k-s}_q
   (-1)^{s-\ell}\,q^{\Binom{s-\ell}{2}}.
\]
\end{theorem}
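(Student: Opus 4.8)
The plan is to count, in two different ways, pairs $(R, C)$ where $C$ is a $k$-dimensional subspace of $V$, $R$ is the radical $C \cap C^\perp$, and we then sum over all possible radicals. The key structural observation is that the radical $R$ of any $\ell$-complementary $k$-subspace is an $\ell$-dimensional self-orthogonal subspace, and that given a fixed self-orthogonal $R$ of dimension $s$, the $k$-subspaces $C$ with $R \leq C \leq R^\perp$ whose radical is \emph{exactly} $R$ correspond, via the quotient $R^\perp / R$ equipped with $B_R$ from \cref{prop:quotient}, to the LCD $(k-s)$-subspaces of $(R^\perp/R, B_R)$. So if I write $f(s)$ for the number of self-orthogonal $s$-subspaces $R$ together with a $k$-subspace $C$ satisfying $R \leq C \leq R^\perp$ (with no condition on the radical of $C$), and $g(s) = \sigma(V,B,s,s) \cdot \lambda_{k-s}(R^\perp/R,B_R)$ for the analogous count where the radical of $C$ is exactly $s$-dimensional (here $\lambda_t$ denotes the number of LCD $t$-subspaces), then the relation $f(s) = \sum_{t \geq s} (\text{number of $t$-subspaces containing a fixed $s$-subspace}) \cdot (\text{stuff})$ is precisely a M\"obius-inversion setup in the subspace lattice.

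More concretely, first I would establish the clean fact that for a $k$-subspace $C$, the radical $R := C \cap C^\perp$ satisfies: $R$ is self-orthogonal, and $C/R$ is an LCD subspace of $R^\perp/R$ with respect to $B_R$; conversely, every such pair $(R, C/R)$ arises this way. This reduces $\sigma(V,B,k,\ell)$ to $\sigma(V,B,k,\ell) = \sum_{R} \lambda_{k-\ell}(R^\perp/R, B_R)$, the sum over self-orthogonal $\ell$-subspaces $R$; and since $\dim(R^\perp/R) = n - 2\ell$, by homogeneity (Witt's theorem, \cref{thwitt}, and \cref{prop:quotient}) the number $\lambda_{k-\ell}(R^\perp/R,B_R)$ depends only on $n-2\ell$ and the type, not on $R$ — so $\sigma(V,B,k,\ell) = \sigma(V,B,\ell)\cdot\lambda_{k-\ell}(n-2\ell)$ where $\lambda_{k-\ell}(n-2\ell)$ is the number of LCD $(k-\ell)$-subspaces in a bilinear space of dimension $n-2\ell$. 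It then remains to find a closed formula for the number of LCD $j$-subspaces of an $m$-dimensional bilinear space. For that I would count the pairs $(R', C')$ with $R'$ self-orthogonal of dimension $s$ and $R' \leq C'$, $\dim C' = j$: on one hand this equals $\sum_{s} \sigma(V,B,s) \GaussianD{m-2s}{j-s}_q$ using \cref{cor:SOCodesContaining}; on the other hand, sorting $C'$ by its actual radical dimension $t \geq s$ and using that a $t$-dimensional radical contains $\GaussianD{t}{s}_q$ subspaces of dimension $s$, this equals $\sum_{t} \GaussianD{t}{s}_q \cdot (\text{number of $j$-subspaces with $t$-dim radical})$. M\"obius inversion over the divisor-like relation in the subspace lattice — where the relevant M\"obius function value is $(-1)^{t-s} q^{\binom{t-s}{2}}$ — then extracts the number of $j$-subspaces with radical of dimension $0$, i.e. the LCD count, yielding $\lambda_j(m) = \sum_{s} \sigma(V,B,s)\GaussianD{m-2s}{j-s}_q (-1)^s q^{\binom{s}{2}}$ (with the appropriate bookkeeping). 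Substituting $m = n-2\ell$, $j = k-\ell$, reindexing $s \mapsto s - \ell$, and using $\sigma(V,B,\ell)\sigma(R^\perp/R, B_R, s-\ell) = \sigma(V,B,s)\GaussianD{s}{\ell}_q$ (which follows from \cref{cor:SOCodesContaining} applied twice, counting flags $R \leq R''$ of self-orthogonal subspaces) should collapse everything into the claimed formula.

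The main obstacle I anticipate is not the M\"obius inversion itself — the subspace lattice $q$-M\"obius function $\mu = (-1)^{t-s} q^{\binom{t-s}{2}}$ is standard — but rather verifying carefully that the "radical-exactly-$R$" condition translates to an LCD condition in the quotient \emph{and that the type is preserved along the way}, so that the per-$R$ counts really are independent of $R$. The subtlety flagged in the paper is exactly that $B_R$ can change alternating-status in type (\NNA); I would need to check that this does not break the argument, presumably because the final formula is stated in terms of $\sigma(V,B,s)$ for the \emph{original} space rather than the quotient, so the type-invariance I actually need is just the dimension count $\dim(R^\perp/R) = n-2s$ plus the count of self-orthogonal subspaces of the quotient — and the latter might still depend only on $n-2s$ in a way compatible with the uniform formula, or the (\NNA) case might need to be checked separately against \cref{thm:Sigma}(c). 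A secondary bookkeeping obstacle is getting the powers of $q$ and the alternating signs to line up after the reindexing $s \mapsto s-\ell$; I would double-check the edge cases $\ell = 0$ (pure LCD count) and $\ell = k$ (should reduce to $\sigma(V,B,k)$, since only the $s = k$ term survives as $\GaussianD{0}{0}_q = 1$) to make sure the formula is normalized correctly.
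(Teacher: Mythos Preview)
Your approach can be made to work, but it is more circuitous than the paper's and the obstacle you flag is real. In type (\NNA) the factorization $\sigma(V,B,k,\ell)=\sigma(V,B,\ell)\cdot\lambda_{k-\ell}(n-2\ell)$ is genuinely false, because $\lambda_{k-\ell}(R^\perp/R,B_R)$ depends on whether $B_R$ is alternating and hence on the particular~$R$. Likewise your flag identity as written, $\sigma(V,B,\ell)\,\sigma(R^\perp/R,B_R,s-\ell)=\sigma(V,B,s)\Gaussian{s}{\ell}_q$, fails for the same reason; what the double count of flags $R\le W$ actually gives is only the summed version
\[
   \sum_{R}\sigma(R^\perp/R,B_R,s-\ell)=\sigma(V,B,s)\,\GaussianD{s}{\ell}_q,
\]
the sum over self-orthogonal $R$ of dimension~$\ell$. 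If you keep $\sum_R\lambda_{k-\ell}(R^\perp/R,B_R)$ unfactored, expand each $\lambda$ via the $\ell=0$ case of the theorem applied to the quotient, swap the sums, and then apply this summed flag identity, the argument does close --- but note that you are then proving the theorem twice (once for $\ell=0$ in arbitrary bilinear spaces, once for the reduction).

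The paper's proof is more direct and avoids quotients entirely. It works in the full subspace lattice $\mL$ of~$V$: set $f(U)=|\{C:\dim C=k,\ C\cap C^\perp=U\}|$ and $g(U)=\sum_{W\ge U}f(W)$. Since $U\le C\cap C^\perp$ iff $U\le C\le U^\perp$, one has $g(U)=\Gaussian{n-2s}{k-s}_q$ when $U$ is self-orthogonal of dimension~$s$ and $g(U)=0$ otherwise. A single M\"obius inversion gives $f(U)$; then $\sigma(V,B,k,\ell)=\sum_{\dim U=\ell}f(U)$, and swapping the order of summation (each self-orthogonal $W$ of dimension $s$ contains $\Gaussian{s}{\ell}_q$ subspaces of dimension~$\ell$) yields the formula in one stroke. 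No quotient spaces, no auxiliary LCD counts, and nothing type-dependent to track. Your route has the merit of explaining structurally why the radical governs everything via the quotient picture, but the paper's is shorter and uniform across all types.
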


\begin{proof}
Let $\mL$ denote the lattice of subspaces of $V$, ordered by inclusion. Define functions $f,g:\mL \to \Z$ by
\begin{align*}
    f(U) &=\big|\{C \le V \mid \dim(C)=k, \, C \cap C^\perp=U \}\big|, \\[.7ex]
    g(U) &= \sum_{\substack{W \in \mL \\ W \ge U}} f(W).
\end{align*}
By definition,
$g(U)$ counts the number of $k$-dimensional subspaces $C$ such that $U \le C \le U^\perp$. Therefore,
for all $U \in \mL$ of dimension $s$ we have
\[
g(U) = \left\{\begin{array}{cl}
 0 \ \  & \mbox{if $U$ is not self-orthogonal,} \\[.7ex]
\GaussianD{n-2s}{k-s}_q & \mbox{if $U$ is self-orthogonal.}
\end{array}\right.
\]
By applying M\"obius inversion in the lattice~$\mL$ we obtain, for all $U \in \mL$ of dimension $\ell$,
\[
f(U)=\sum_{s=0}^n\sum_{\substack{W\in\Sigma(V,B,s,s)\\ U\leq W}}
\GaussianD{n-2s}{k-s}_q \, (-1)^{s-\ell} \, q^{\binom{s-\ell}{2}}.
\]
Now $\sigma(V,B,k,\ell)$ is obtained as follows:
\[
  \sigma(V,B,k,\ell)=\sum_{\substack{U \in \mL \\ \dim(U)=\ell}} f(U)= \, \sum_{s=0}^n \; \sum_{W\in\Sigma(V,B,s,s)} \GaussianD{s}{\ell}_q \GaussianD{n-2s}{k-s}_q \, (-1)^{s-\ell} \, q^{\binom{s-\ell}{2}}.
\]
Since the cardinality of $\Sigma(V,B,s,s)$ is $\sigma(V,B,s)$, the desired result follows.
\end{proof}

We conclude this section by observing that Theorem~\ref{thm:sigmakell} allows us to compute the aggregate dimension of the radical of a $k$-dimensional subspace of $(V,B)$.
The result generalizes~\cite[Corollary~4.20]{sendrier1997dimension}, where the asymptotic version of the following is given for the special
bilinear space $(\F_q^n,\,\cdot\,)$.
Of course, the average dimension of the radical of a $k$-dimensional subspace of $(V,B)$ can be computed from Corollary~\ref{cor:cumulrad} by simply dividing by the $q$-ary coefficient $\smash{\Gaussian{n}{k}_q}$.

\begin{corollary} \label{cor:cumulrad}
Let $k \in \{0, \ldots, n\}$. We have
\[
\sum_{\substack{C \le V \\ \dim(C)=k}} \dim(C \cap C^\perp) = \sum_{\ell=1}^{k} \ell \sum_{s=\ell}^{w(V,B)}\sigma(V,B,s)\GaussianD{s}{\ell}_q\GaussianD{n-2s}{k-s}_q
   (-1)^{s-\ell}\,q^{\Binom{s-\ell}{2}}.
\]
\end{corollary}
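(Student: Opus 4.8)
The plan is to derive Corollary~\ref{cor:cumulrad} directly from Theorem~\ref{thm:sigmakell} by a simple double-counting argument over the pairs consisting of a $k$-dimensional subspace and the dimension of its radical. First I would observe that the left-hand side can be reorganized by grouping the $k$-dimensional subspaces according to the dimension $\ell$ of their radical: since $\dim(C\cap C^\perp)$ ranges over $\{0,1,\ldots,k\}$ (and the $\ell=0$ term contributes nothing), we get
\[
\sum_{\substack{C\le V\\ \dim(C)=k}}\dim(C\cap C^\perp)=\sum_{\ell=0}^{k}\ell\cdot\big|\{C\le V\mid \dim(C)=k,\ \dim(C\cap C^\perp)=\ell\}\big|=\sum_{\ell=1}^{k}\ell\,\sigma(V,B,k,\ell).
\]
Then I would simply substitute the closed formula for $\sigma(V,B,k,\ell)$ supplied by Theorem~\ref{thm:sigmakell} into this expression, which immediately yields the claimed identity.

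The only mild subtlety worth spelling out is that the outer sum on the left of the statement starts at $\ell=1$ rather than $\ell=0$; this is legitimate because the $\ell=0$ summand is multiplied by $\ell=0$, and it is also consistent with the fact that $\sigma(V,B,k,\ell)=0$ whenever $\ell>\min\{k,n-k\}$ or $\ell>w(V,B)$ (noted after the first Lemma and after Definition of the Witt index), so no nonzero terms are omitted. One should also recall that every subspace $C$ does possess a well-defined radical $C\cap C^\perp$ of some dimension in $\{0,\ldots,k\}$, so the sets $\Sigma(V,B,k,\ell)$ for $\ell=0,\ldots,k$ partition the set of all $k$-dimensional subspaces of $V$; this partition is exactly what makes the regrouping valid.

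I do not anticipate a genuine obstacle here: the corollary is a formal consequence of Theorem~\ref{thm:sigmakell} together with the partition of $k$-subspaces by radical dimension. If anything, the "hardest" part is purely bookkeeping — making sure the range of $s$ in the inner sum is correctly inherited from Theorem~\ref{thm:sigmakell} (namely $s$ from $\ell$ to $w(V,B)$, since $\sigma(V,B,s)=0$ for $s>w(V,B)$) and that the binomial and Gaussian coefficients are transcribed without index errors. I would present the proof in essentially two lines: state the regrouping identity $\sum_C \dim(C\cap C^\perp)=\sum_{\ell=1}^k \ell\,\sigma(V,B,k,\ell)$, then invoke Theorem~\ref{thm:sigmakell}. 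If desired, one could add a one-sentence remark that dividing through by $\smash{\Gaussian{n}{k}_q}$ gives the average radical dimension, matching the comment preceding the corollary.
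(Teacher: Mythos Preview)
Your proposal is correct and matches the paper's own proof essentially verbatim: the paper simply writes the left-hand side as $\sum_{\ell=0}^{k}\ell\,\sigma(V,B,k,\ell)$ and then invokes Theorem~\ref{thm:sigmakell}. Your additional remarks about why the $\ell=0$ term can be dropped and why the range of $s$ is correct are accurate but more detailed than what the paper spells out.
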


\begin{proof}
The sum on the LHS is $\smash{\sum_{\ell=0}^{k} \ell \, \sigma(V,B,k,\ell)}$.
We therefore conclude by Theorem~\ref{thm:sigmakell}.
\end{proof}

\section{\texorpdfstring{$\ell$}{}-Complementary Codes}
\label{sec:3}

In this section we focus on the case where $V=\F_q^n$ and $B$ is the standard inner product (also known as the \textit{dot product}).
The bilinear space $(\F_q^n, \cdot)$ is the ambient space of classical coding theory; see~\cite{macwilliams1977theory} for a general reference. In this context, the subspaces of $\F_q^n$ are called (\textbf{error-correcting}) \textbf{codes}.

We describe the type of the standard inner product and its Witt index in terms of the values of $q$ and $n$.
These
results are interesting in their own right but will also be needed later in the paper.
We then specialize Theorem~\ref{thm:Sigma}
to obtain the number of self-orthogonal error-correcting codes. The result can also be found in~\cite{sendrier1997dimension}.

\begin{notation}\label{nota:codes}
A $k$-dimensional code $C \le \F_q^n$ is called an \textbf{$[n,k]_q$-code}. For convenience, we set
$\Sigma_q(n,k,\ell)=\Sigma(\F_q^n,\cdot,k,\ell)$ and
$\sigma_q(n,k,\ell)=|\Sigma(\F_q^n,\cdot,k,\ell)|$.
Furthermore, we define
$\Sigma_q(n,k)=\Sigma_q(n,k,k)$ and $\sigma_q(n,k)=\sigma_q(n,k,k)$.
We denote by $\wt(v)$ the (\textbf{Hamming}) \textbf{weight} (i.e., the number of nonzero components)
of a vector $v \in \F_q^n$. The \textbf{minimum} (\textbf{Hamming}) \textbf{distance} of a nonzero code $C \le \F_q^n$ is
$d(C)=\min\{\wt(v) \mid v \in C, \, v \neq 0\}$.
\end{notation}

We start by listing the Witt index of the bilinear space $(\F_q^n,\cdot)$, i.e., the maximal dimensions of self-orthogonal codes in $\F_q^n$.
The result can also be found in, e.g., \cite{Pless65}.
For further reference we introduce the type of a pair $(q,n)$.
It simply translates \cref{def:types} and \cref{thm:witt} into conditions on~$q,n$.
Note that type (\NA) does not arise for $(\F_q^n,\,\cdot\,)$.

\begin{proposition}\label{prop:typeqn}
We define $\type(q,n)$ as the type of the space $(\F_q^n,\,\cdot\,)$, that is,
\[
   \type(q,n):=\left\{\begin{array}{cl}\text{(\NNA)} &\text{if $n,\,q$ even},\\
         \text{(\NOne)} &\text{if $n$ odd and $q$ even},\\
         \text{(P)} &\text{if $n,\,q$  odd},\\
        \text{(H)} &\text{if $q$ odd and $n$ even and }
        [n\equiv0 \mmod{4} \text{ or }q\equiv1\mmod{4}],\\
        \text{(E)} &\text{if }
         n\equiv2 \mmod{4} \text{ and }q\equiv3\mmod{4}.
         \end{array}\right.
\]
Now we have
\[
   w(\F_q^n,\cdot)=
        \left\{\begin{array}{cll}
        (n-1)/2 & \text{if } \type(q,n)\in\{\text{(\NOne), (P)}\}, \\[.5ex]
        n/2 & \text{if } \type(q,n)\in\{\text{(\NNA), (H)}\},\\[.5ex]
        n/2-1 & \text{if } \type(q,n)\in\{\text{(E)}\}.
        \end{array}\right.
\]
\end{proposition}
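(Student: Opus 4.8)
The plan is to determine $\type(q,n)$ and $w(\F_q^n,\cdot)$ by computing the discriminant of the standard inner product on $\F_q^n$ and then invoking \cref{thm:witt} and \cref{def:types}. The standard inner product has the identity matrix $I_n$ as a Gram matrix (with respect to the standard basis), so $\disc(\F_q^n,\cdot)=\det(I_n)Q_q=Q_q$, i.e., the discriminant is always a square. With this in hand, every case is a direct translation. For $q$ odd: if $n$ is odd the space is parabolic, so $w=(n-1)/2$. If $n$ is even, \cref{thm:witt}(b) says $w(V,B)=n/2$ precisely when $(-1)^{n/2}\disc(V,B)\in Q_q$, which, since the discriminant is a square, reduces to the condition $(-1)^{n/2}\in Q_q$.

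The heart of the argument is therefore the elementary number-theoretic fact that for odd $q$ one has $-1\in Q_q$ if and only if $q\equiv 1\pmod 4$. First I would recall this: $\F_q^*$ is cyclic of even order $q-1$, and $-1$ is its unique element of order $2$, so $-1$ is a square iff $(q-1)/2$ is even, i.e., iff $q\equiv 1\pmod 4$. From this I would deduce when $(-1)^{n/2}$ is a square: if $n\equiv 0\pmod 4$ then $(-1)^{n/2}=1\in Q_q$ always; if $n\equiv 2\pmod 4$ then $(-1)^{n/2}=-1$, which lies in $Q_q$ iff $q\equiv 1\pmod 4$. Combining, $(-1)^{n/2}\in Q_q$ exactly when $n\equiv 0\pmod 4$ or $q\equiv 1\pmod 4$ — which is precisely the stated condition separating type (H) from type (E), and the Witt index then reads off from \cref{thm:witt}(b) as $n/2$ in the former case and $n/2-1$ in the latter.

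The remaining cases, $q$ even, are immediate from \cref{thm:witt} and \cref{def:types} once one observes that the standard inner product on $\F_q^n$ is never alternating: indeed $B(e_1,e_1)=1\neq 0$. Hence for $q$ even we are in type (\NOne) when $n$ is odd and type (\NNA) (non-alternating, $n$ even) when $n$ is even, never type (\NA); this also justifies the remark preceding the proposition. The Witt index is then $(n-1)/2$ for odd $n$ (by \cref{thm:witt}(a)) and $n/2$ for even $n$ with $q$ even (by \cref{thm:witt}(b)). Assembling the five cases gives the displayed formulas for $\type(q,n)$ and $w(\F_q^n,\cdot)$.

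I do not anticipate a serious obstacle here; the only subtlety worth stating cleanly is the quadratic-residue characterization of $-1$ and its interaction with the parity of $n/2$, and the bookkeeping that merges ``$n\equiv 0\pmod 4$'' with ``$q\equiv 1\pmod 4$'' into a single disjunctive condition for type (H). Everything else is a matter of citing \cref{thm:witt}, \cref{def:types}, and the fact that $\det(I_n)=1$ is a square.
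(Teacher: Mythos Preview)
Your proposal is correct and follows essentially the same approach as the paper: identify the Gram matrix as $I_n$, note that the dot product is never alternating (via a standard basis vector), and then reduce everything to the condition $(-1)^{n/2}\in Q_q$ via \cref{thm:witt}. The paper's proof is terser (it simply states when $(-1)^r$ is a square rather than deriving it from the cyclicity of $\F_q^*$), but the logic is identical.
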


\begin{proof}
The standard basis of $\F_q^n$ gives the identity matrix as Gram matrix.
Clearly, the dot product is never alternating (choose a standard basis vector), and thus type (\NA) does not occur.
Since $(-1)^r$ is a square in $\F_q$ if and only if $q$ is even or $r$ is even or $(q-1)\in 4\Z$, the statements follow from Theorem~\ref{thm:witt}.
\end{proof}

For the standard inner product $B$, we can characterize the spaces $U \le \F_q^n$ for which the form~$B_U$ is alternating; see Notation~\ref{not:B_U}.

\begin{proposition}\label{ex:InducedAlternating}
Let~$q$ and~$n$ be even and let $B$ be the standard inner product on $\F_q^n$.
Let~$U$ be a self-orthogonal subspace of $\F_q^n$.
Then
\[
     B_U\text{ is alternating}\Longleftrightarrow
     \textbf{1}\in U,
\]
where $\textbf{1}$ is the all-one vector in $\F_q^n$.
\end{proposition}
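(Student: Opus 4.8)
The plan is to work directly with the definition of $B_U$ on the quotient $U^\perp/U$ and unwind what ``alternating'' means there. Recall that $B_U(v+U,v+U) = B(v,v) = \sum_i v_i^2$ for any $v \in U^\perp$; here, since $q$ is even, $B(v,v) = \left(\sum_i v_i\right)^2$, i.e. $B(v,v) = 0$ if and only if $\sum_i v_i = 0$, which is exactly saying $B(v,\mathbf{1}) = 0$, i.e. $v \perp \mathbf{1}$. So $B_U$ is alternating if and only if every $v \in U^\perp$ is orthogonal to $\mathbf{1}$, i.e. $U^\perp \subseteq \mathbf{1}^\perp = \langle\mathbf{1}\rangle^\perp$.

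Next I would dualize that containment. Taking orthogonals reverses inclusions and is an involution (by the non-degeneracy lemma), so $U^\perp \subseteq \langle\mathbf{1}\rangle^\perp$ is equivalent to $\langle\mathbf{1}\rangle = \langle\mathbf{1}\rangle^{\perp\perp} \subseteq U^{\perp\perp} = U$, i.e. $\mathbf{1} \in U$. That gives both directions at once. One small thing to double-check: this argument needs $U$ self-orthogonal only to make $B_U$ well-defined (so that $U \leq U^\perp$ and the quotient form makes sense) — the equivalence itself is then a clean statement about $U^\perp$ versus $\langle\mathbf{1}\rangle^\perp$. It is also worth noting in passing that when $q, n$ are even, $\mathbf{1}$ is itself self-orthogonal ($B(\mathbf{1},\mathbf{1}) = n \cdot 1 = 0$), so ``$\mathbf{1} \in U$'' is consistent with $U$ being self-orthogonal, but this observation is not strictly needed for the proof.

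Concretely, the steps in order: (1) compute $B_U(v+U,v+U) = B(v,v) = \left(\sum_i v_i\right)^2$ using $\mathrm{char}(\F_q) = 2$, and observe this vanishes iff $v \perp \mathbf{1}$; (2) conclude $B_U$ alternating $\iff U^\perp \subseteq \langle\mathbf{1}\rangle^\perp$; (3) apply $(\cdot)^\perp$ and the identities $W^{\perp\perp} = W$, ``$\perp$ reverses inclusion'' from the non-degeneracy lemma to rewrite this as $\langle\mathbf{1}\rangle \subseteq U$, i.e. $\mathbf{1} \in U$.

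There is no real obstacle here — this is a short computation plus a duality flip. The only place to be slightly careful is the first step: one must use that over a field of characteristic $2$ the Frobenius $x \mapsto x^2$ is additive, so $\sum_i v_i^2 = \left(\sum_i v_i\right)^2$; and then that $\left(\sum_i v_i\right)^2 = 0 \iff \sum_i v_i = 0$ since $\F_q$ has no nonzero nilpotents. Everything else is formal manipulation of orthogonals, all of which is licensed by the non-degeneracy lemma stated earlier.
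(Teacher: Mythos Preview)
Your proof is correct and follows essentially the same route as the paper: both compute $B_U(v+U,v+U)=v\cdot v=\sum_i v_i^2=(\sum_i v_i)^2$ via Frobenius in characteristic~$2$, observe this vanishes iff $v\cdot\mathbf{1}=0$, and then pass from ``$v\cdot\mathbf{1}=0$ for all $v\in U^\perp$'' to ``$\mathbf{1}\in U$'' by duality. The paper leaves that last duality step implicit (it is just $\mathbf{1}\in(U^\perp)^\perp=U$), while you spell it out via the inclusion-reversing property of $(\cdot)^\perp$; the content is the same.
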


\begin{proof}
Note first that $\textbf{1}$ is a self-orthogonal vector, since~$n$ is even.
Next, let $v=(v_1,\ldots,v_n)\in U^\perp$. Then
$v\cdot v=0\Longleftrightarrow\sum_{i=1}^n v_i^2=0\Longleftrightarrow (\sum_{i=1}^n v_i)^2=0\Longleftrightarrow \sum_{i=1}^n v_i=0\Longleftrightarrow v\cdot\textbf{1}=0$.
This establishes the stated equivalence.
\end{proof}

We conclude this section by specializing \cref{thm:Sigma} to the bilinear space~$(\F_q^n,\,\cdot\,)$.
This will be convenient later on.
The same formulas can be found in \cite[Prop.~3.2]{sendrier1997dimension}.

\begin{proposition}\label{P-sigmaqnk}
Let $n\geq2$ and $1\leq k\leq w(\F_q^n,\,\cdot\,)$ and set $a_{k}=\prod_{i=1}^k(q^i-1)$ and $b_{n,k}=\prod_{i=1}^{k-1}(q^{n-2i}-1)$.
Then
\[
   \sigma_q(n,k)=
   \left\{\begin{array}{cl}
   \displaystyle\frac{\prod_{i=1}^{k}(q^{n+1-2i}-1)}{a_{k}} &\text{if }  \type(q,n)\in\{\text{(\NOne), (P)}\},\\[3.5ex]
   \displaystyle\frac{(q^{n-k}-1)\,b_{n,k}}{a_{k}}&\text{if }\type(q,n)=\text{(\NNA)},\\[3.5ex]
   \displaystyle\frac{(q^{n-k}+\varepsilon q^{n/2}-\varepsilon q^{n/2-k}-1)\,b_{n,k}}{a_{k}}&\text{if }\type(q,n)\in\{\text{(H), (E)}\},
\end{array}\right.
\]
where $\varepsilon=1$ for type (H) and $\varepsilon=-1$ for type (E).
Moreover, $\sigma_{q}(n,0)=1$ for all $n\geq 1$.
\end{proposition}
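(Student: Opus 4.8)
The plan is to obtain Proposition~\ref{P-sigmaqnk} by directly specializing \cref{thm:Sigma} using the type classification of $(\F_q^n,\,\cdot\,)$ from \cref{prop:typeqn}. The key observation is that $\sigma_q(n,k) = \sigma(\F_q^n,\,\cdot\,,k)$ by \cref{nota:codes}, and since type (\NA) never occurs for the dot product (a standard basis vector is non-isotropic), each of the three cases in the proposition corresponds to one or two of the cases in \cref{thm:Sigma}. Concretely, types (\NOne) and (P) both fall under \cref{thm:Sigma}(a) with the formula $\prod_{i=1}^k(q^{n+1-2i}-1)/a_k$; type (\NNA) is exactly \cref{thm:Sigma}(c); and types (H) and (E) are the remaining two subcases of \cref{thm:Sigma}(a), which combine into a single formula with the sign parameter $\varepsilon$ (noting $q^{n-k}+\varepsilon q^{n/2}-\varepsilon q^{n/2-k}-1$ specializes to the (H) expression for $\varepsilon=1$ and the (E) expression for $\varepsilon=-1$).

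First I would recall that $b_{n,k}=\prod_{i=1}^{k-1}(q^{n-2i}-1)$ is precisely the product $\prod_{i=1}^{k-1}(q^{n-2i}-1)$ appearing in \cref{thm:Sigma}(a) for types (H), (E) and in \cref{thm:Sigma}(c), so the right-hand sides match verbatim after the substitution. Then I would verify that the parabolic/odd-characteristic formula in \cref{thm:Sigma}(a) reads $\prod_{i=1}^{k}(q^{n+1-2i}-1)/a_k$, which is exactly the stated expression for types (\NOne) and (P). For the (H)/(E) consolidation, the only thing to check is the purely formal identity that setting $\varepsilon=\pm1$ in $(q^{n-k}+\varepsilon q^{n/2}-\varepsilon q^{n/2-k}-1)$ recovers the two displayed numerators in \cref{thm:Sigma}(a); this is immediate. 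Finally, the boundary case $\sigma_q(n,0)=1$ follows since the only $0$-dimensional subspace $\{0\}$ is trivially self-orthogonal, or alternatively from the observation $\sigma(V,B,0)=1$ used in the proof of \cref{thm:Sigma}.

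There is essentially no obstacle here: the proposition is a transcription of earlier results under the dictionary provided by \cref{prop:typeqn}, and the only minor bookkeeping is checking that the hypothesis $1\le k\le w(\F_q^n,\,\cdot\,)$ matches the range $1\le k\le w$ in \cref{thm:Sigma} (it does, by definition of $w(\F_q^n,\,\cdot\,)$ in \cref{prop:typeqn}) and that type (\NA) is correctly excluded. The "hard part," to the extent there is one, is simply making sure no case is dropped: one must confirm that \cref{prop:typeqn} lists exactly the five types (\NNA), (\NOne), (P), (H), (E) for the dot product, so that the three-way split in the proposition is exhaustive. Once that is noted, the proof is a two-line citation.

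\begin{proof}
By \cref{nota:codes} we have $\sigma_q(n,k)=\sigma(\F_q^n,\,\cdot\,,k)$, so the claim follows by specializing \cref{thm:Sigma} according to the type of $(\F_q^n,\,\cdot\,)$ as determined in \cref{prop:typeqn}. By that proposition, type (\NA) does not occur for the dot product, and the five possible types are (\NNA), (\NOne), (P), (H), (E). For types (\NOne) and (P) the type-(P) case of \cref{thm:Sigma}(a) gives $\sigma_q(n,k)=\prod_{i=1}^k(q^{n+1-2i}-1)/a_k$. For type (\NNA), \cref{thm:Sigma}(c) gives $\sigma_q(n,k)=(q^{n-k}-1)\prod_{i=1}^{k-1}(q^{n-2i}-1)/a_k=(q^{n-k}-1)\,b_{n,k}/a_k$. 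For types (H) and (E), the corresponding cases of \cref{thm:Sigma}(a) give $\sigma_q(n,k)=(q^{n-k}+\varepsilon q^{n/2}-\varepsilon q^{n/2-k}-1)\prod_{i=1}^{k-1}(q^{n-2i}-1)/a_k$ with $\varepsilon=1$ for (H) and $\varepsilon=-1$ for (E), which is the stated formula since $b_{n,k}=\prod_{i=1}^{k-1}(q^{n-2i}-1)$. Finally, $\sigma_q(n,0)=1$ since the zero subspace is self-orthogonal, as already used in the proof of \cref{thm:Sigma}.
\end{proof}
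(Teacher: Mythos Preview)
Your proposal is correct and matches the paper's approach exactly: the paper introduces Proposition~\ref{P-sigmaqnk} by saying ``We conclude this section by specializing \cref{thm:Sigma} to the bilinear space~$(\F_q^n,\,\cdot\,)$'' and gives no further proof, treating it as an immediate transcription via \cref{prop:typeqn}. Your verification of the case-matching and the handling of $\sigma_q(n,0)=1$ are both sound.
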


\begin{remark} \label{rmk:gives_f}
As in \cref{thm:sigmakell}, Proposition~\ref{P-sigmaqnk} gives closed formulas for the number of $\ell$-complementary codes in $(\F_q^n,\,\cdot\,)$ for all
$\ell$.
They can also be found in \cite[Thm.~4.5]{sendrier1997dimension}.
\end{remark}

\begin{remark}
In \cite{carlet2018new},
the number of LCD codes in~$(\F_q^n,\,\cdot\,)$ (that is, $\ell=0$) was determined for the case where $q=2$ or $q$ is an odd prime power.
These results have been obtained by completely different methods, specifically by the action of the orthogonal group on the set of LCD codes, which leads to rather different expressions.
\end{remark}


\section{Weight Distributions}
\label{sec:4}

In this section we focus on the (Hamming) weight distribution of $\ell$-complementary codes. In contrast with most of the existing literature, which focuses on individual codes,
we look at the average behavior of
the weight distribution within the class of codes that share a common dimension
and are $\ell$-complementary.
Our main result is Theorem~\ref{T-AvgWeight}, which gives a closed formula for the mentioned
average weight distribution.
The proof relies on intermediate results that show how algebraic and combinatorial aspects of $\ell$-complementary codes interact with each other.
We follow the notation of Section~\ref{sec:3}.

\begin{definition}\label{def:Hamming}
The (\textbf{Hamming}) \textbf{support} of a vector $v \in \F_q^n$ is
defined as $\sigma(v)=\{i \mid v_i \neq 0\}$.
For a code $C \le \F_q^n$ and an integer $0 \le i \le n$,
we let~$A^{(i)}(C)$ denote the number of vectors $v \in C$ with $\wt(v)=i$.
We call $(A^{(i)}(C) \mid 0 \le i \le n)$ the \textbf{weight distribution} of $C$.
\end{definition}

In this section, we wish to compute the following quantities.

\begin{definition}\label{def:AvgDistr}
For $i \in \{0, \ldots, n\}$ define
\[
   A^{(i)}_q(n,k,\ell)=\displaystyle\sum_{C\in\Sigma_q(n,k,\ell)}\!\!\! A^{(i)}(C)
   \quad \text{ and }\
   \overline{A}_q^{(i)}(n,k,\ell) :=
   \frac{A^{(i)}_q(n,k,\ell)}{\sigma_q(n,k,\ell)}.
\]
We call $\smash{\big(A_q^{(i)}(n,k,\ell) \mid 0 \le i \le n\big)}$ the \textbf{aggregate weight distribution} of the set of all $\ell$-complementary $[n,k]_q$-codes and
$\smash{\big(\overline{A}_q^{(i)}(n,k,\ell) \mid 0 \le i \le n\big)}$ the \textbf{average weight distribution} of an $\ell$-complementary $[n,k]_q$-code.
\end{definition}

The aggregate weight distribution was computed by Pless and Sloane for the very special case of binary self-dual codes; see~\cite[Section~4]{pless1975classification}.

Note that $\smash{A_q^{(0)}(n,k,\ell)=\sigma_q(n,k,\ell)}$, and that $\smash{A_q^{(1)}(n,k,k)=0}$.
Furthermore, we clearly have $\smash{\sum_{i=0}^n\overline{A}_q^{(i)}(n,k,\ell)=q^k}$.
Thanks to the previous section we can compute the numbers $\sigma_q(n,k,\ell)$, and thus it suffices to compute
$\smash{A_q^{(i)}(n,k,\ell)}$.
We start by observing that, for all $i$,
\begin{equation} \label{note}
     A_q^{(i)}(n,k,\ell) =
     \sum_{\substack{v \in \F_q^n \\ \wt(v)=i}} |\{C \in\Sigma_q(n,k,\ell)\mid v \in C\}|.
\end{equation}
Again, we will begin with the self-orthogonal case, thus $\ell=k$.
Since any vector~$v\in\F_q^n$ contained in a self-orthogonal code is self-orthogonal, we will first count these vectors for any given weight.

\begin{notation}\label{nota:zeta}
For $i \in \{1, \ldots, n\}$ we let $\zeta_q(n,i) = |\{v \in \F_q^n \mid \wt(v)=i, \, v\cdot v =0\}|$.
\end{notation}

Using M\"obius inversion we can derive an explicit formula for $\zeta_q(n,i)$. Note that the following result gives $\zeta_q(n,1)=0$, as expected.

\begin{proposition}\label{prop:zeta}
We have $\zeta_q(n,0)=1$ and
\[
 \zeta_q(n,i)=\binom{n}{i}\bigg( (-1)^i+
    \sum_{j=1}^i\binom{i}{j}(-1)^{i-j}\Big((q-1)\sigma_q(j,1)+1\Big)\bigg)
  \ \text{ for }i\in \{1, \ldots, n\}.
\]
\end{proposition}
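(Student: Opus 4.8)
The plan is to count the self-orthogonal vectors of weight $i$ by first restricting attention to their support. A vector $v$ with $\wt(v) = i$ and support $\supp(v) = I$ (with $|I| = i$) is, after forgetting the zero coordinates, a full-support vector in $\F_q^I \cong \F_q^i$ satisfying $v \cdot v = 0$; moreover the dot product in $\F_q^n$ restricted to such $v$ agrees with the dot product in $\F_q^i$. So if I let $\eta_q(i)$ denote the number of full-support vectors $w \in \F_q^i$ with $w \cdot w = 0$, then $\zeta_q(n,i) = \binom{n}{i}\eta_q(i)$, and it suffices to compute $\eta_q(i)$.

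To get $\eta_q(i)$ I would use M\"obius inversion over the Boolean lattice of subsets of $[i]$, exactly as in the proof of \cref{prop:zeta}'s spirit. For $J \subseteq [i]$, let $g(J)$ be the number of vectors $v \in \F_q^i$ with $\supp(v) \subseteq J$ and $v \cdot v = 0$. Such $v$ is supported on $J$, and the number of vectors in $\F_q^J$ that are self-orthogonal (with no support restriction beyond lying in $\F_q^J$) is: the zero vector, plus the vectors of positive weight that are self-orthogonal, which is $1 + (q-1)\sigma_q(|J|,1)$ when $|J| \ge 1$ — here I use that $\sigma_q(|J|,1)$ counts the $1$-dimensional self-orthogonal subspaces of $\F_q^{|J|}$, each contributing $q-1$ nonzero self-orthogonal vectors, and distinct such subspaces give disjoint sets of nonzero vectors. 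For $J = \emptyset$ we have $g(\emptyset) = 1$. Then $\eta_q(i) = \sum_{J \subseteq [i]} (-1)^{i - |J|} g(J) = \sum_{j=0}^{i} \binom{i}{j} (-1)^{i-j} g_j$ where $g_0 = 1$ and $g_j = 1 + (q-1)\sigma_q(j,1)$ for $j \ge 1$. Splitting off the $j = 0$ term gives $\eta_q(i) = (-1)^i + \sum_{j=1}^{i} \binom{i}{j}(-1)^{i-j}\big((q-1)\sigma_q(j,1) + 1\big)$, and multiplying by $\binom{n}{i}$ yields the claimed formula. The case $\zeta_q(n,0) = 1$ is immediate since the only weight-$0$ vector is $0$, which is self-orthogonal.

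The only genuine subtlety — and the step I would be most careful about — is the identity $g_j = 1 + (q-1)\sigma_q(j,1)$ for the number of self-orthogonal vectors in $\F_q^j$ (equivalently, verifying that $\F_q^j$ with the dot product has the right type so that $\sigma_q(j,1)$ from \cref{P-sigmaqnk} and \cref{prop:typeqn} is the correct count, and that no self-orthogonal vector gets counted twice or missed). Once $v \mapsto \langle v \rangle$ is seen to be a $(q-1)$-to-$1$ map from nonzero self-orthogonal vectors onto $1$-dimensional self-orthogonal subspaces, this is routine. A minor point worth a sentence is that the restriction of the dot product from $\F_q^n$ to the coordinate subspace indexed by a set $I$ is again the standard dot product on $\F_q^{|I|}$, so counting self-orthogonal vectors with a prescribed support set reduces cleanly to the lower-dimensional problem. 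Everything else is the standard Boolean-lattice M\"obius computation, and I would not belabor it.
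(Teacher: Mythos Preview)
Your proposal is correct and follows essentially the same approach as the paper: M\"obius inversion over the Boolean lattice of support sets, using that the number of self-orthogonal vectors supported in a $j$-set is $1+(q-1)\sigma_q(j,1)$. The only cosmetic difference is that you first factor out the $\binom{n}{i}$ choices of support and work inside $\F_q^i$, whereas the paper carries out the inversion directly over subsets of $\{1,\ldots,n\}$; the computations are identical.
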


\begin{proof}
For a subset $S \subseteq \{1 \ldots n\}$ define
$f(S)$ as the number of self-orthogonal vectors $v \in \F_q^n$ whose support is equal to $S$. Then
$\smash{g(S):= \sum_{T\subseteq S} f(T)}$
is the number of self-orthogonal vectors $v \in \F_q^n$ whose support is contained in $S$.
Using that $\sigma_q(j,1)$ is the number of $1$-dimensional self-orthogonal subspaces in $\F_q^j$, we conclude that
\[
 g(S)=(q-1)\sigma_q(j,1) +1\text{ if }|S|=j>0\ \text{ and }\ g(\emptyset)=1.
\]
Now we use M\"obius inversion in the Boolean algebra over the set $\{1,\ldots,n\}$ and obtain for all sets $S$ of size $i$
\begin{align*}
  f(S)&= \sum_{T\subseteq S}(-1)^{i-|T|}g(T)
  =(-1)^ig(\emptyset)+\sum_{j=1}^i\sum_{\substack{T\subseteq S\\ |T|=j}}
    (-1)^{i-j}\Big((q-1)\sigma_q(j,1)+1\Big)\\
    &=(-1)^i+\sum_{j=1}^i\binom{i}{j}(-1)^{i-j}\Big((q-1)\sigma_q(j,1)+1\Big).
\end{align*}
Now the stated formula for $\zeta_q(n,i)$ follows.
\end{proof}

We next count the number of self-orthogonal codes containing a given self-orthogonal subspace. This quantity depends on $\type(q,n)$ and only in the non-alternating case also on the subspace.
It will be convenient to define the following quantity.
It should be noted that~$w$ is a separate parameter that is not necessarily equal to the Witt index of $(\F_q^n,\,\cdot\,)$. This separation will be necessary because of the change of type for certain induced spaces.

\begin{notation}\label{nota:tau}
Let $w\in\{1,\ldots,n/2\}$ and $k\in\{1,\ldots,w\}$. Define
\begin{equation}\label{e-tau}
   \tau_q(n,k,w)=\prod_{i=1}^{k}\frac{(q^{n-w-i}+1)(q^{w-i+1}-1)}{q^i-1}.
\end{equation}
Moreover, we set $\tau_q(n,0,w)=1$ for all $w\in\{0,\ldots,n/2\}$. Note that if $w=w(\F_q^n,\,\cdot\,)$ and $\type(q,n)\in\{\text{(P), (H), (E), (\NOne)}\}$ then $\tau_q(n,k,w)=\sigma_q(n,k)$; see \cref{thm:Sigma}(a).
\end{notation}

\begin{proposition}\label{prop:SOContainx}
Let $w=w(\F_q^n,\,\cdot\,)$ and $0\leq t\leq k\leq w$. Let $U\leq\F_q^n$ be a $t$-dimensional self-orthogonal subspace and set $\tilde{\sigma}_{k,U}=|\{C\in\Sigma_q(n,k)\mid U\leq C\}|$.
Then
\[
   \tilde{\sigma}_{k,U}=\left\{\begin{array}{ll}
       \tau_q(n-2t,k-t,w-t)&\text{if }\type(q,n)\in\{\text{(P), (H), (E), (\NOne)}\},\\[.5ex]
       \sigma_q(n-2t,k-t) &\text{if }\type(q,n)=\text{(\NNA)
             and } \textbf{1}\not\in U,\\[.5ex]
       \sigma_q(n-2t+1,k-t) &\text{if }\type(q,n)= \text{(\NNA) and } \textbf{1}\in U.\\
       \end{array}\right.
\]
\end{proposition}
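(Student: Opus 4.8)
The plan is to reduce everything to \cref{cor:SOCodesContaining}, which already tells us that $\tilde{\sigma}_{k,U}=\sigma(U^\perp/U,B_U,k-t)$. So the entire task is to identify the type and Witt index of the induced bilinear space $(U^\perp/U,B_U)$, and then to read off $\sigma(U^\perp/U,B_U,k-t)$ from \cref{thm:Sigma} (or, equivalently, from \cref{P-sigmaqnk} once we know the type in terms of the new ambient dimension). By \cref{prop:quotient}, $\dim(U^\perp/U)=n-2t$ and, when $\type(q,n)\neq$ (\NNA), the induced space has the same type as $(\F_q^n,\cdot)$ and Witt index $w-t$; since $\type(q,n)$ is then one of (P), (H), (E), (\NOne), \cref{thm:Sigma}(a) gives $\sigma(U^\perp/U,B_U,k-t)=\tau_q(n-2t,k-t,w-t)$ directly from the definition of $\tau_q$ in \eqref{e-tau} (noting the index shift: the ambient dimension is $n-2t$, the Witt index is $w-t$, so the factors are $(q^{(n-2t)-(w-t)-i}+1)(q^{(w-t)-i+1}-1)/(q^i-1)$, which matches). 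This handles the first case.

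The remaining work is the (\NNA) case, where the type of $(U^\perp/U,B_U)$ is \emph{not} necessarily (\NNA): by \cref{ex:InducedAlternating}, $B_U$ is alternating precisely when $\textbf{1}\in U$. So I split into two sub-cases. If $\textbf{1}\notin U$, then $B_U$ is non-alternating, hence $(U^\perp/U,B_U)$ has type (\NNA) as well, and $\sigma(U^\perp/U,B_U,k-t)=\sigma_q(n-2t,k-t)$ by \cref{P-sigmaqnk} (type (\NNA), with ambient dimension $n-2t$, which is still even, and Witt index $(n-2t)/2$). If $\textbf{1}\in U$, then $B_U$ is alternating, so $(U^\perp/U,B_U)$ has type (\NA). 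Here the subtlety is that \cref{P-sigmaqnk} only covers types arising from the dot product, so type (\NA) is absent there; instead I invoke \cref{thm:Sigma}(b), which gives $\sigma(U^\perp/U,B_U,k-t)=\prod_{i=1}^{k-t}(q^{(n-2t)-2i+2}-1)/a_{k-t}$. Comparing with \cref{thm:Sigma}(b) applied with ambient dimension $n-2t+1$ (which is odd, so of type (\NOne)) — wait, that would be the wrong type. The cleaner identification is: by \cref{thm:Sigma}(a) for type (\NOne) with ambient dimension $n-2t+1$, one gets $\sigma_q(n-2t+1,k-t)=\prod_{i=1}^{k-t}(q^{(n-2t+1)+1-2i}-1)/a_{k-t}=\prod_{i=1}^{k-t}(q^{n-2t-2i+2}-1)/a_{k-t}$, which is exactly the type-(\NA) count from \cref{thm:Sigma}(b) with ambient dimension $n-2t$. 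So the numerical value of $\sigma$ for a $(\NA)$-space of even dimension $n-2t$ equals $\sigma_q(n-2t+1,k-t)$, i.e. the self-orthogonal count in the (odd-dimensional) dot-product space $\F_q^{n-2t+1}$; this is the source of the "$n-2t+1$" shift in the statement.

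I expect the only real obstacle to be this last bookkeeping identity — verifying that the type-(\NA) formula of \cref{thm:Sigma}(b) in dimension $n-2t$ coincides with $\sigma_q(n-2t+1,k-t)$, i.e. with the type-(\NOne)/(P) formula of \cref{P-sigmaqnk} in dimension $n-2t+1$. This is a one-line manipulation of the product $\prod_{i=1}^{k-t}(q^{n-2t-2i+2}-1)$ once both formulas are written out, but it is the step where an off-by-one in the exponent would do real damage, so I would state it explicitly as a short computation. Everything else is just assembling \cref{cor:SOCodesContaining}, \cref{prop:quotient}, and \cref{ex:InducedAlternating}, together with the observation that the hypotheses $0\le t\le k\le w$ guarantee $0\le k-t\le w-t=w(U^\perp/U,B_U)$, so all the cited counting formulas are in range.
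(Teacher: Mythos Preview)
Your proposal is correct and follows essentially the same approach as the paper's own proof: reduce to $\sigma(U^\perp/U,B_U,k-t)$ via \cref{cor:SOCodesContaining}, identify the type of the quotient using \cref{prop:quotient} and \cref{ex:InducedAlternating}, and then read off the count from \cref{thm:Sigma}, with the final bookkeeping identity matching the type-(\NA) formula in dimension $n-2t$ to $\sigma_q(n-2t+1,k-t)$ via the (\NOne) case of \cref{P-sigmaqnk}. The paper does exactly this, including the same product comparison $\prod_{i=1}^{k-t}(q^{n-2t-2i+2}-1)/a_{k-t}$ for the alternating sub-case.
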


One may note that if $\type(q,n)\in\{\text{(P), (\NOne)}\}$ then, by definition, $\type(q,n-2t)=\type(q,n)$.
The second part of Proposition~\ref{prop:typeqn}
implies $\smash{w(\F_q^{n-2t}, \cdot) = w(\F_q^n, \cdot)-t}$, which together with Theorem~\ref{thm:Sigma}(a)
gives $\tau_q(n-2t,k-t,w-t)=\sigma_q(n-2t,k-t)$.
This is not necessarily true if $\type(q,n)\in\{\text{(H), (E)}\}$, which is why we introduced the function $\tau_q$.

\begin{proof}
Throughout the proof we use $B$ for the dot product.
By \cref{cor:SOCodesContaining}, the desired cardinality is given by
$\tilde{\sigma}_{k,U}=\sigma(U^\perp/U,B_U,k-t)$, and, thanks to \cref{thm:Sigma}, the value of $\sigma(U^\perp/U,B_U,k-t)$ only depends on the type of the bilinear space $(U^\perp/U,B_U)$.
Note that $\dim U^\perp/U=n-2t$. Moreover, the Witt index of $\smash{(U^\perp/U,B_U)}$ is given by $w-t$; see \cref{prop:quotient}.
\\
(a) Let $\type(q,n)\in\{\text{(P), (H), (E), (\NOne)}\}$. Then \cref{prop:quotient} tells us that $(U^\perp/U,B_U)$ is of the same type as $(\F_q^n,B)$.
Thanks to \cref{thm:Sigma}(a) we thus have
\[
 \tilde{\sigma}_{k,U}
 =\prod_{i=1}^{k-t}\frac{(q^{n-t-w-i}+1)(q^{w-t-i+1}-1)}{q^i-1}
 =\tau_q(n-2t,k-t,w-t).
\]
(b) Let $\type(q,n)=$ (\NNA) and  $\textbf{1}\not\in U$.
Then \cref{ex:InducedAlternating} tells us that the induced bilinear space
$(U^\perp/U,B_U)$ is also of type (\NNA).
Hence $(U^\perp/U,B_U)$ has the same dimension and type as
$(\F_q^{n-2t},B)$, and therefore
$\smash{\tilde{\sigma}_{k,U}=\sigma_q(n-2t,k-t)}$.
\\
(c) Let $\type(q,n)=$ (\NNA) and $\textbf{1}\in U$.
By \cref{ex:InducedAlternating} the induced form $B_U$ is alternating, and \cref{thm:Sigma}(b) leads to
$\smash{\tilde{\sigma}_{k,U}=\prod_{i=1}^{k-t}(q^{n-2t-2i+2}-1)/(q^i-1)}$.
But the latter equals $\sigma_q(n-2t+1,k-t)$, as can be seen from \cref{P-sigmaqnk} since $\type(q,n-2t+1)=$ (\NOne).
\end{proof}

If $t=1$, the last case of the previous proposition amounts to $U=\subspace{\textbf{1}}$, and thus $\tilde{\sigma}_{k,U}$ is the number of $k$-dimensional self-orthogonal spaces~$U$ such that $B_U$ is alternating. The result agrees with  \cref{prop:NonAltInduced}.

Proposition~\ref{prop:SOContainx} allows us to explicitly compute the aggregate weight distribution of the set of self-orthogonal
$[n,k]_q$-codes.
In the following we set $\tau_q(n-2,\ell,w)=1=\sigma_q(n-2,\ell)$ if $n=2$.

\begin{corollary}\label{cor:sumweight}
Let $j>0$ and $1\leq k\leq w:=w(\F_q^n,\,\cdot\,)$.
\begin{alphalist}
\item Let $\type(q,n)\in\{\text{(P),\,(\NOne),\,(H),\,(E)}\}$.
      Then
         \[
              A_q^{(j)}(n,k,k)=\zeta_q(n,j)\tau_q(n-2,k-1,w-1).
         \]
\item Let $\type(q,n)=$ (\NNA) and $j<n$.
        Then
         \[
              A_q^{(j)}(n,k,k)=\zeta_q(n,j)\sigma_q(n-2,k-1).
         \]
\item Let $\type(q,n)=$ (\NNA) and $j=n$.
      Then
        \[
              A_q^{(j)}(n,k,k)
              =\zeta_q(n,n)\sigma_q(n-2,k-1)+(q-1)\big(\sigma_q(n-1,k-1)-\sigma_q(n-2,k-1)\big).
         \]
\end{alphalist}
Note that in all cases we have $\smash{A_q^{(j)}(n,1,1)=\zeta_q(n,j)}$, as it should be.
\end{corollary}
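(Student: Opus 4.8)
The plan is to start from the identity~\eqref{note}, which writes $A_q^{(j)}(n,k,k)$ as a sum over weight-$j$ vectors $v\in\F_q^n$ of the number of self-orthogonal $[n,k]_q$-codes containing $v$. The first observation is that only self-orthogonal vectors $v$ (those with $v\cdot v=0$) can lie in a self-orthogonal code, so the sum effectively runs over the $\zeta_q(n,j)$ vectors counted in \cref{nota:zeta}. For each such $v$ the line $\subspace{v}$ is a $1$-dimensional self-orthogonal subspace, and the number of $k$-dimensional self-orthogonal codes containing $v$ is exactly $\tilde{\sigma}_{k,\subspace{v}}$ in the notation of \cref{prop:SOContainx}, taken with $t=1$ and $U=\subspace{v}$.

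For part (a), \cref{prop:SOContainx} gives $\tilde{\sigma}_{k,\subspace{v}}=\tau_q(n-2,k-1,w-1)$ for every nonzero self-orthogonal $v$, independently of $v$; hence the sum in~\eqref{note} is $\zeta_q(n,j)\,\tau_q(n-2,k-1,w-1)$. For type (\NNA) the value of $\tilde{\sigma}_{k,\subspace{v}}$ bifurcates according to whether $\textbf{1}\in\subspace{v}$, i.e.\ whether $v$ is a scalar multiple of the all-one vector. If $j<n$, no weight-$j$ vector is a nonzero multiple of $\textbf{1}$ (which has weight $n$), so $\textbf{1}\notin\subspace{v}$ for all of them, $\tilde{\sigma}_{k,\subspace{v}}=\sigma_q(n-2,k-1)$, and part (b) follows just as in (a).

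For part (c), with $j=n$, I would split the $\zeta_q(n,n)$ self-orthogonal weight-$n$ vectors into the $q-1$ nonzero multiples of $\textbf{1}$ (all self-orthogonal, since $n$ is even, as recorded before \cref{ex:InducedAlternating}) and the remaining $\zeta_q(n,n)-(q-1)$ vectors. By the two (\NNA) cases of \cref{prop:SOContainx}, the former contribute $\sigma_q(n-1,k-1)$ each and the latter $\sigma_q(n-2,k-1)$ each; summing and rearranging yields the stated formula. Finally, setting $k=1$ makes $\tau_q(n-2,0,w-1)=\sigma_q(n-2,0)=\sigma_q(n-1,0)=1$, so all three formulas collapse to $A_q^{(j)}(n,1,1)=\zeta_q(n,j)$, as noted.

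The argument is essentially bookkeeping once \cref{prop:SOContainx} is available; the only delicate point is the (\NNA), $j=n$ case, where one must remember that the $q-1$ distinct vectors $\lambda\textbf{1}$ all span the single line $\subspace{\textbf{1}}$ yet are counted separately in the vector-sum~\eqref{note}, and that these are precisely the weight-$n$ self-orthogonal vectors for which the induced form $B_{\subspace{v}}$ becomes alternating, forcing the shift $n-2\mapsto n-1$ in the exponent.
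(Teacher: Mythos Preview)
Your argument is correct and follows essentially the same route as the paper: apply~\eqref{note}, restrict to self-orthogonal vectors, and invoke \cref{prop:SOContainx} with $t=1$, splitting off the $q-1$ scalar multiples of~$\textbf{1}$ in the (\NNA), $j=n$ case. The paper's proof is just a terser version of what you wrote.
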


\begin{proof}
(a) and (b) follow immediately from \eqref{note} and \cref{prop:SOContainx} for $t=1$.
For (c) note that there are exactly $q-1$ nonzero vectors~$v$ in $\subspace{\textbf{1}}$ and $\zeta_q(n,n)-q+1$ self-orthogonal vectors~$v$ of weight~$n$ not in this subspace.
Again, the result follows from \cref{prop:SOContainx}.
\end{proof}

We can finally give a closed formula for the aggregate weight distribution of the set of $\ell$-complementary $[n,k]_q$-codes.
The proof again uses M\"obius inversion to reduce the problem to the self-orthogonal case, which we solved in Corollary~\ref{cor:sumweight}.
In the following we set $\smash{\Gaussian{a}{b}_q=0}$ if $a<0$ or $b<0$.

\begin{theorem}\label{T-AvgWeight}
Suppose $n \ge 3$ and $1\leq k\leq n$. Let $w=w(\F_q^n,\,\cdot\,)$. Denote the
\textbf{Krawtchouk coefficients} by
\[
    K_q(n,i,j):=\sum_{r=0}^i (-1)^r (q-1)^{i-r} \binom{j}{r} \binom{n-j}{i-r}
    \ \text{ for }0\leq i,j\leq n;
\]
see e.g.~\cite[p.~76]{huffman2010fundamentals}. Then for all $i\geq 0$
\[
   A_q^{(i)}(n,k,\ell) = \sum_{s=\ell}^w \GaussianD{s}{\ell}  (-1)^{s-\ell} \, q^{\binom{s-\ell}{2}} B(s),
\]
where, for $\ell \le s \le w$,
\[
   B(s)=
   \left(\GaussianD{n\!-\!2s}{k\!-\!s}_q-\GaussianD{n\!-\!2s\!-\!1}{k\!-\!s\!-\!1}_q\right) A_q^{(i)}(n,s,s)  + q^{-s} \GaussianD{n\!-\!2s\!-\!1}{k\!-\!s\!-\!1}_q  \, \sum_{j=0}^n A_q^{(j)}(n,s,s)  K_q(n,i,j).
\]
\end{theorem}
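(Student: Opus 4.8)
The plan is to follow the same M\"obius-inversion strategy used in the proof of \cref{thm:sigmakell}, but now keeping track of the Hamming weight of a fixed vector $v$. Recall from \eqref{note} that
\[
   A_q^{(i)}(n,k,\ell) = \sum_{\substack{v \in \F_q^n \\ \wt(v)=i}} \big|\{C \in \Sigma_q(n,k,\ell) \mid v \in C\}\big|,
\]
so it suffices to count, for a fixed $v$ of weight $i$, the number of $\ell$-complementary $[n,k]_q$-codes containing $v$, and then sum over all such $v$. For a fixed $v$, I would set up on the subspace lattice $\mL$ of $\F_q^n$ the pair of functions
\[
   f_v(U) = \big|\{C \le \F_q^n \mid \dim C = k,\ C \cap C^\perp = U,\ v \in C\}\big|, \qquad g_v(U) = \sum_{W \ge U} f_v(W),
\]
exactly as in \cref{thm:sigmakell}. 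Then $g_v(U)$ counts the $k$-dimensional codes $C$ with $U \le C \le U^\perp$ and $v \in C$; this is zero unless $U$ is self-orthogonal, and when $\dim U = s$ and $U$ is self-orthogonal the count is $\Gaussian{n-2s}{k-s}_q$ if $v \in U$ and is (the number of $k$-dimensional subspaces of $U^\perp$ containing $U + \langle v\rangle$, which is) $\Gaussian{n-2s-1}{k-s-1}_q$ if $v \in U^\perp \setminus U$, and zero if $v \notin U^\perp$. M\"obius inversion over $\mL$ (the M\"obius function of the subspace lattice being $\mu(U,W) = (-1)^{\dim W - \dim U} q^{\binom{\dim W - \dim U}{2}}$) then expresses $f_v(U)$ for $\dim U = \ell$ as an alternating sum over self-orthogonal overspaces $W$, and summing $f_v(U)$ over the $\Gaussian{s}{\ell}_q$-many $\ell$-dimensional $U \le W$ produces the outer sum $\sum_{s=\ell}^w \Gaussian{s}{\ell}_q (-1)^{s-\ell} q^{\binom{s-\ell}{2}} (\cdots)$.

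The remaining work is to identify the inner quantity, which after the dust settles is
\[
   \sum_{\substack{v \in \F_q^n \\ \wt(v)=i}} \ \sum_{\substack{W \in \Sigma_q(n,s,s) \\ v \in W^\perp}} \left( \big[v \in W\big]\, \Gaussian{n-2s}{k-s}_q + \big[v \notin W\big]\, \Gaussian{n-2s-1}{k-s-1}_q \right),
\]
and I would reorganize this by first summing over $W \in \Sigma_q(n,s,s)$ and then over $v$. The $v \in W$ term contributes $\Gaussian{n-2s}{k-s}_q \sum_{W} A_q^{(i)}(W)$, and since the number of weight-$i$ vectors in a self-orthogonal $[n,s]_q$-code $W$, summed over all such $W$, is by definition $A_q^{(i)}(n,s,s)$, this gives $\Gaussian{n-2s}{k-s}_q\, A_q^{(i)}(n,s,s)$. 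The $v \in W^\perp \setminus W$ term contributes $\Gaussian{n-2s-1}{k-s-1}_q \sum_W \big(A_q^{(i)}(W^\perp) - A_q^{(i)}(W)\big)$; here $A_q^{(i)}(W^\perp)$ denotes the number of weight-$i$ vectors in the dual code $W^\perp$, which has dimension $n-s$. To evaluate $\sum_W A_q^{(i)}(W^\perp)$ I would invoke the MacWilliams identity relating the weight enumerator of $W^\perp$ to that of $W$ via the Krawtchouk coefficients, namely $A_q^{(i)}(W^\perp) = q^{-s}\sum_{j=0}^n A_q^{(j)}(W) K_q(n,i,j)$ for an $[n,s]_q$-code $W$; summing over $W \in \Sigma_q(n,s,s)$ and exchanging the order of summation turns $\sum_W A_q^{(j)}(W)$ into $A_q^{(j)}(n,s,s)$, yielding the term $q^{-s}\Gaussian{n-2s-1}{k-s-1}_q \sum_{j=0}^n A_q^{(j)}(n,s,s) K_q(n,i,j)$. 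Combining the two pieces, with the $-A_q^{(i)}(W)$ correction folded into the first coefficient, gives precisely $B(s)$ as stated. Note that the self-orthogonal aggregate weight distributions $A_q^{(j)}(n,s,s)$ appearing here are the ones made explicit in \cref{cor:sumweight}, so the formula is genuinely closed.

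I expect the main obstacle to be purely bookkeeping rather than conceptual: correctly handling the case split $v \in W$ versus $v \in W^\perp \setminus W$ versus $v \notin W^\perp$ inside the M\"obius inversion (in particular making sure the $v \notin W^\perp$ case contributes nothing, so that the outer sum only ranges over self-orthogonal $W$ with $v \in W^\perp$, which is what lets the binomial-coefficient bookkeeping close), and verifying that the $-\Gaussian{n-2s-1}{k-s-1}_q A_q^{(i)}(n,s,s)$ term from subtracting $A_q^{(i)}(W)$ combines cleanly with the $\Gaussian{n-2s}{k-s}_q A_q^{(i)}(n,s,s)$ term into the single coefficient $\big(\Gaussian{n-2s}{k-s}_q - \Gaussian{n-2s-1}{k-s-1}_q\big)$. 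One should also double-check the convention $\Gaussian{a}{b}_q = 0$ for $a < 0$ or $b < 0$ handles the boundary cases $s = w$ (and $k = s$, $k = s+1$) correctly, and that the degenerate small-length cases excluded by the hypothesis $n \ge 3$ are exactly those where $\F_q^{n-2s-1}$ or the corresponding MacWilliams setup would misbehave. Modulo these routine checks, the argument is a direct adaptation of \cref{thm:sigmakell} with the MacWilliams identity supplying the one new ingredient.
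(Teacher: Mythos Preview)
Your proposal is correct and follows essentially the same route as the paper's proof: the same M\"obius inversion on the subspace lattice with the extra condition $v\in C$, the same case split $v\in W$ versus $v\in W^\perp\setminus W$ (which the paper encodes as $\Gaussian{n-s-\dim(W+\langle v\rangle)}{k-\dim(W+\langle v\rangle)}_q$), and the same application of the MacWilliams identities to convert $\sum_W A^{(i)}(W^\perp)$ into the Krawtchouk sum.
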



Note that $\smash{A_q^{(i)}(n,k,\ell)=0}$ for $\ell>w(\F_q^n,\,\cdot\,)$, as it has to be.
Furthermore, for $i=0$ the above leads to the special case
\[
   A_q^{(0)}(n,k,\ell)=\sum_{s=\ell}^w\GaussianD{s}{\ell}  (-1)^{s-\ell} \, q^{\binom{s-\ell}{2}}\GaussianD{n-2s}{k-s}\sigma_q(n,s),
\]
and by \cref{thm:sigmakell} the right hand side equals $\sigma_q(n,k,\ell)$, all of which confirms $A_q^{(0)}(n,k,\ell)=\sigma_q(n,k,\ell)$.
In this sense the above may be regarded as a generalization of \cref{thm:sigmakell} to the full aggregate weight distribution.

\begin{proof}
Fix $v \in \F_q^n$ of weight $i\geq 0$.
Let $\mL$ denote the lattice of subspaces of $\F_q^n$. Define functions $f,g:\mL \to \Z$ by
\begin{align*}
    f(V) &=\big|\{C \le \F_q^n \mid \dim(C)=k, \, C \cap C^\perp=V, \, v \in C \}\big|, \\[.7ex]
    g(V) &= \sum_{\substack{W \in \mL \\ W \ge V}} f(W).
\end{align*}
By definition,
$g(V)$ counts the number of $k$-dimensional codes $C \le \F_q^n$ with
$V \le C \le V^\perp$ and $v \in C$. Therefore,
for all $V \in \mL$ of dimension $s$ we have
\[
g(V) = \left\{\begin{array}{cl}
\hfil 0 & \mbox{if $V$ is not self-orthogonal or $v \notin V^\perp$,} \\[.7ex]
\GaussianD{n-s-\dim(V+\langle v\rangle)}{k-\dim(V+\langle v\rangle)}_q & \mbox{if $V$ is self-orthogonal and $v \in V^\perp$.}
\end{array}\right.
\]
By applying M\"obius inversion in the lattice $\mL$,
we obtain that the number of
$\ell$-complementary codes $C \le \F_q^n$ of dimension $k$ that contain $v$ is
\begin{equation} \label{interm}
 \sum_{s=0}^n \sum_{\substack{W\in\Sigma_q(n,s) \\ v \in W^\perp}} \GaussianD{s}{\ell} \GaussianD{n-s-\dim(W+ \langle v \rangle)}{k-\dim(W+ \langle v \rangle)} \, (-1)^{s-\ell} \, q^{\binom{s-\ell}{2}}.
\end{equation}
Therefore, by definition,
\begin{align}
    A_q^{(i)}(n,k,\ell)
&= \sum_{\substack{v \in \F_q^n \\ \wt(v)=i}}\sum_{s=0}^n \, \sum_{\substack{W\in\Sigma_q(n,s) \\ v \in W^\perp}} \GaussianD{s}{\ell} \GaussianD{n-s-\dim(W+ \langle v \rangle)}{k-\dim(W+ \langle v \rangle)} \, (-1)^{s-\ell} \, q^{\binom{s-\ell}{2}} \nonumber \\
&= \sum_{s=0}^n \GaussianD{s}{\ell}  (-1)^{s-\ell} \, q^{\binom{s-\ell}{2}}
   \sum_{W\in\Sigma_q(n,s)} \  \sum_{\substack{v \in \F_q^n \\ \wt(v)=i \\ v \in W^\perp}}  \GaussianD{n-s-\dim(W+ \langle v \rangle)}{k-\dim(W+ \langle v \rangle)}. \label{ff1}
\end{align}
Now observe that
for a given self-orthogonal $W \in \mL$ of dimension $s$ we have
\begin{align}
    \sum_{\substack{v \in \F_q^n \\ \wt(v)=i \\ v \in W^\perp}}  \GaussianD{n-s-\dim(W+ \langle v \rangle)}{k-\dim(W+ \langle v \rangle)} \nonumber
&=\sum_{\substack{v \in \F_q^n \\ \wt(v)=i \\ v \in W \cap W^\perp}}  \GaussianD{n-2s}{k-s} \; +
\sum_{\substack{v \in \F_q^n \\ \wt(v)=i \\ v \in W^\perp \setminus W}}  \GaussianD{n-2s-1}{k-s-1} \nonumber \\
&\hspace*{-5em}= \left(\GaussianD{n-2s}{k-s}_q - \GaussianD{n-2s-1}{k-s-1}_q \right)  A^{(i)}(W) +
\GaussianD{n-2s-1}{k-s-1}_q  A^{(i)}(W^\perp), \label{ff2}
\end{align}
where we use that $W\cap W^\perp=W$. Using the MacWilliams identities (see e.g.~\cite[p.~257]{huffman2010fundamentals}) we write
\begin{equation} \label{ff3}
A^{(i)}(W^\perp) = q^{-s} \sum_{j=0}^n A^{(j)}(W)K_q(n,i,j).
\end{equation}
Now combining \eqref{ff1},~\eqref{ff2} and \eqref{ff3} leads to the desired result.
\end{proof}

We conclude by observing that the expression for $B(s)$ in the statement of Theorem~\ref{T-AvgWeight} simplifies according to
\[
   \GaussianD{n-2s}{k-s}_q-\GaussianD{n-2s-1}{k-s-1}_q
   =\left\{\begin{array}{cl}1&\text{if }s=k=n/2,\\[1ex]
          q^{k-s}\GaussianD{n-2s-1}{k-s}_q& \text{otherwise.}
     \end{array}\right.
\]

\section{Asymptotic Behavior and Qualitative Comparisons}
\label{sec:5}

In this section we will investigate the asymptotic behavior of various invariants of self-orthogonal codes as $q\rightarrow\infty$.
We will then compare these invariants, in the asymptotic regime, to those of unrestricted (i.e., not necessarily self-orthogonal) codes. Our results show that self-orthogonal codes, despite being very sparse within the set of all $[n,k]_q$-codes, behave quite similarly to unrestricted codes when $q$ is large.
More evidence of this will be given in Section~\ref{sec:6}.
For any functions $f,g:\N\rightarrow\R$ we define
\[
    f\sim g:\Longleftrightarrow \lim_{q\rightarrow\infty}\frac{f(q)}{g(q)}=1.
\]
All asymptotic estimates and limits in this section are for $q \rightarrow \infty$, unless otherwise specified.
Occasionally we will obtain different asymptotic behavior depending on the type of the bilinear space.
In that case $\sim$ is understood for $q\rightarrow\infty$ subject to $q$ even, $q\equiv1\mmod{4}$, or $q\equiv3\mmod{4}$, depending on the type.
We will repeatedly use the well-known asymptotic estimate
\[
   \GaussianD{a}{b}_q \sim q^{b(a-b)}
\]
for integers $a \ge b \ge 0$.

We start with the following result about the asymptotic proportion of self-orthogonal subspaces in a bilinear space $(V,B)$.

\begin{theorem}\label{thm:SigmaAsymp}
Let $n\geq2$ and $(V,B)$ be a bilinear space as in \cref{N-VB} and let $1\leq k\leq w(V,B)$.
The asymptotic proportion of the $k$-dimensional self-orthogonal subspaces in~$V$ within the set of all $k$-dimensional subspaces
is given by
\[
\frac{\sigma(V,B,k)}{\Gaussian{n}{k}_q}\sim\left\{\begin{array}{cl}
       q^{-\binom{k+1}{2}}&\text{ for type (P), (\NOne), (\NNA), (E) and for (H) if $k<n/2$},\\
       2q^{-\binom{k+1}{2}}&\text{ for type (H) if }k=n/2,\\
       q^{k-\binom{k+1}{2}}&\text{ for type (\NA)}.
      \end{array}\right.
\]
\end{theorem}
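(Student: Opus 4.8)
The plan is to compute the asymptotic behavior of each closed-form expression for $\sigma(V,B,k)$ from \cref{thm:Sigma}, divided by the known Gaussian binomial coefficient $\smash{\Gaussian{n}{k}_q = \prod_{i=1}^k (q^{n-k+i}-1)/\prod_{i=1}^k(q^i-1)}$. Since by \cref{thm:Sigma} the value of $\sigma(V,B,k)$ depends only on the type, the proof naturally splits into the cases (P)/(\NOne), (H), (E), (\NA), and (\NNA). The core tool is the elementary fact that $q^a - 1 \sim q^a$ and $q^a + 1 \sim q^a$ as $q \to \infty$ (for fixed exponent $a$); more generally, any polynomial in $q$ is asymptotic to its leading term, and the limit of a ratio of such polynomials is the ratio of leading coefficients when the degrees match.

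First I would handle the generic types. For type (P) and (\NOne), \cref{thm:Sigma}(a) gives $\smash{\sigma(V,B,k) = \prod_{i=1}^k(q^{n+1-2i}-1)/a_k}$ with $a_k = \prod_{i=1}^k(q^i-1)$, so
\[
   \frac{\sigma(V,B,k)}{\Gaussian{n}{k}_q}
   = \frac{\prod_{i=1}^k(q^{n+1-2i}-1)}{\prod_{i=1}^k(q^{n-k+i}-1)}
   \sim \prod_{i=1}^k q^{(n+1-2i)-(n-k+i)}
   = \prod_{i=1}^k q^{k+1-3i}
   = q^{k(k+1) - 3\binom{k+1}{2}} = q^{-\binom{k+1}{2}},
\]
using $\sum_{i=1}^k i = \binom{k+1}{2}$ and $k(k+1) = 2\binom{k+1}{2}$. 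The same computation applies verbatim to type (\NNA) via \cref{thm:Sigma}(c), since $q^{n-k}-1 \sim q^{n-k}$ and $\prod_{i=1}^{k-1}(q^{n-2i}-1) \sim q^{\sum_{i=1}^{k-1}(n-2i)}$ match the leading behavior of $\prod_{i=1}^k(q^{n+1-2i}-1)$ up to a fixed power of $q$ that one checks equals zero; alternatively one simply reads off the leading term of the explicit product directly. For type (\NA), \cref{thm:Sigma}(b) gives an extra factor, namely each numerator factor is $q^{n-2i+2}-1$ instead of $q^{n+1-2i}-1$, contributing an extra $q^1$ per factor, hence an extra $q^k$ overall: $\smash{\sigma(V,B,k)/\Gaussian{n}{k}_q \sim q^{k-\binom{k+1}{2}}}$.

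The subtle cases are (H) and (E), where by \cref{thm:Sigma}(a) we have $\smash{\sigma(V,B,k) = (q^{n-k} \pm q^{n/2} \mp q^{n/2-k} - 1)\prod_{i=1}^{k-1}(q^{n-2i}-1)/a_k}$ with the upper sign for (H). For type (E) the leading term of $q^{n-k} - q^{n/2} + q^{n/2-k} - 1$ is $q^{n-k}$ (since $k \ge 1$ forces $n-k > n/2 - 1 \ge n/2 - k$ when $k \ge 1$... one must verify $n-k > n/2$, i.e. $k < n/2$, which holds for type (E) where $w = n/2 - 1$), so the bracket is $\sim q^{n-k}$ and the computation reduces to the generic one, giving $q^{-\binom{k+1}{2}}$. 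The genuine obstacle is type (H) at $k = n/2$: there the bracket becomes $q^{n/2} + q^{n/2} - q^0 - 1 = 2q^{n/2} - 2 \sim 2q^{n/2}$, so an extra factor of $2$ appears in the leading coefficient, yielding $\smash{2q^{-\binom{k+1}{2}}}$; whereas for (H) with $k < n/2$ one has $n - k > n/2$, so the $q^{n-k}$ term strictly dominates $q^{n/2}$ and the factor of $2$ does not appear, recovering $q^{-\binom{k+1}{2}}$. I expect this bookkeeping --- correctly identifying for which type and which $k$ the two competing monomials $q^{n-k}$ and $q^{n/2}$ have equal versus different degrees, and tracking the constant $2$ that emerges precisely at $k = n/2$ in type (H) --- to be the main point requiring care; everything else is the routine "leading term" asymptotic applied to products.
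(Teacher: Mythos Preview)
Your proposal is correct and follows essentially the same route as the paper: both arguments feed the explicit formulas of \cref{thm:Sigma} into the elementary asymptotic $q^a\pm1\sim q^a$, split by type, and isolate the type~(H), $k=n/2$ case where the bracket $q^{n-k}+q^{n/2}-q^{n/2-k}-1$ collapses to $2q^{n/2}-2$ and produces the extra factor~$2$. The only cosmetic difference is that the paper first estimates $\sigma(V,B,k)$ and $\smash{\Gaussian{n}{k}_q\sim q^{k(n-k)}}$ separately and then divides, whereas you cancel~$a_k$ against the denominator of the Gaussian coefficient and track the exponent sum $\sum_{i=1}^k(k+1-3i)=-\binom{k+1}{2}$ directly; your slightly hand-wavy remark for type~(\NNA) (``one checks equals zero'') is easily made precise by the leading-term computation you indicate.
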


\begin{proof}
We consider the expressions for $\sigma(V,B,k)$ given in \cref{thm:Sigma}.
The asymptotic estimate $\smash{a_k\sim q^{\binom{k+1}{2}}}$ is clear. We analyze the various cases separately.
\\
1) Type (P) and (\NOne):  Using the fact that  $\prod_{i=1}^{k}(q^{n-2i+1}-1)\sim q^{k(n+1)-k(k+1)}=q^{k(n-k)}$, we obtain
\[
     \frac{\sigma(V,B,k)}{\Gaussian{n}{k}_q} \sim q^{k(n-k)-\binom{k+1}{2}-k(n-k)}=q^{-\binom{k+1}{2}}.
\]
2) Type (H) and (E): First of all, $\prod_{i=1}^{k-1}(q^{n-2i}-1)\sim q^{(k-1)n-(k-1)k}=q^{(k-1)(n-k)}$.
Set $\varepsilon=1$ for type (H) and $\varepsilon=-1$ for type (E). Then
\[
    q^{n-k}+\varepsilon q^{n/2}-\varepsilon q^{n/2-k}-1
       \sim\left\{\begin{array}{cl}q^{n-k}&\text{if }k<n/2,\\  2q^k&\text{if $k=n/2$ and $\varepsilon=1$}.
       \end{array}\right.
\]
Recall that $\varepsilon=-1$ together with $k=n/2$ does not occur.
Now all statements for these two types follow from \cref{thm:Sigma}.
\\
3) Type (\NNA): In this case $\smash{\sigma(V,B,k)\sim q^{n-k+(k-1)(n-k)-\binom{k+1}{2}}=q^{k(n-k)-\binom{k+1}{2}}}$ and the desired result follows.
\\
4) Type (\NA): The asymptotic estimate $\smash{\sigma(V,B,k)\sim q^{k(n-k+1)-\binom{k+1}{2}}}$ leads to the desired result.
\end{proof}

From now on we focus solely on the bilinear space $(\F_q^n,\,\cdot\,)$.
For that space the above results read as follows.
The version below, providing the asymptotic behavior of the cardinality of $k$-dimensional self-orthogonal codes in $\F_q^n$, rather than their proportion,
will be useful later on.
Recall the terminology from \cref{prop:typeqn}.

\begin{corollary}\label{cor:SigmaDotAsymp}
For the bilinear space $(\F_q^n,\,\cdot\,)$ and $1\leq k\leq w(\F_q^n,\,\cdot\,)$ we have
\begin{align*}
   \sigma_q(n,k)\sim \left\{\begin{array}{cl}
       2q^{k(n-k)-\binom{k+1}{2}} &\text{if $\type(q,n)=$ (H) and $k=n/2$},\\[1ex]
       q^{k(n-k)-\binom{k+1}{2}} & \text{otherwise.}
    \end{array}\right.
\end{align*}
\end{corollary}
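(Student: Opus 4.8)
The corollary is a direct translation of \cref{thm:SigmaAsymp} into the language of cardinalities rather than proportions, for the specific bilinear space $(\F_q^n,\,\cdot\,)$. The plan is to combine \cref{thm:SigmaAsymp} with the standard asymptotic estimate $\Gaussian{n}{k}_q\sim q^{k(n-k)}$ and then read off the type of $(\F_q^n,\,\cdot\,)$ from \cref{prop:typeqn}. Since $\sigma_q(n,k)=\big(\sigma_q(n,k)/\Gaussian{n}{k}_q\big)\cdot\Gaussian{n}{k}_q$, multiplying the asymptotic proportion from \cref{thm:SigmaAsymp} by $q^{k(n-k)}$ immediately yields the exponent $k(n-k)-\binom{k+1}{2}$ in all cases, together with the factor $2$ precisely when the type is (H) and $k=n/2$.

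\textbf{Key steps.} First I would recall that, by \cref{prop:typeqn}, the type of $(\F_q^n,\,\cdot\,)$ is one of (\NOne), (\NNA), (P), (H), (E), and in particular type (\NA) never arises for the dot product; this is why the third case of \cref{thm:SigmaAsymp} is irrelevant here. Second, I would invoke the multiplicativity of $\sim$: if $f\sim f'$ and $g\sim g'$ with $g,g'$ eventually nonzero, then $fg\sim f'g'$. Applying this with $f=\sigma_q(n,k)/\Gaussian{n}{k}_q$ and $g=\Gaussian{n}{k}_q\sim q^{k(n-k)}$ gives
\[
   \sigma_q(n,k)\sim\frac{\sigma_q(n,k)}{\Gaussian{n}{k}_q}\cdot q^{k(n-k)}.
\]
Third, I would substitute the five surviving cases of \cref{thm:SigmaAsymp}: for types (P), (\NOne), (\NNA), (E), and for (H) with $k<n/2$, the proportion is $\sim q^{-\binom{k+1}{2}}$, giving $\sigma_q(n,k)\sim q^{k(n-k)-\binom{k+1}{2}}$; for type (H) with $k=n/2$ the proportion is $\sim 2q^{-\binom{k+1}{2}}$, giving the extra factor $2$. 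Since these two subcases exhaust all possibilities, this is exactly the stated dichotomy.

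\textbf{Main obstacle.} There is essentially no obstacle: the only point requiring a moment of care is the bookkeeping of which type/parameter combinations give the factor $2$, and confirming that $\varepsilon=-1$ (type (E)) together with $k=n/2$ cannot occur — but this is already recorded in the proof of \cref{thm:SigmaAsymp} and follows from \cref{prop:typeqn}, since type (E) forces $w(\F_q^n,\,\cdot\,)=n/2-1<n/2$, so $k\le w<n/2$. One should also note that the asymptotic relation $\sim$ is understood along the relevant congruence class of $q$ (even, $q\equiv1\bmod 4$, or $q\equiv3\bmod 4$) as explained before \cref{thm:SigmaAsymp}, and that this convention is inherited verbatim. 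Thus the proof is a one-line consequence of \cref{thm:SigmaAsymp}, \cref{prop:typeqn}, and $\Gaussian{n}{k}_q\sim q^{k(n-k)}$.
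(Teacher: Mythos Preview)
Your proposal is correct and matches the paper's approach exactly: the paper presents this corollary without proof, as an immediate specialization of \cref{thm:SigmaAsymp} to $(\F_q^n,\,\cdot\,)$ after multiplying the proportion by $\Gaussian{n}{k}_q\sim q^{k(n-k)}$ and noting via \cref{prop:typeqn} that type~(\NA) does not occur.
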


Our next goal is to determine the asymptotic behavior of the average weight distribution of self-orthogonal codes as $q\rightarrow\infty$. We will need the asymptotics of $\tau_q(n,k,w)$, defined in \cref{nota:tau}. Unsurprisingly, it agrees with the one for $\sigma_q(n,k)$.

\begin{proposition}\label{prop:AsympTau}
Let $w\in\{1,\ldots,n/2\}$ and $k\in\{1,\ldots,w\}$. Then
\[
   \tau_q(n,k,w)\sim \left\{\begin{array}{cl}
       2q^{k(n-k)-\binom{k+1}{2}} &\text{if $n$ even and $k=w=n/2$},\\[1ex]
        q^{k(n-k)-\binom{k+1}{2}} & \text{otherwise.}
    \end{array}\right.
\]
\end{proposition}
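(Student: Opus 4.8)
The plan is to compute the asymptotics directly from the defining product in \eqref{e-tau}, namely
\[
   \tau_q(n,k,w)=\prod_{i=1}^{k}\frac{(q^{n-w-i}+1)(q^{w-i+1}-1)}{q^i-1},
\]
by estimating each factor in the numerator and denominator. The denominator gives $\prod_{i=1}^k(q^i-1)\sim q^{\binom{k+1}{2}}$, which is routine. For the numerator, the first step is to identify when the factor $q^{n-w-i}+1$ fails to be asymptotic to $q^{n-w-i}$: this happens precisely when $n-w-i=0$, i.e. when $i=n-w$. Similarly $q^{w-i+1}-1\sim q^{w-i+1}$ unless $w-i+1=0$, but since $i$ ranges in $\{1,\dots,k\}$ with $k\le w$ we always have $w-i+1\ge 1$, so the second numerator factor is never problematic.

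So the argument splits into two cases. In the generic case, either $n-w>k$ (so the index $i=n-w$ never occurs in the product) or $n-w<1$ (impossible since $w\le n/2$ forces $n-w\ge n/2\ge 1$, with equality only when $n$ even and $w=n/2$, which I treat separately). Thus for $1\le i\le k$ with $i\ne n-w$ we have $q^{n-w-i}+1\sim q^{n-w-i}$, and the whole product is asymptotic to
\[
   \frac{\prod_{i=1}^k q^{n-w-i}\cdot\prod_{i=1}^k q^{w-i+1}}{\prod_{i=1}^k q^i}
   = q^{\,k(n-w)-\binom{k+1}{2}+kw-\binom{k+1}{2}+k-\binom{k+1}{2}}
   = q^{\,kn-\binom{k+1}{2}},
\]
wait — I should recompute: $\sum_{i=1}^k(n-w-i)=k(n-w)-\binom{k+1}{2}$, $\sum_{i=1}^k(w-i+1)=kw-\binom{k+1}{2}+k$, and $\sum_{i=1}^k i=\binom{k+1}{2}$, so the exponent is $k(n-w)+kw+k-3\binom{k+1}{2}+\binom{k+1}{2}$; the two $\binom{k+1}{2}$ cancel against the denominator leaving exponent $kn+k-2\binom{k+1}{2}=kn+k-k(k+1)=kn-k^2=k(n-k)$. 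Hmm, I have an extra $k$ to track carefully — the point is that the bookkeeping yields exactly $k(n-k)-\binom{k+1}{2}$ after combining, and I will carry out this exponent arithmetic cleanly in the writeup. The special case is $n$ even, $w=n/2=k$: then $i=n-w=n/2=k$ does occur (as the last index), and the factor $q^{n-w-k}+1=q^0+1=2$ contributes a factor $2$ instead of $1\sim q^0$; redoing the exponent count with this index contributing $0$ to the numerator sum rather than being asymptotic to $q^{0}$ gives the same exponent (since $0\sim 0$), but the multiplicative constant becomes $2$. This exactly matches $\sigma_q(n,k)$ in \cref{cor:SigmaDotAsymp}, as promised, and is consistent with \cref{thm:Sigma}(a) and \cref{nota:tau}.

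The only mild obstacle is the exponent bookkeeping and making sure the lone exceptional index $i=n-w$ is handled correctly in both cases — in particular verifying that $n-w\ge 1$ always, so the second type of numerator factor never degenerates, and that $n-w\le k$ forces $w\ge n-k\ge n/2$ (using $k\le w\le n/2$), hence $w=n/2=n-k=k$, pinning down the special case precisely. Once that case analysis is set up, the estimates $q^a\pm 1\sim q^a$ for $a\ge 1$ and the elementary exponent sums finish the proof.
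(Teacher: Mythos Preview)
Your approach is correct and essentially the same as the paper's: both estimate the defining product \eqref{e-tau} factor by factor and single out the unique exceptional index $i=k=w=n/2$ where $q^{n-w-i}+1=2$. The paper merely expands $(q^{n-w-i}+1)(q^{w-i+1}-1)=q^{n-2i+1}+q^{w-i+1}-q^{n-w-i}-1$ first and reads off the dominant monomial, whereas you keep the two numerator factors separate; the case analysis and the resulting exponent $k(n-k)-\binom{k+1}{2}$ are identical, so once you clean up the exponent bookkeeping (your sums do collapse to $k(n+1)-3\binom{k+1}{2}=k(n-k)-\binom{k+1}{2}$) the write-up will match the paper's argument.
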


\begin{proof}
Consider~$\tau_q(n,k,w)$ as defined in \eqref{e-tau}. The numerator of the $i$-th factor is $N(i):=q^{n-2i+1}+q^{w-i+1}-q^{n-w-i}-1$. It is dominated by $q^{n-2i+1}$ in all cases except for $i=k=w=n/2$.
In that case $N(i)=2q-2\sim 2q=2q^{n-2i+1}$, while in all other cases $N(i)\sim q^{n-2i+1}$.
Now $\tau_q(n,k,w)=\prod_{i=1}^k N(i)/(q^i-1)$ leads to the desired result.
\end{proof}

We also need the asymptotic behavior of $\zeta_q(n,i)$, defined in \cref{nota:zeta}.

\begin{proposition}\label{prop:AsympZeta}
We have $\zeta_q(n,1)=0$ and
\[
   \zeta_q(n,2)=
    \left\{\begin{array}{cl}\dbinom{n}{2}(q-1) &\text{if $q$ even},\\[1.9ex]
           2\dbinom{n}{2}(q-1) &\text{if $q\equiv1\mmod{4}$},\\[.7ex]
              0 &\text{if $q\equiv3\mmod{4}$.}\end{array}\right.
\]
For $i>2$ we have $\zeta_q(n,i)\sim \dbinom{n}{i}q^{i-1}$.
\end{proposition}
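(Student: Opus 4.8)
The plan is to start from the exact formula for $\zeta_q(n,i)$ provided by \cref{prop:zeta},
\[
 \zeta_q(n,i)=\binom{n}{i}\bigg( (-1)^i+
    \sum_{j=1}^i\binom{i}{j}(-1)^{i-j}\Big((q-1)\sigma_q(j,1)+1\Big)\bigg),
\]
and to extract its leading term in $q$ as $q\to\infty$. The first observation is that the terms $(-1)^i$ and $\sum_{j=1}^i\binom{i}{j}(-1)^{i-j}\cdot 1 = -(-1)^i$ cancel (binomial theorem), so
\[
 \zeta_q(n,i)=\binom{n}{i}(q-1)\sum_{j=1}^i\binom{i}{j}(-1)^{i-j}\sigma_q(j,1).
\]
Thus everything reduces to understanding the alternating sum $T_q(i):=\sum_{j=1}^i\binom{i}{j}(-1)^{i-j}\sigma_q(j,1)$, where $\sigma_q(j,1)$ is the number of one-dimensional self-orthogonal codes in $\F_q^j$, given explicitly by \cref{thm:SOVectors} (or \cref{P-sigmaqnk}) according to $\type(q,j)$.

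The cases $i=1$ and $i=2$ I would simply verify by hand: for $i=1$, $T_q(1)=\sigma_q(1,1)=0$ since there is no nonzero self-orthogonal vector in $\F_q$ (a nonzero scalar is never isotropic), giving $\zeta_q(n,1)=0$; for $i=2$, $T_q(2)=\sigma_q(2,1)-2\sigma_q(1,1)=\sigma_q(2,1)$, and $\sigma_q(2,1)$ depends on $\type(q,2)$: it is $q-1$ when $q$ is even (type (\NNA)), $2(q-1)$ when $q\equiv 1\pmod 4$ (type (H)), and $0$ when $q\equiv 3\pmod 4$ (type (E)), which together with the factor $\binom{n}{2}(q-1)$ yields the three stated values. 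The main case is $i>2$, where I claim $T_q(i)\sim \tfrac12\, q^{i-2}$, so that $\zeta_q(n,i)\sim \binom{n}{i}(q-1)\cdot\tfrac12 q^{i-2}\sim \binom{n}{i} q^{i-1}$ (the apparent factor $1/2$ must disappear, so let me be careful — see below).

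For the asymptotics of $T_q(i)$ with $i>2$: by \cref{thm:SOVectors}, $\sigma_q(j,1)=\frac{q^{j-1}-1}{q-1}$ for types (\NOne), (\NNA) and $\sigma_q(j,1)=\frac{(q^{w}-1)(q^{j-w-1}+1)}{q-1}$ for types (P), (H), (E) where $w=w(\F_q^j,\cdot)$; in all cases one checks $\sigma_q(j,1)=q^{j-2}+O(q^{j-3})$ for $j\ge 2$ (the leading term $q^{j-2}$ being type-independent, and for $j=2$ this reads $\sigma_q(2,1)\in\{q-1, 2(q-1), 0\}$, all $O(q)$, consistent). So modulo lower-order terms $T_q(i)\approx \sum_{j=2}^i\binom{i}{j}(-1)^{i-j}q^{j-2}=q^{-2}\sum_{j=0}^i\binom{i}{j}(-1)^{i-j}q^{j}-q^{-2}(q^0\cdot(-1)^i+\binom{i}{1}(-1)^{i-1}q)=q^{-2}(q-1)^i + i\,q^{-1}(-1)^i - q^{-2}(-1)^i$. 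The dominant term is $q^{-2}(q-1)^i\sim q^{i-2}$, hence $T_q(i)\sim q^{i-2}$, and therefore $\zeta_q(n,i)=\binom{n}{i}(q-1)T_q(i)\sim\binom{n}{i}q\cdot q^{i-2}=\binom{n}{i}q^{i-1}$, as claimed. The one point requiring care — and the main (minor) obstacle — is making the error-term bookkeeping rigorous: one must confirm that the combined lower-order contributions from the $O(q^{j-3})$ remainders in $\sigma_q(j,1)$, summed against $\binom{i}{j}$, are genuinely $o(q^{i-2})$ and in particular do not conspire to cancel the leading $q^{i-2}$; since there are only finitely many ($i$) terms and each remainder is $O(q^{j-3})=O(q^{i-3})$, the total is $O(q^{i-3})=o(q^{i-2})$, so no cancellation issue arises, and the leading coefficient is exactly $1$, type-independent, for every $i>2$.
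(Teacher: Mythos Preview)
Your approach is essentially the same as the paper's: both start from \cref{prop:zeta}, insert the explicit values of $\sigma_q(j,1)$, and isolate the dominant term $q^{j-1}$ (equivalently $q^{j-2}$ after your cancellation step) using the binomial identity $\sum_{j}\binom{i}{j}(-1)^{i-j}q^{j}=(q-1)^i$. The paper writes the exact decomposition $(q-1)\sigma_q(j,1)+1=q^{j-1}+\alpha_q(j)q^{j/2-1}(q-1)$ with $\alpha_q(j)\in\{-1,0,1\}$ and observes that the second piece contributes only degree $\le i/2<i-1$; you instead bound the remainder $\sigma_q(j,1)-q^{j-2}$ term by term. Same idea, different bookkeeping.

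There is, however, a slip in your $i=2$ computation: $\sigma_q(2,1)$ counts one-dimensional self-orthogonal \emph{subspaces} of $\F_q^2$, not nonzero self-orthogonal \emph{vectors}. Its values are $1,\,2,\,0$ (for $q$ even, $q\equiv 1\pmod 4$, $q\equiv 3\pmod 4$ respectively), not $q-1,\,2(q-1),\,0$. With your stated values, $\zeta_q(n,2)=\binom{n}{2}(q-1)\sigma_q(2,1)$ would come out as $\binom{n}{2}(q-1)^2$ in the first two cases, which is wrong. Plugging in the correct values $1,2,0$ gives the desired $\binom{n}{2}(q-1),\,2\binom{n}{2}(q-1),\,0$. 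Relatedly, your claim ``$\sigma_q(j,1)=q^{j-2}+O(q^{j-3})$ for $j\ge 2$'' fails at $j=2$ (where $\sigma_q(2,1)-1\in\{-1,0,1\}$ is $O(1)$, not $O(q^{-1})$), but this is harmless for the $i>2$ asymptotics since an $O(1)$ contribution from the $j=2$ term is still $o(q^{i-2})$.
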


\begin{proof}
Clearly, there are no self-orthogonal vectors of Hamming weight~$1$. For $i>1$ recall the expression for $\zeta_q(n,i)$ in \cref{prop:zeta}.
In order to evaluate it, we need $\sigma_q(j,1)$. From \cref{P-sigmaqnk} we know
\[
   \sigma_q(j,1)=\left\{\begin{array}{cl} \displaystyle  \frac{q^{j-1}-1}{q-1}&\text{if $\type(q,j)\in\{\text{(P), (\NOne), (\NNA)}\}$},\\[12pt]  
       \displaystyle \frac{q^{j-1}+\varepsilon q^{j/2}-\varepsilon q^{j/2-1}-1}{q-1} &\text{if $\type(q,j)\in\{\text{(H),(E)}\}$},
     \end{array}\right.
\]
where $\varepsilon=1$ if $\type(q,j)=$ (H) and $\varepsilon=-1$ if $\type(q,j)=$ (E).
As a consequence,
\begin{equation}\label{e-middleterm}
  (q-1)\sigma_q(j,1)+1=\left\{\begin{array}{cl}q^{j-1} &\text{if $\type(q,j)\in\{\text{(P), (\NOne), (\NNA)}\}$},\\ 
       q^{j-1}+\varepsilon q^{j/2}-\varepsilon q^{j/2-1} &\text{if $\type(q,j)\in\{\text{(H),(E)}\}$}.
  \end{array}\right.
\end{equation}
For $j>2$ we have $j/2<j-1$ and thus $(q-1)\sigma_q(j,1)+1\sim q^{j-1}$.
For $j=2$ we have
\[
  (q-1)\sigma_q(2,1)+1=\left\{\begin{array}{cl} q &\text{if $q$ even},\\ 2q-1 &\text{if $q\equiv1\mmod{4}$},\\1 &\text{if $q\equiv3\mmod{4}$.}\end{array}\right.
\]
Evaluating the expression for $\zeta_q(n,2)$ in \cref{prop:zeta} we obtain the desired result for $i=2$.
For $i>2$ we proceed as follows. Equation~\eqref{e-middleterm} reads
$(q-1)\sigma_q(j,1)+1=q^{j-1}+\alpha_q(j)q^{j/2-1}(q-1)$, where $\alpha_q(j)=0$ if $\type(q,j)\in\{\text{(P), (\NOne), (\NNA)}\}$ and
$\alpha_q(j)=\varepsilon$ if $\type(q,j)\in\{\text{(H), (E)}\}$.
Now we compute

\begin{align*}
   \frac{\zeta_q(n,i)}{\binom{n}{i}}&=(-1)^i+\sum_{j=1}^i\binom{i}{j}(-1)^{i-j}\Big(q^{j-1}+\alpha_q(j)q^{j/2-1}(q-1)\Big)\\
       &=(-1)^i+\sum_{j=1}^i\binom{i}{j}(-1)^{i-j}q^{j-1}+\frac{q-1}{q}\sum_{j=1}^i\binom{i}{j}(-1)^{i-j}\alpha_q(j)q^{j/2}\\
       &=(-1)^i+\frac{1}{q}\Big((q-1)^i-(-1)^i\Big)+\frac{q-1}{q}\sum_{j=1}^i\binom{i}{j}(-1)^{i-j}\alpha_q(j)q^{j/2}\\
       &=\frac{(q-1)^i-(-1)^i+(-1)^iq}{q}+\frac{q-1}{q}\sum_{j=1}^i\binom{i}{j}(-1)^{i-j}\alpha_q(j)q^{j/2}.
\end{align*}
Since the first term has degree $i-1$ and the second one has degree at most $i/2<i-1$, we conclude $\zeta_q(n,i)\sim\binom{n}{i}q^{i-1}$, as stated.
\end{proof}

Now we are ready to consider the average weight distribution of self-orthogonal codes.
Note that $\smash{\overline{A}_q^{(1)}(n,k,k)=0}$.

\begin{theorem}\label{thm:AsympAvgWeight}
Let $2\leq j\leq n$ and $1\leq k\leq w:=w(\F_q^n,\,\cdot\,)$.
Then
\begin{equation}\label{e-GenCase}
    \overline{A}_q^{(j)}(n,k,k)\sim\binom{n}{j}q^{j-n+k}
\end{equation}
with the following exceptions for $j=2$:
\begin{equation}\label{e-qodd}
   \overline{A}_q^{(2)}(n,k,k)=0\ \text{ if }q\equiv3\mmod{4}
   \ \ \text{ and } \
   \overline{A}_q^{(2)}(n,k,k)\sim2\binom{n}{2}q^{2-n+k}\ \text{ if }\ q\equiv1\mmod{4}.
\end{equation}
\end{theorem}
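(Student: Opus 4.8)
The plan is to combine the exact formula for the aggregate weight distribution $A_q^{(j)}(n,k,k)$ from \cref{cor:sumweight} with the asymptotic estimates for its ingredients: $\zeta_q(n,j)$ from \cref{prop:AsympZeta}, $\tau_q(n-2,k-1,w-1)$ and $\sigma_q(n-2,k-1)$ from \cref{prop:AsympTau} and \cref{cor:SigmaDotAsymp}, and $\sigma_q(n,k)$ from \cref{cor:SigmaDotAsymp}. The average weight $\overline{A}_q^{(j)}(n,k,k)$ is the quotient $A_q^{(j)}(n,k,k)/\sigma_q(n,k)$, so I would first write this quotient type by type and then let $q\to\infty$.

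\textbf{The generic case ($j\ge 3$, or $j=2$ with $q$ even or $q\equiv1\bmod4$).} First I would handle the types (P), (\NOne), (H), (E) using \cref{cor:sumweight}(a): here $A_q^{(j)}(n,k,k)=\zeta_q(n,j)\,\tau_q(n-2,k-1,w-1)$. By \cref{prop:AsympTau}, $\tau_q(n-2,k-1,w-1)\sim q^{(k-1)(n-k-1)-\binom{k}{2}}$ — one must check the exceptional case $k-1=w-1=(n-2)/2$, i.e. $k=n/2$ and type (H); but then $\tau_q$ picks up a factor $2$ which exactly cancels the factor $2$ in $\sigma_q(n,k)$ from \cref{cor:SigmaDotAsymp}, so the final estimate is unaffected. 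Dividing by $\sigma_q(n,k)\sim q^{k(n-k)-\binom{k+1}{2}}$ and using $\zeta_q(n,j)\sim\binom{n}{j}q^{j-1}$ (valid for $j>2$, and also for $j=2$ up to a harmless constant when $q$ even or $q\equiv1\bmod4$), the exponent bookkeeping gives
\[
    (j-1) + (k-1)(n-k-1)-\tbinom{k}{2} - k(n-k) + \tbinom{k+1}{2} = j-n+k,
\]
which is \eqref{e-GenCase}; the $j=2$, $q\equiv1\bmod4$ exception in \eqref{e-qodd} comes from the extra factor $2$ in $\zeta_q(n,2)$, and the $q\equiv3\bmod4$ case is immediate since $\zeta_q(n,2)=0$. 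For type (\NNA) I would use \cref{cor:sumweight}(b) when $j<n$, where $A_q^{(j)}(n,k,k)=\zeta_q(n,j)\,\sigma_q(n-2,k-1)$ and $\sigma_q(n-2,k-1)\sim q^{(k-1)(n-k-1)-\binom{k}{2}}$ by \cref{cor:SigmaDotAsymp}, giving the same arithmetic.

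\textbf{The case $j=n$ in type (\NNA).} This is the one place the formula is genuinely different, so it deserves separate attention. By \cref{cor:sumweight}(c),
\[
    A_q^{(n)}(n,k,k)=\zeta_q(n,n)\,\sigma_q(n-2,k-1)+(q-1)\big(\sigma_q(n-1,k-1)-\sigma_q(n-2,k-1)\big).
\]
I would argue that the first summand dominates: $\zeta_q(n,n)\sim q^{n-1}$ so the first term is $\sim q^{n-1}\cdot q^{(k-1)(n-k-1)-\binom{k}{2}}$, whereas $\sigma_q(n-1,k-1)\sim q^{(k-1)(n-k)-\binom{k}{2}}$ and $\sigma_q(n-2,k-1)\sim q^{(k-1)(n-k-1)-\binom{k}{2}}$, so the correction term $(q-1)(\cdots)$ has order at most $q^{1+(k-1)(n-k)-\binom{k}{2}}$, which is smaller than $q^{n-1+(k-1)(n-k-1)-\binom{k}{2}}$ precisely when $1+(k-1)(n-k)<n-1+(k-1)(n-k-1)$, i.e. when $k-1<n-2$, i.e. $k<n-1$; since $k\le w=n/2$ and $n\ge 4$ (type (\NNA) needs $n$ even, and $n\ge 3$ is assumed, so $n\ge 4$), this holds. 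Hence the $j=n$ case reduces to the same estimate $\binom{n}{n}q^{n-n+k}=q^k$, consistent with \eqref{e-GenCase}. I would double-check the small boundary values $k=1$ and $k=n/2$ by hand, since \cref{cor:sumweight} flags the convention $\tau_q(n-2,\ell,w)=1=\sigma_q(n-2,\ell)$ at $n=2$ and the problem statement assumes $n\ge 3$, so this convention is not actually triggered.

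\textbf{Main obstacle.} The computations are all routine exponent arithmetic; the only real subtlety is keeping track of the factor-of-$2$ discrepancies (type (H) with $k=n/2$ in $\tau_q$ and $\sigma_q$; $q\equiv1\bmod4$ in $\zeta_q(n,2)$) and verifying they cancel or propagate correctly, and making sure the correction term in \cref{cor:sumweight}(c) is genuinely lower-order across the full admissible range of $k$. A careful case split on $\type(q,n)$ at the outset, together with tabulating the $q$-exponent of each factor, should make the whole argument transparent.
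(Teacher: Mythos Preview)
Your proposal is correct and follows essentially the same route as the paper's proof: split by $\type(q,n)$, apply \cref{cor:sumweight}(a)--(c), plug in the asymptotics from \cref{prop:AsympZeta}, \cref{prop:AsympTau}, and \cref{cor:SigmaDotAsymp}, and track the factor-of-$2$ coincidences. One small correction: the theorem does \emph{not} assume $n\ge 3$ (that hypothesis belongs to \cref{T-AvgWeight}, not to this statement), so the case $n=2$ (hence $k=1$, $j=2$) must be checked separately; the paper dispatches it in one line from \cref{cor:sumweight} and \cref{prop:AsympZeta}, and your ``double-check the small boundary values'' would catch it as well.
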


\begin{proof}
Recall from \cref{def:AvgDistr} that $\smash{\overline{A}_q^{(j)}(n,k,k)=A_q^{(j)}(n,k,k)/\sigma_q(n,k)}$, and that $\smash{A^{(j)}_q(n,k,k)}$ is provided in \cref{cor:sumweight}.
The case $j=n=2$, thus $k=1$, follows immediately from \cref{cor:sumweight} and \cref{prop:AsympZeta}.
Thus let $n\geq3$.
We proceed with the various types.
\\
1) \underline{\smash{$\type(q,n)=$ (P), (\NOne), (H), (E):}} Then
$A^{(j)}_q(n,k,k)=\zeta_q(n,j)\tau_q(n-2,k-1,w-1)$.
For types (P) and (\NOne) we have $w=(n-1)/2$, while for type (E) we have $w=n/2-1$.
In either case $w-1\neq(n-2)/2$ and thus the asymptotics of $\tau_q(n-2,k-1,w-1)$
is given by the second case in \cref{prop:AsympTau}.
For type (H) we have $w=n/2$ and the asymptotics is determined the first case if $k=w$ and by the second one otherwise.
For $j>2$ all of this together with \cref{prop:AsympZeta} results in
\begin{equation}\label{e-AjqSim}
    A_q^{(j)}(n,k,k)\sim\alpha\binom{n}{j}q^{j-1+(k-1)(n-k-1)-\binom{k}{2}},
\end{equation}
where $\alpha=2$ if $k=n/2$ and $\type(q,n)=$ (H), and $\alpha=1$ otherwise. Dividing by $\sigma_q(n,k)$ and using \cref{cor:SigmaDotAsymp} we arrive at the stated formula.
For $j=2$ \cref{prop:AsympZeta} immediately leads to \eqref{e-qodd}. 
\\
2) \underline{\smash{$\type(q,n)=$ (\NNA):}}
In this case $\smash{\zeta_q(n,j)\sim\binom{n}{j}q^{j-1}}$ for all $j\geq 2$.
Furthermore, if $j<n$, then $A^{(j)}_q(n,k,k)=\zeta_q(n,j)\sigma_q(n-2,k-1)$ thanks to \cref{cor:sumweight}(b).
By \cref{cor:SigmaDotAsymp} and \cref{prop:AsympTau} the asymptotics of $\sigma_q(n-2,k-1)$ agrees with that of $\tau_q(n-2,k-1,w-1)$ and as in~1) we obtain \eqref{e-GenCase}.
If $j=n$ then $\smash{A^{(j)}_q(n,k,k)}$ is given in \cref{cor:sumweight}(c).
Using the asymptotic estimates $\smash{\sigma_q(n-2,k-1)\sim q^{(k-1)(n-k-1)-\binom{k}{2}}}$ and $\smash{\sigma_q(n-1,k-1)\sim q^{(k-1)(n-k)-\binom{k}{2}}}$ we observe that
$\smash{A^{(j)}_q(n,k,k)}\sim\zeta_q(n,n)\sigma_q(n-2,k-1)\sim q^{k(n-k)-\binom{k}{2}}$, which again leads to
\eqref{e-GenCase}.
\end{proof}

It is natural to compare 
the average weight distribution of a self-orthogonal code with that of an ``unrestricted'' code. We include this comparison in the following remark.

\begin{remark}\label{rem:AsympAvgWeightGeneral}
Define

\[
    B_q^{(j)}(n,k):=\sum_{\genfrac{}{}{0pt}{}{x\in\F_q^n}{\wt(x)=j}}\big|\{C\leq\F_q^n\mid C \text{ is an $[n,k]_q$-code and }x\in C\}\big|
\]
and $\smash{\overline{B}_q^{(j)}(n,k):=B_q^{(j)}(n,k)/\Gaussian{n}{k}_q}$.
Then the tuple $\smash{(\overline{B}_q^{(j)}(n,k)\mid 0\leq j\leq n)}$ is the average weight distribution of an $[n,k]_q$-code.
Using that there are $\binom{n}{j}(q-1)^j$ vectors of weight~$j$ and $\Gaussian{n-1}{k-1}_q$ codes of dimension~$k$ containing a
given nonzero vector, one easily obtains the asymptotic estimate $\smash{B_q^{(j)}(n,k)=\binom{n}{j}(q-1)^j\Gaussian{n-1}{k-1}_q\sim\binom{n}{j}q^{j+(k-1)(n-k)}}$.
As a consequence,
\[
   \overline{B}_q^{(j)}(n,k)\sim\binom{n}{j}q^{j-n+k}\ \text{ for all $n$ and $k$}.
\]
This shows that for even $q$, the average weight distribution of a self-orthogonal $[n,k]_q$-code behaves asymptotically just like the
average weight distribution of a general $[n,k]_q$-code.
Only for odd~$q$, the asymptotic behavior of $\smash{\overline{A}_q^{(2)}(n,k,k)}$ in \eqref{e-qodd} allows us to
distinguish between self-orthogonal codes and general codes.
\end{remark}

\section{Existence Results for Self-Orthogonal Codes}
\label{sec:6}

In this section we turn to the most studied parameter of an error-correcting code, namely its minimum distance; see \cref{nota:codes}.
An error-correcting code cannot have dimension and minimum distance both large at the same time. Indeed, the Singleton Bound states that $\smash{k \le n-d(C)+1}$ for any nonzero $[n,k]_q$-code~$C$; see~\cite{singleton1964maximum}. Codes attaining the Singleton Bound with equality are called \textbf{maximum distance separable} (\textbf{MDS}) and form a central topic in coding theory; see for instance~\cite{macwilliams1977theory}.

We wish to estimate the number of $k$-dimensional self-orthogonal codes $C \le \F_q^n$ with minimum Hamming distance at least $d$ over a large finite field. The main two results of this section (Theorems~\ref{thm:asympdense} and~\ref{thm:asympdense2} below) compute the asymptotic density
of the $k$-dimensional, non-MDS self-orthogonal codes.
In particular, they prove that
a uniformly random $k$-dimensional self-orthogonal code is MDS with probability approaching 1 as $q$ grows.

We need the following notation and results. Recall \cref{def:Hamming}.

\begin{notation}\label{nota:FqS}
For a subset $S \subseteq \{1,\ldots,n\}$ we define $\F_q^n(S)=\{v\in\F_q^n\mid \sigma(v)\subseteq S\}$, that is, $\F_q^n(S)$ is the space of vectors in $\F_q^n$
whose support is contained in $S$.
\end{notation}

\begin{lemma} \label{lem:prr2}
Let $w=w(\F_q^n,\,\cdot\,)$.
For $k\in\{1,\ldots,w\}$ and $i\in\{0,\ldots,k\}$ set
\[
   \delta_q(n,k,i)=\left\{\begin{array}{cl}
   \sigma_q(n-2i,k-i) &\text{if }\type(q,n)=\text{(\NNA)},\\
   \tau_q(n-2i,k-i,w-i) &\text{if }\type(q,n)\in\{\text{(P), (H), (E), (\NOne)}\},
   \end{array}\right.
\]
where $\tau_q(n-2i,k-i,w-i)$ is as in \cref{nota:tau}.
Let $S \subseteq \{1,\ldots,n\}$ be a set of size $1 \le t< n$.
Set $\hat{w}=w(\F_q^t,\,\cdot\,)$.
Then
\[
    \big|\big\{C \in \Sigma_q(n,k) \mid  C \cap \F_q^n(S) \neq \{0\}\big\}\big|=\sum_{i=1}^{\min\{k,\hat{w}\}} \sigma_q(t,i) \, \delta_q(n,k,i) \, (-1)^{i-1} q^{\binom{i}{2}}.
\]
\end{lemma}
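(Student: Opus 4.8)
The plan is to count the pairs $(C,v)$ where $C\in\Sigma_q(n,k)$ is self-orthogonal and $v$ is a nonzero vector lying in $C\cap\F_q^n(S)$, and then use inclusion–exclusion (equivalently, Möbius inversion over the subspace lattice of $\F_q^n(S)$) to pass from ``contains some nonzero vector of $\F_q^n(S)$'' to ``intersects $\F_q^n(S)$ nontrivially.'' The key observation is that any self-orthogonal code $C$ containing a subspace $W\le\F_q^n(S)$ must also contain $W$ as a self-orthogonal subspace (since subspaces of self-orthogonal spaces are self-orthogonal), and the number of such $C$ of dimension $k$ containing a fixed $i$-dimensional self-orthogonal $W$ is exactly $\tilde\sigma_{k,W}$ as computed in \cref{prop:SOContainx}. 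The subtlety here, and the main thing to get right, is that in type (\NNA) the all-one vector $\textbf{1}\in\F_q^n$ has full support, hence $\textbf{1}\notin W$ whenever $W\le\F_q^n(S)$ and $S\neq\{1,\ldots,n\}$ (which holds since $t<n$); therefore \cref{prop:SOContainx} always gives $\tilde\sigma_{k,W}=\sigma_q(n-2i,k-i)$ in the (\NNA) case, and $\tilde\sigma_{k,W}=\tau_q(n-2t',k-i,w-i)$ with $t'=i$ in the remaining types. This is precisely what is packaged into $\delta_q(n,k,i)$.

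\textbf{Main steps.} First, for a fixed self-orthogonal subspace $W\le\F_q^n(S)$ of dimension $i$, the number of $C\in\Sigma_q(n,k)$ with $W\le C$ equals $\delta_q(n,k,i)$ by \cref{prop:SOContainx} and the discussion above; this depends only on $i$, not on $W$. Note that such $W$ exists only for $i\le\hat w=w(\F_q^t,\,\cdot\,)$ (the Witt index of $\F_q^n(S)\cong\F_q^t$ with the restricted dot product, which is again the standard inner product) and for $i\le k$; the number of such $W$ is $\sigma_q(t,i)$. Second, define $g(i)$ as the number of $C\in\Sigma_q(n,k)$ containing at least one $i$-dimensional self-orthogonal subspace of $\F_q^n(S)$, counted with multiplicity — more precisely, set $h(W)=|\{C\in\Sigma_q(n,k)\mid W\le C\}|$ for $W\le\F_q^n(S)$ self-orthogonal, and
\[
    \sum_{\substack{W\le\F_q^n(S)\\ \dim W=i,\ W\text{ self-orthogonal}}} h(W)=\sigma_q(t,i)\,\delta_q(n,k,i).
\]
Third, apply Möbius inversion in the lattice of subspaces: the number of $C\in\Sigma_q(n,k)$ with $C\cap\F_q^n(S)=\{0\}$ is $\sum_{W}\mu(\{0\},W)\,h(W)$ summed over self-orthogonal $W\le\F_q^n(S)$, and subtracting from $\sigma_q(n,k)=h(\{0\})$ gives the count of $C$ with $C\cap\F_q^n(S)\neq\{0\}$. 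Using the standard value $\mu(\{0\},W)=(-1)^i q^{\binom{i}{2}}$ for $\dim W=i$ in the subspace lattice, and the fact that $h(W)$ depends only on $\dim W$, the $W=\{0\}$ term cancels and what remains is
\[
    \sum_{i=1}^{\min\{k,\hat w\}}(-1)^{i-1}q^{\binom{i}{2}}\sum_{\substack{W\le\F_q^n(S)\\ \dim W=i,\ W\text{ s.o.}}}h(W)
    =\sum_{i=1}^{\min\{k,\hat w\}}\sigma_q(t,i)\,\delta_q(n,k,i)\,(-1)^{i-1}q^{\binom{i}{2}},
\]
which is the claimed formula. One should double-check that only \emph{self-orthogonal} $W\le\F_q^n(S)$ contribute: if $W$ is not self-orthogonal then no self-orthogonal $C$ contains it, so $h(W)=0$ and those terms drop out automatically, which is why $\sigma_q(t,i)$ (counting self-orthogonal $i$-subspaces of $\F_q^t$) is the correct coefficient rather than the full Gaussian binomial.

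\textbf{Main obstacle.} The genuinely delicate point is justifying that $\delta_q(n,k,i)$ is the right value of $h(W)$ uniformly over all self-orthogonal $W\le\F_q^n(S)$ of dimension $i$ — in particular handling the (\NNA) case correctly. Here one needs that $\textbf{1}\notin\F_q^n(S)$ (true since $|S|=t<n$), so \cref{ex:InducedAlternating} guarantees the induced form $B_W$ is non-alternating and \cref{prop:SOContainx} yields $\sigma_q(n-2i,k-i)$; in the other types \cref{prop:quotient} preserves the type so \cref{prop:SOContainx} yields $\tau_q(n-2i,k-i,w-i)$. One also must verify that the Witt index of the bilinear subspace $(\F_q^n(S),\cdot|_{\F_q^n(S)})$ equals $w(\F_q^t,\,\cdot\,)=\hat w$ — which is immediate since restricting the dot product to the coordinate subspace indexed by $S$ is isometric to $(\F_q^t,\,\cdot\,)$ — so that the sum over $i$ can be truncated at $\min\{k,\hat w\}$ without losing or gaining terms, the vanishing terms for $i>\hat w$ being $\sigma_q(t,i)=0$. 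Once these bookkeeping facts are in place, the Möbius inversion is entirely routine.
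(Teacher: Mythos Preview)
Your proposal is correct and follows essentially the same approach as the paper: both set $h(W)=|\{C\in\Sigma_q(n,k)\mid W\le C\}|$, compute $|\{C\mid C\cap\F_q^n(S)=\{0\}\}|$ via M\"obius inversion in the subspace lattice of $\F_q^n(S)$, observe that $h(W)=0$ unless $W$ is self-orthogonal, and use \cref{prop:SOContainx} together with the crucial fact that $t<n$ forces $\textbf{1}\notin W$ to identify $h(W)=\delta_q(n,k,i)$ uniformly. Your write-up is slightly more explicit about why the (\NNA) case lands in the ``$\textbf{1}\notin U$'' branch and why the sum truncates at $\min\{k,\hat w\}$, but the argument is the same.
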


\begin{proof}
Let $D=\F_q^n(S)$ and $\mL$ be the lattice of subspaces of $D$.
Denote its M{\"o}bius function by~$\mu_{\mL}$.
Consider the quantity
\[
\Delta=\sum_{\substack{C \in \Sigma_q(n,k)}}  \;
\sum_{\substack{U \in \mL \\ U \le C}} \mu_\mL(\{0\},U).
\]
Note that for a subspace $U \le \F_q^n$ we have $U \in \mL$ and $U \le C$ if and only if $U \le C \cap D$.
Therefore the properties of the M\"obius function imply
\begin{equation} \label{delta1}
\Delta=\sum_{\substack{C \in \Sigma_q(n,k)}}  \;
\sum_{U \le D \cap C} \mu_\mL(\{0\},U) = |\{C \in \Sigma_q(n,k) \mid  C \cap D =\{0\}\}|.
\end{equation}
On the other hand, exchanging the summation order in the definition of $\Delta$ one obtains
\begin{equation} \label{delta2}
     \Delta=\sum_{U \in \mL} \mu_\mL(\{0\},U) \cdot |\{C \in \Sigma_q(n,k) \mid U \le C \}|.
\end{equation}
Let $U\in\mL$ be such that $\dim U=i\in\{1,\ldots,k\}$. Since $t<n$, the all-one vector~$\textbf{1}$ is not in~$U$ and thus  \cref{prop:SOContainx} implies
\begin{equation} \label{eq:meaning_delta}
    |\{C \in \Sigma_q(n,k) \mid U \le C \}| = \left\{\begin{array}{cl}
0 & \mbox{if $U$ is not self-orthogonal,} \\
\delta_q(n,k,i)
& \mbox{if $U$ is self-orthogonal,}
\end{array}\right.
\end{equation}
with $\delta_q(n,k,i)$ as in the statement.
Clearly, the second case only arises if $i\leq \hat{w}$.
In that case the number of $i$-dimensional self-orthogonal spaces $U \in \mL$ is $\sigma_q(t,i)$ and thus \eqref{delta2} leads to
\[
     \Delta = \sum_{i=0}^{\min\{k,\hat{w}\}} \sigma_q(t,i) \, \delta_q(n,k,i)\, (-1)^iq^{\binom{i}{2}}.
\]
Now $\big|\big\{C \in \Sigma_q(n,k) \mid  C \cap \F_q^n(S) \neq \{0\}\big\}\big|=\sigma_q(n,k)-\Delta$, and using that $\delta_q(n,k,0)=\sigma_q(n,k)$ (see Notation~\ref{nota:tau}), we arrive at the stated result.
\end{proof}

The following lemma will be needed to deal with the last case in \cref{prop:SOContainx}.

\begin{lemma}\label{lem:S1S2}
Let $q$ be even.
Let $S_1,\,S_2 \subseteq \{1,\ldots,n\}$ be distinct sets of size $1 \le t< n$.
Let~$A$ be a set of distinct representatives of the nonzero self-orthogonal vectors in $\F_q^n$, up to scalar multiples.
Let $A_i\subseteq A$ be the subsets representing the nonzero self-orthogonal elements in~$\F_q^n(S_i)$ for $i=1,2$.
Then
\[
   |\{(v_1,v_2)\mid v_i\in A_i,\, \textbf{1}\in\subspace{v_1,v_2}\}|\leq q^{\ell},
\]
where $\ell=2t-n-1$ if $|S_1\cap S_2|>0$ and $\ell=0$ if $|S_1\cap S_2|=0$.
\end{lemma}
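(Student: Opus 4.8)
The plan is to analyze the condition $\textbf{1}\in\langle v_1,v_2\rangle$ directly, splitting according to whether $v_1,v_2$ are scalar multiples of each other or linearly independent, and to exploit the structure of the supports $S_1,S_2$. First I would dispose of the easy case: if $v_1$ and $v_2$ span a $1$-dimensional space, then $\textbf{1}\in\langle v_1\rangle$ forces $\supp(\textbf{1})=\{1,\ldots,n\}\subseteq S_i$, contradicting $t<n$; so we may assume $v_1,v_2$ are linearly independent and $\textbf{1}=\alpha v_1+\beta v_2$ for some nonzero $\alpha,\beta\in\F_q$. Since $A$ consists of representatives up to scalars, for a fixed pair of lines $\langle v_1\rangle,\langle v_2\rangle$ with $\textbf{1}\in\langle v_1,v_2\rangle$ the representatives $v_1,v_2$ are determined, so I would instead count pairs of \emph{lines} (equivalently, count admissible pairs $(v_1,v_2)\in A_1\times A_2$, which is the same thing).

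The key structural observation is about where $v_1$ and $v_2$ can be nonzero. Outside $S_1$ the vector $v_1$ vanishes, and outside $S_2$ the vector $v_2$ vanishes; since $\textbf{1}=\alpha v_1+\beta v_2$ is nowhere zero, every coordinate outside $S_1$ must be "covered" by $v_2$ (so it lies in $S_2$) and vice versa, giving $S_1\cup S_2=\{1,\ldots,n\}$. If $S_1\cap S_2=\emptyset$ this means $S_1,S_2$ partition $\{1,\ldots,n\}$; then on $S_1$ we have $v_2=0$, hence $\textbf{1}|_{S_1}=\alpha v_1|_{S_1}$, so $v_1=\alpha^{-1}\textbf{1}|_{S_1}$ is \emph{uniquely} determined, and likewise $v_2$; this gives at most one admissible pair, matching $\ell=0$. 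If instead $|S_1\cap S_2|=s>0$, then on $S_1\setminus S_2$ (which has size $t-s$) we still have $v_1=\alpha^{-1}\textbf{1}$ forced, and similarly on $S_2\setminus S_1$; the only freedom is on $S_1\cap S_2$, where $v_1$ has $s$ free coordinates. Thus the number of choices of $v_1$ is at most $q^{s}$ (before dividing by the scalar $\alpha$, which I would account for to get $q^{s-1}$, or simply bound crudely), and once $v_1$ and the scalars $\alpha,\beta$ are chosen $v_2$ is determined. Also $\alpha$ ranges over at most $q-1$ values but is coupled to the choice of representative, so with care this yields $q^{s-1}$; combining with $|S_1\cup S_2|=n$, i.e. $2t-s=n$, gives $s=2t-n$ and hence the bound $q^{2t-n-1}=q^\ell$.

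The main obstacle I anticipate is the bookkeeping with scalars: since $A$ contains one representative per line, I must be careful not to overcount pairs $(v_1,v_2)$ versus pairs of lines versus parametrizations $(\alpha,\beta)$, and the exponent $\ell=2t-n-1$ (note the $-1$) has to come out exactly right rather than $2t-n$. The clean way is: the map $(\alpha,\beta)\mapsto \alpha v_1+\beta v_2$ shows that once the \emph{line} $\langle v_1\rangle$ is fixed, the constraint $\textbf{1}=\alpha v_1+\beta v_2$ together with $v_1|_{S_1\setminus S_2}$ being forced pins down the ratio $\alpha$ up to the scalar ambiguity already quotiented out in $A_1$; so the count of admissible $v_1\in A_1$ equals the number of lines through the $q^{2t-n}$-dimensional worth of vectors agreeing with $\alpha^{-1}\textbf{1}$ off $S_1\cap S_2$, which is $q^{2t-n}/(q-1)\le q^{2t-n-1}$ for $q\ge 2$, and then $v_2$ is determined. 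I would also double-check the degenerate possibility that $v_2$ could vanish on all of $S_1\cap S_2$ making $v_1,v_2$ dependent there — but linear independence over the whole space is what matters and is guaranteed by $\textbf{1}$ being a genuine combination of two distinct lines. Assembling these bounds gives the claimed inequality in both cases.
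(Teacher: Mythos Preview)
Your structural observations are correct and match the paper's proof: the condition $\textbf{1}\in\langle v_1,v_2\rangle$ forces $S_1\cup S_2=\{1,\ldots,n\}$, the vector $v_1$ must be a nonzero constant on $S_1\setminus S_2$, and once the line $\langle v_1\rangle$ is fixed there is at most one admissible $v_2\in A_2$. The disjoint case $S_1\cap S_2=\emptyset$ is also handled correctly.

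The gap is in the case $s:=|S_1\cap S_2|>0$, where your count is off by a factor of~$q$. Fixing the constant value on $S_1\setminus S_2$ and letting the $s$ coordinates on $S_1\cap S_2$ be free gives $q^{s}$ vectors, and these $q^{s}$ vectors determine $q^{s}$ \emph{distinct} lines (they all agree on $S_1\setminus S_2$, so no two are scalar multiples of each other). Varying the constant just runs through the other representatives of these same lines. Hence scalar bookkeeping alone yields $q^{s}=q^{2t-n}$, not $q^{2t-n-1}$; your inequality $q^{2t-n}/(q-1)\le q^{2t-n-1}$ is in fact reversed.

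The ingredient you are missing is the self-orthogonality of $v_1$, which you never invoke. Since $q$ is even, squaring is a field automorphism, so the equation $v_1\cdot v_1=\sum_{j\in S_1} v_{1,j}^2=\big(\sum_{j\in S_1} v_{1,j}\big)^2=0$ is a single \emph{linear} constraint on the coordinates of $v_1$. This pins down one of the $s$ coordinates on $S_1\cap S_2$ in terms of the others, cutting the count from $q^{s}$ to $q^{s-1}$ lines and giving exactly the bound $q^{2t-n-1}$. This is precisely how the paper obtains the extra factor of~$q^{-1}$.
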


\begin{proof}
Clearly $\textbf{1}\in\subspace{v_1,v_2}$ implies $S_1\cup S_2=[n]$, and thus $\hat{\ell}:=|S_1\cap S_2|=2t-n\geq0$.
Furthermore,  since $S_1$ and~$S_2$ are distinct, we have $\textbf{1}\in\langle v_1,v_2\rangle\Longleftrightarrow v_2\in \subspace{\textbf{1},v_1}$.
The existence of some $v_2\in A_2\cap\subspace{\textbf{1},v_1}$ implies that~$v_1$ is constant and nonzero  on $S_1\setminus S_2$.
Furthermore, if~$v_1$ is of that form, then $v_2\in A_2\cap\subspace{\textbf{1},v_1}$ is unique.
It remains to find an upper bound for the number of self-orthogonal vectors~$v_1\in\F_q^n(S_1)$ that are nonzero constant on $S_1\setminus S_2$.
\\
i) If~$\hat{\ell}=0$, then $S_1\setminus S_2=S_1$, and there are $q-1$ such vectors.
\\
ii) Let $\hat{\ell}>0$. Assume without loss of generality that $1\in S_1\cap S_2$.
Then we have  $q-1$ options for the value on $S_1\setminus S_2$ and $\smash{q^{\hat{\ell}-1}}$ options on $(S_1\cap S_2)\setminus\{1\}$ and at most~$1$ choice for the remaining entry to achieve self-orthogonality (recall  that~$q$ is even).
Thus there are at most $\smash{(q-1)q^{\hat{\ell}-1}}$ self-orthogonal vectors~$v_1\in\F_q^n(S_1)$ with the desired property.
\\
Accounting for scalar multiples of~$v_1$, we arrive at the stated upper bound.
\end{proof}

We are now ready to compute the asymptotic behavior of the density function of self-orthogonal codes with small minimum distance.

\begin{theorem}\label{thm:asympdense}
Let $4 \le d < n$, and $1 \le k \le \min\{n-d,w(\F_q^n,\,\cdot\,)\}$.
Then
\[
\frac{|\{C \in \Sigma_q(n,k) \mid d(C) \le d-1\}|}{\sigma_q(n,k)} \sim \binom{n}{d-1} q^{d+k-n-2} \quad \mbox{ as $q \rightarrow \infty$}.
\]
\end{theorem}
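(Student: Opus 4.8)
The goal is to count $k$-dimensional self-orthogonal codes whose minimum distance is at most $d-1$, i.e., codes that contain a nonzero vector of weight at most $d-1$. The natural approach is inclusion--exclusion over the supports. For a fixed support size $t$ with $1 \le t \le d-1$, a nonzero vector of weight exactly $t$ lies in $\F_q^n(S)$ for a unique $t$-subset $S$, so
\[
   |\{C \in \Sigma_q(n,k) \mid d(C) \le d-1\}|
   = \Big|\bigcup_{\substack{S \subseteq [n] \\ 1 \le |S| \le d-1}} \{C \in \Sigma_q(n,k) \mid C \cap \F_q^n(S) \neq \{0\}\}\Big|.
\]
The first move is to apply \cref{lem:prr2}, which evaluates each individual term $|\{C \in \Sigma_q(n,k) \mid C \cap \F_q^n(S) \neq \{0\}\}|$ as a signed sum involving $\sigma_q(t,i)$, $\delta_q(n,k,i)$, and $q^{\binom{i}{2}}$. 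The dominant contribution in that sum is the $i=1$ term, which equals $\sigma_q(t,1)\,\delta_q(n,k,1)$; using $\sigma_q(t,1) \sim q^{t-2}$ (from \cref{P-sigmaqnk}, since the leading behavior is $(q^{t-1}-1)/(q-1)$ up to type-dependent lower-order corrections) and $\delta_q(n,k,1) \sim q^{(k-1)(n-k-1)-\binom{k}{2}}$ (from \cref{cor:SigmaDotAsymp} and \cref{prop:AsympTau}), one gets that a single term is $\sim q^{t-2} \cdot q^{(k-1)(n-k-1)-\binom{k}{2}}$. Dividing by $\sigma_q(n,k) \sim q^{k(n-k)-\binom{k+1}{2}}$ gives $\sim q^{t-2+k-n}$, which is largest when $t = d-1$.

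Next I would handle the union. Since we are after asymptotics as $q \to \infty$, the plan is to show that the union is dominated by the single largest family, namely $t = d-1$, and that within that family the Bonferroni-type overcount from pairwise intersections is negligible. Concretely, there are $\binom{n}{d-1}$ sets $S$ of size $d-1$, each contributing $\sim \sigma_q(d-1,1)\,\delta_q(n,k,1) \sim \binom{n}{d-1}^{-1}\cdot\binom{n}{d-1}\,q^{d+k-n-3}\cdot(\text{const})$—more precisely the total over these sets, after dividing by $\sigma_q(n,k)$, tends to $\binom{n}{d-1} q^{d+k-n-2}$ once one checks that $\sigma_q(d-1,1)\,\delta_q(n,k,1)/\sigma_q(n,k) \sim q^{d+k-n-2}$. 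Supports of size $t < d-1$ contribute strictly lower-order terms (a factor $q^{t-(d-1)}$ smaller) and can be absorbed into the error. The overlaps $\{C \cap \F_q^n(S_1) \neq 0\} \cap \{C \cap \F_q^n(S_2) \neq 0\}$ for distinct $S_1, S_2$ force $C$ to contain either two vectors in different $\F_q^n(S_i)$ or a single vector in $\F_q^n(S_1 \cap S_2)$; in either case the count is of lower order because $\sigma_q$ and $\delta_q$ behave multiplicatively and the exponents drop. This is where the estimates must be done carefully but they are routine power-counting once the leading terms of $\sigma_q$, $\tau_q$, $\delta_q$ are in hand from the previous section.

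\textbf{Main obstacle.} The delicate point is the even-$q$, type (\NNA) case, where \cref{prop:SOContainx} shows that $\delta_q$ depends on whether $\textbf{1} \in U$, and in the inclusion--exclusion one encounters pairs $(v_1, v_2)$ of self-orthogonal vectors with $\textbf{1} \in \langle v_1, v_2 \rangle$; this is precisely what \cref{lem:S1S2} is designed to control. The plan is to invoke \cref{lem:S1S2} to bound the number of such "bad" pairs by $q^{2t-n-1}$ (or $q^0$ when the supports are disjoint), which is small enough that the associated correction to the count—where one must use $\sigma_q(n-2i+1,\cdot)$ instead of $\sigma_q(n-2i,\cdot)$ in \eqref{eq:meaning_delta}—is of strictly lower order than the main term $\binom{n}{d-1}q^{d+k-n-2}$. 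So the structure is: (1) write the union, (2) apply \cref{lem:prr2} termwise, (3) extract the $t=d-1$, $i=1$ leading term, (4) show all other support sizes, all higher $i$, all pairwise overlaps, and all $\textbf{1}$-correction terms are $o$ of the main term, invoking \cref{lem:S1S2} for the last of these in the (\NNA) case, and (5) divide by $\sigma_q(n,k)$ using \cref{cor:SigmaDotAsymp}. The only real subtlety is bookkeeping the type-dependent constants and confirming the factor-of-$2$ anomaly for type (H) at $k = n/2$ does not appear here (it cannot, since $k \le n-d < n/2$).
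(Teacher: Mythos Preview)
Your plan is correct and follows essentially the same route as the paper: inclusion--exclusion over supports, \cref{lem:prr2} for the single terms, identification of the $i=1$ leading contribution, and control of the pairwise overlaps via the $a+b$ splitting with \cref{lem:S1S2} handling the $\textbf{1}\in\langle v_1,v_2\rangle$ anomaly in type (\NNA). Two small remarks: (i) the paper takes the union only over the $(d-1)$-subsets, which already covers everything since $\F_q^n(S')\subseteq\F_q^n(S)$ for $S'\subseteq S$; your union over all $|S|\le d-1$ is redundant but harmless. (ii) Your claim that ``$k\le n-d<n/2$'' is not generally true (take $n=8$, $d=4$, $k=4$); the factor $2$ for type (H) at $k=n/2$ \emph{can} appear, but it shows up in both $\delta_q(n,k,1)$ and $\sigma_q(n,k)$ and cancels in the ratio, which is how the paper disposes of it.
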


\begin{proof}
1) Let $S_1,...,S_L$ be the $(d-1)$-subsets of $\{1,\ldots,n\}$. For a non-empty $J \subseteq \{1,\ldots,L\}$ we define
\[
\Gamma_J=\{C \in \Sigma_q(n,k) \mid C \cap \F_q^n(S_j) \neq \{0\} \mbox{ for all $j \in J$}\}
\]
and set $\smash{\Gamma_j:=\Gamma_{\{j\}}}$ for $j \in \{1,\ldots,L\}$.
Note that for any $J\subseteq\{1,\ldots,L\}$ we have $\smash{\bigcap_{j \in J}\Gamma_j=\Gamma_J}$.
By definition,
\[
\{C \in \Sigma_q(n,k) \mid d(C) \le d-1\} = \bigcup_{j=1}^L \Gamma_j.
\]
By the inclusion-exclusion principle,
\begin{equation}\label{e-inclexcl}
|\{C \in \Sigma_q(n,k) \mid d(C) \le d-1\}| =
\sum_{j=1}^L |\Gamma_j| - \sum_{|J|=2} |\Gamma_J|+\sum_{|J|=3}|\Gamma_J|-\ldots.
\end{equation}
In the rest of the proof we will examine the asymptotics of the right hand side of~\eqref{e-inclexcl}. It will turn out that it is given by the asymptotics of the first sum.
For that, the assumption $k\leq n-d$ will be needed.
\\[.5ex]
2) We evaluate the first sum on the RHS of~\eqref{e-inclexcl}.
By Lemma~\ref{lem:prr2}, it is given by
\begin{equation} \label{sum1}
\sum_{j=1}^L |\Gamma_j| = \binom{n}{d-1} \, \sum_{i=1}^{\min\{\hat{w},k\}} \sigma_q(d-1,i) \, \delta_q(n,k,i) \, (-1)^{i-1} q^{\binom{i}{2}},
\end{equation}
where $\delta_q(n,k,i)$ is as in that lemma and $\hat{w}=w(\F_q^{d-1},\,\cdot\,)$.
We now turn to its asymptotics.
Note that by \cref{prop:AsympTau} and \cref{cor:SigmaDotAsymp} we have
\begin{equation}\label{e-deltaAsymp1}
     \delta_q(n,k,i)\sim c q^{(k-i)(n-k-i)-\binom{k-i+1}{2}},
\end{equation}
where
\begin{equation}\label{e-deltaAsymp2}
     c=\left\{\begin{array}{cl}2&\text{if $\type(q,n)=$ (H) and $k=w=n/2$ and $i<k$},\\ 1&\text{otherwise.}\end{array}\right.
\end{equation}
Using once more \cref{cor:SigmaDotAsymp} we obtain
\[
   \sigma_q(d-1,i)\delta_q(n,k,i)q^{\binom{i}{2}}\sim \hat{c} q^{i(d-n+k-2)+k(n-k)-\binom{k+1}{2}-\binom{i}{2}}
\]
for some positive constant~$\hat{c}$ (which equals~$c$ whenever $i<(d-1)/2$).
Note that since $d-1\geq 3$, no additional factor~$2$ for $\sigma_q(d-2,1)$ arises; see \cref{cor:SigmaDotAsymp}.
Using that $d-n+k-2<0$, one easily verifies that on the interval $[1,\infty)$ the function $f(x)=x(d-n+k-2)-x(x-1)/2$ attains its maximum at $x=1$.
This implies that the term for $i=1$ determines the asymptotics of the sum in~\eqref{sum1}.
Taking the constant into account we obtain from \eqref{sum1}--\eqref{e-deltaAsymp2}
\begin{equation}\label{e-AsympFirstSum}
    \sum_{j=1}^L |\Gamma_j|  \sim \alpha\binom{n}{d-1} q^{d+k-n-2+k(n-k)-\binom{k+1}{2}},
\end{equation}
where
\begin{equation}\label{e-alpha}
    \alpha=\left\{\begin{array}{cl}2 &\text{if $k=n/2$ and $\type(q,n)=$ (H)},\\ 1&\text{otherwise.}\end{array}\right.
\end{equation}
3) We derive an upper bound for the remaining terms in \eqref{e-inclexcl}.
Let $|J|\geq 2$.
Then $J$ contains a subset of size~2.
Without loss of generality we may assume $\{1,2\}\subseteq J$. Then $|\Gamma_J|\leq |\Gamma_{\{1,2\}}|$ and it will suffice to derive an upper bound for
$|\Gamma_{\{1,2\}}|$.
Let~$A,\,A_1$, and~$A_2$ be as in \cref{lem:S1S2}.
Then
$|\Gamma_{\{1,2\}}| \le a + b$, where
\begin{align*}
a &= |\{C \in \Sigma_q(n,k) \mid C \cap (A_1 \cap A_2) \neq \emptyset\}|, \\
b &= |\{C \in \Sigma_q(n,k) \mid C \cap (A_1 \setminus A_2) \neq \emptyset, \, C \cap (A_2  \setminus A_1) \neq \emptyset\}|.
\end{align*}
For $t \ge 1$, set $\zeta_q(t)= \sum_{i=0}^t \zeta_q(t,i)$, which is the number of self-orthogonal vectors in $\F_q^t$.
Thus $|A_i|=(\zeta_q(d-1)-1)/(q-1)$.
 \cref{prop:SOContainx} together with the fact that $S_1 \neq S_2$ provides us with the upper bound
\begin{align} \label{e-Ua}
   a \le |A_1 \cap A_2|\, \delta_q(n,k,1) \le U_a:=\frac{\zeta_q(d-2)-1}{q-1} \, \delta_q(n,k,1).
\end{align}
We now turn to~$b$.
First of all, we have $b \le \sum_{v \in A_1 \setminus A_2,\,w \in A_2 \setminus A_1}\gamma(v,w)$, where
\begin{equation}\label{e-gammavw}
      \gamma(v,w)= |\{C \in \Sigma_q(n,k)\mid \subspace{v,w} \le C \}|.
\end{equation}
In order to apply \cref{prop:SOContainx} we also need to take 
into account 
the all-1-vector $\textbf{1}$
when $\type(q,n)=(\NNA)$. 
For that type, we have
\begin{align}
   \sum_{v \in A_1 \setminus A_2,\;w \in A_2 \setminus A_1}\gamma(v,w)
     &=\sum_{\substack{v \in A_1 \setminus A_2,\;w \in A_2 \setminus A_1\\ \textbf{1}\in\langle v,w\rangle}}\gamma(v,w)
       +\sum_{\substack{v \in A_1 \setminus A_2,\;w \in A_2 \setminus A_1\\ \textbf{1}\not\in\langle v,w\rangle}}\gamma(v,w) \nonumber\\
     &\leq\sum_{\substack{v \in A_1 \setminus A_2,\;w \in A_2 \setminus A_1\\ \textbf{1}\in\langle v,w\rangle}}\hspace*{-1.5em}\sigma_q(n-3,k-2)
       +\sum_{\substack{v \in A_1 \setminus A_2,\;w \in A_2 \setminus A_1\\ \textbf{1}\not\in\langle v,w\rangle}}\hspace*{-1.5em}\delta_q(n,k,2).
       \label{e-Boundb}
\end{align}
Consider the first sum.
By \cref{lem:S1S2} there are at most $q^\ell$ pairs $(v,w)$ such that $\textbf{1}\in\subspace{v,w}$, where
$\ell=2d-n-3$ if $S_1\cap S_2\not=\emptyset$ and $\ell=0$ otherwise.
Thus the first sum in \eqref{e-Boundb} is upper bounded by $q^{\ell}\sigma_q(n-3,k-2)$.
For the second sum it will suffice to use the upper bound $((\zeta_q(d-1)-1)/(q-1))^2\delta_q(n,k,2)$, which does not even make use of the
restriction $\textbf{1}\not\in\langle v,w\rangle$.
With this generous upper bound (and by using again Proposition~\ref{prop:SOContainx}) we obtain $b\leq U'_b+U_b$ for each type of $(q,n)$, where
\begin{equation}\label{e-Ub}
    U'_b:=q^{\ell}\sigma_q(n-3,k-2)\ \text{ and }\ U_b=\Big(\frac{\zeta_q(d-1)-1}{q-1}\Big)^2\delta_q(n,k,2).
\end{equation}
Now we have
\begin{equation}\label{e-OtherTerms}
     |\Gamma_J| \le U_a+U'_b+U_b\ \text{ for all $J$ with $|J|\geq2$.}
\end{equation}
4) We turn to the asymptotics of~\eqref{e-OtherTerms} as $q\rightarrow\infty$.
Using \eqref{e-deltaAsymp1} together with \cref{prop:AsympZeta} and \cref{cor:SigmaDotAsymp} we obtain
\begin{align}
   U_a&\sim c_1 q^{(k-1)(n-k-1)-\binom{k}{2}+d-4},  \label{e-Uasim}\\
   U_b &\sim c_2  q^{(k-2)(n-k-2)-\binom{k-1}{2}+2d-6},\label{e-Ubsim}\\
    U'_b &\sim c_3  q^{(k-2)(n-k-1)-\binom{k-1}{2}+\ell},\label{e-U'bsim}
\end{align}
for some positive constant~$c_1$ and non-negative constants~$c_2,c_3$ (where we use the convention $f\sim 0$ if $f=0$).
Recall that if $|S_1\cap S_2|=0$, then $\ell=0$. In that case we have $2(d-1)=n$ and thus $k\leq n-d=n/2-1$.
If $|S_1\cap S_2|>0$, then $\ell=2d-n-3$ and we have the general assumption $k\leq n-d$.
Using these facts one verifies that the asymptotics of $U_a$ always dominates that of~$U'_b$ and either dominates or is equal to that of $U_b$.
Now  \eqref{e-OtherTerms} tells us that
\begin{equation}\label{e-GammaGrowth}
    \text{$|\Gamma_J|$ does not grow faster than $\smash{q^{(k-1)(n-k-1)-\binom{k}{2}+d-4}}$ whenever $|J|\geq2$.}
\end{equation}
5) Now we can discuss the asymptotics of~\eqref{sum1}.
It is easy to verify that the asymptotics of \eqref{e-AsympFirstSum} dominates the one in \eqref{e-GammaGrowth} and therefore
\begin{equation}\label{e-FinalAsymp}
   |\{C \in \Sigma_q(n,k) \mid d(C) \le d-1\}|\sim\alpha\binom{n}{d-1} q^{d+k-n-2+k(n-k)-\binom{k+1}{2}}
\end{equation}
with~$\alpha$ as in \eqref{e-alpha}. Dividing by $\sigma_q(n,k)$ we finally arrive at the desired statement.
\end{proof}

It remains to consider the case where $k=n-d+1$, i.e., the case of MDS self-orthogonal codes.
In that situation, the above proof strategy gives us only partial results. 
The next theorem refines the argument of 
Theorem~\ref{thm:asympdense}
to treat the remaining case.

\begin{theorem}\label{thm:asympdense2}
Let $4 \le d < n$ and $k=n-d+1\leq w(\F_q^n,\,\cdot\,)$.
\begin{alphalist}
\item Let $k<n/2$. Then
\[
  \frac{|\{C \in \Sigma_q(n,k) \mid d(C) \le d-1\}|}{\sigma_q(n,k)} \sim \binom{n}{k} q^{-1} \quad \text{ as $q \rightarrow \infty$}.
\]
\item Let $k=n/2$. Then
     \[
       \binom{n}{d-2}q^{-2}\leq \limsup_{q\rightarrow\infty}\frac{|\{C \in \Sigma_q(n,k) \mid d(C) \le d-1\}|}{\sigma_q(n,k)}\leq \binom{n}{d-1}q^{-1}.
\]
\end{alphalist}
\end{theorem}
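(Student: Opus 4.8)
The plan is to revisit the inclusion–exclusion expansion \eqref{e-inclexcl} from the proof of \cref{thm:asympdense} and push the estimates one order finer, since in the MDS case $k = n-d+1$ the leading term in \eqref{e-AsympFirstSum} has exponent $d+k-n-2+k(n-k)-\binom{k+1}{2} = -1 + k(n-k)-\binom{k+1}{2}$, which after dividing by $\sigma_q(n,k) \sim \alpha q^{k(n-k)-\binom{k+1}{2}}$ gives the announced $q^{-1}$ behavior (with the factor $2$ in $\alpha$ exactly cancelling the factor $2$ in $\sigma_q$ when $\type(q,n)=(\text{H})$ and $k=n/2$, which is why the constant in the statement is a clean $\binom{n}{k}$). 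For part~(a), $k<n/2$, the argument is essentially identical to that of \cref{thm:asympdense}: the bound \eqref{e-GammaGrowth} still shows that every $|\Gamma_J|$ with $|J|\ge 2$ is dominated by the first sum. The only subtlety is that now $k=n-d+1$ rather than $k\le n-d$, so one must re-examine whether $U_a$, $U_b'$, $U_b$ in \eqref{e-Ua}, \eqref{e-Ub} are still negligible. First I would check that for $|S_1\cap S_2|>0$ we now have $\ell = 2d-n-3 = 2(n-k+1)-n-3 = n-2k-1$, and that with $k<n/2$ this remains strictly less than the relevant threshold, so that $U_a$ dominates $U_b'$ and the asymptotics of the first sum still wins. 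I would then conclude \eqref{e-FinalAsymp} with $d+k-n-2 = -1$ and divide by $\sigma_q(n,k)$ as before.

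For part~(b), $k=n/2$, the obstruction is genuine: now $d = n-k+1 = n/2+1$, so $\ell = 2d-n-3 = n-2k-1 = -1$ when $|S_1\cap S_2|>0$, meaning the bound $U_b' = q^\ell \sigma_q(n-3,k-2)$ is of the same asymptotic order as the leading term, and the inclusion–exclusion cancellations can no longer be ignored. Hence one can only extract an upper bound from the first sum alone (every later sum is subtracted or bounded crudely), giving the $\binom{n}{d-1}q^{-1}$ on the right. For the lower bound $\binom{n}{d-2}q^{-2}$, the plan is to restrict attention to self-orthogonal codes $C$ with $d(C)\le d-2$ rather than $\le d-1$: these are counted by a version of \cref{thm:asympdense} with $d$ replaced by $d-1$, in which case $k = n-(d-1) = n-d+1$ gives $k \le n-(d-1)$, i.e.\ the hypothesis $k\le n-d'$ of \cref{thm:asympdense} holds with $d'=d-1\ge 3$. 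Applying \cref{thm:asympdense} (or re-running its proof) with $d'=d-1$ yields $|\{C\in\Sigma_q(n,k)\mid d(C)\le d-2\}| \sim \alpha\binom{n}{d-2}q^{(d-1)+k-n-2+k(n-k)-\binom{k+1}{2}} = \alpha\binom{n}{d-2}q^{-2+k(n-k)-\binom{k+1}{2}}$, and since $\{C\mid d(C)\le d-2\}\subseteq\{C\mid d(C)\le d-1\}$, dividing by $\sigma_q(n,k)\sim\alpha q^{k(n-k)-\binom{k+1}{2}}$ gives the stated $\liminf$ (hence $\limsup$) lower bound $\binom{n}{d-2}q^{-2}$.

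The main obstacle is the $k=n/2$ case of part~(b): precisely because $\ell=-1$ makes the pairwise intersection terms $|\Gamma_J|$ comparable to the main term, one cannot determine the exact asymptotic constant, only sandwich it between $q^{-2}$ and $q^{-1}$ orders. Establishing the lower bound cleanly requires noticing that lowering the distance threshold by one restores the strict-inequality hypothesis of \cref{thm:asympdense}, which is the key trick; the upper bound is the free half of inclusion–exclusion (truncating after the first sum, which over-counts). A minor point to be careful about throughout: one must track the constant $\alpha\in\{1,2\}$ from \eqref{e-alpha} and verify it cancels against the matching constant in $\sigma_q(n,k)$ from \cref{cor:SigmaDotAsymp}, so that both the $q^{-1}$ estimate in~(a) and the bounds in~(b) come out with the binomial coefficients as written and no stray factor of $2$.
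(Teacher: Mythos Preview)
Your treatment of part~(b) is essentially correct and matches the paper: the upper bound is the trivial inclusion--exclusion truncation $|\bigcup_j\Gamma_j|\le\sum_j|\Gamma_j|$, and the lower bound comes from the containment $\{d(C)\le d-2\}\subseteq\{d(C)\le d-1\}$ together with \cref{thm:asympdense} applied at distance $d-1$. One correction: the term that collides with the leading order is $U_b$, not $U_b'$. With $k=n-d+1$ one computes directly from \eqref{e-Uasim}--\eqref{e-U'bsim} that the exponents become
\[
U_a\sim c_1 q^{k(n-k)-\frac{k^2+k+4}{2}},\qquad
U_b\sim c_2 q^{k(n-k)-\frac{k^2+k+2}{2}},\qquad
\Psi\sim \alpha\binom{n}{d-1}q^{k(n-k)-\frac{k^2+k+2}{2}},
\]
so $U_b$ and $\Psi$ have the \emph{same} exponent, while $U_b'$ is strictly smaller. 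This misidentification does not harm your argument for~(b), since you never actually use which term is problematic there.

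It does, however, break your argument for part~(a). Your claim that ``the argument is essentially identical to that of \cref{thm:asympdense}'' and that it suffices to check $U_a$ dominates $U_b'$ is false: the crude bound $U_b$ on the number of pairs $(v,w)\in A_1\times A_2$ is now of the same order as the first sum $\Psi$, so \eqref{e-GammaGrowth} no longer yields a strictly smaller exponent and the inclusion--exclusion does not collapse to its first term. The paper supplies the missing idea: one refines the bound on $b$ by observing that $\gamma(v,w)=0$ unless $\langle v,w\rangle$ is itself self-orthogonal, and then bounds the number of such pairs. Concretely, one splits according to whether $\wt(v)<d-1$ or $\wt(w)<d-1$ (which costs a factor $\zeta_q(d-2)$ in place of $\zeta_q(d-1)$, gaining one power of $q$), or both have full support $d-1$, in which case for fixed $v$ the orthogonality constraint $v\cdot w=0$ on the overlap $S_1\cap S_2$ cuts the count of admissible $w$ by a factor of~$q$. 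Both pieces are then $O(q^{k(n-k)-\frac{k^2+k+4}{2}})$, strictly below $\Psi$, and the proof of~(a) concludes as in \cref{thm:asympdense}. Without this refinement your proposed proof of~(a) has a genuine gap.
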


\begin{proof}
We now revisit the proof of Theorem~\ref{thm:asympdense} and adjust it to that case where $k=n-d+1$.
The only place where we made use of $k\leq n-d$ is at the end of Step~4), where the assumption guarantees that the asymptotics of $U_a$ provide an upper bound for the asymptotics of~$|\Gamma_{J}|$ if $|J|\geq2$. 
Let now $k=n-d+1$. 
Then the asymptotics of~$U_a$, $U_b$, and  $\Psi:=\sum_{j=1}^L|\Gamma_j|$ in~\eqref{e-AsympFirstSum} read as follows:
\[
   U_a\sim c_1q^{k(n-k)-\frac{k^2+k+4}{2}},\quad \ U_b\sim c_2q^{k(n-k)-\frac{k^2+k+2}{2}},\quad \
   \Psi\sim\alpha\binom{n}{d-1}q^{k(n-k)-\frac{k^2+k+2}{2}}.
\]
As for $U'_b$ recall that $\ell=2d-n-3$ if $S_1\cap S_2\not=\emptyset$ and $\ell=0$ otherwise.
This leads to 
\[
     U'_b\sim\left\{\begin{array}{cl}c_3q^{k(n-k)-\frac{k^2-k+2n}{2}}&\text{if }S_1\cap S_2\not=\emptyset,\\[.4ex]
           c_3q^{k(n-k)-\frac{k^2-5k+4n-2}{2}}&\text{if }S_1\cap S_2=\emptyset.\end{array}\right.
\]
Clearly, the asymptotics of~$U_b$ dominates that of~$U_a$. 
Moreover, it is easily seen that it dominates that of~$U'_b$ in all cases except for $n=4$ and $k=2$, in which case $U_b$ and $U'_b$ have the same asymptotics.
\\[.5ex]
(b) Let $k=n-d+1=n/2$. The above discussion shows that the asymptotics of  \eqref{e-inclexcl} is at most that of the first term, which is~$\Psi$.
Hence
\[
   T:=\limsup_{q\rightarrow\infty}\frac{|\{C \in \Sigma_q(n,k) \mid d(C) \le d-1\}|}{\sigma_q(n,k)}\leq \binom{n}{d-1}q^{-1}.
\]
On the other hand, the set $\{C \in \Sigma_q(n,k) \mid d(C) \le d-1\}$ contains $\{C \in \Sigma_q(n,k) \mid d(C) \le d-2\}$. Replacing in \cref{thm:asympdense} the distance~$d$ by $\hat{d}=d-1$ and noticing that $k= n-\hat{d}$, that theorem implies
\[
    T\geq
    \lim_{q\rightarrow\infty}\frac{|\{C \in \Sigma_q(n,k) \mid d(C) \le \hat{d}-1\}|}{\sigma_q(n,k)}=\binom{n}{\hat{d}-1}q^{-2}.
\]
(a)
Let now $k=n-d+1<n/2$. Then $d-1>n/2$ and thus $S_1\cap S_2\not=\emptyset$. In this case the asymptotics of  $U'_b$ (and of $U_a$) 
is dominated by that of~$\Psi$.
Thus it suffices to derive a refinement of $U_b$, which is an upper bound for the second sum in \eqref{e-Boundb}. 
We will do so by taking into account that $\gamma(v,w)=0$ if $\subspace{v,w}$ is not self-orthogonal and thus will only count the pairs 
$(v,w)$ such that the subspace $\subspace{v,w}$ is self-orthogonal.
Set $B:= \sum_{(v,w)\in\mV}\gamma(v,w)$, where 
\[
      \mV=\{(v,w)\mid  v \in A_1 \setminus A_2,\,w \in A_2 \setminus A_1,\,\subspace{v,w}\in\Sigma_q(n,2),\,\textbf{1}\not\in\subspace{v,w}\}.
\]
We compute
\begin{equation}\label{e-NewBound-b}
  B= \sum_{\substack{(v,w)\in\mV\\ \wt(v)<d-1\text{ or }\wt(w)<d-1}}\hspace*{-3em}\gamma(v,w)\quad
  +\sum_{\substack{(v,w)\in\mV\\ \wt(v)=\wt(w)=d-1}}\hspace*{-1.5em}\gamma(v,w).
\end{equation}
The first summand is upper bounded by
\[
    \frac{\zeta_q(d-2)-1}{q-1}\,\frac{\zeta_q(d-1)-1}{q-1}\delta_q(n,k,2)\sim q^{d-4+d-3+(k-2)(n-k-2)-\binom{k-1}{2}}=q^{k(n-k)-\frac{k^2+k+4}{2}},
\]
and it remains to consider the second term.
Recall the sets $S_1$ and $S_2$ intersect nontrivially.
Without loss of generality let $S_1\cap S_2=\{1,\ldots,\hat{\ell}\}$ and $S_2=\{1,\ldots,d-1\}$.
We have
\[
   |\{v\in A_1\mid \wt(v)=d-1\}|=\frac{\zeta_q(d-1,d-1)-1}{q-1}
\]
and we want to upper bound the number of vectors
$w\in A_2$ of weight $d-1$ such that the subspace $\subspace{v,w}$ is self-orthogonal.
We need~$(w_1,\ldots,w_{\hat{\ell}})$ to be orthogonal to $(v_1,\ldots,v_{\hat{\ell}})$.
There exist $\smash{q^{\hat{\ell}-1}}$ such vectors $\smash{(w_1,\ldots,w_{\hat{\ell}})}$.
Each such vector can be arbitrarily extended to a vector $\smash{(w_1,\ldots,w_{d-2})}$ in $\smash{q^{d-2-\hat{\ell}}}$ ways, and then there are at most 2 ways to
extend it to a self-orthogonal vector with support contained in~$S_2=\{1,\ldots,d-1\}$.
Taking scalar multiples into account (and disregarding the weight of~$w$), we arrive at
\[
   |\{w\in A_2\mid \subspace{v,w}\text{ self-orthogonal}\}|\leq 2\frac{q^{d-3}-1}{q-1}\sim 2 q^{d-4}.
\]
All of this tells us that the second term in \eqref{e-NewBound-b} is upper bounded by
\[
   2\frac{\zeta_q(d-1,d-1)-1}{q-1}\,\frac{q^{d-3}-1}{q-1}\delta_q(n,k,2)\sim 2q^{d-3+d-4+(k-2)(n-k-2)-\binom{k-1}{2}}=2q^{k(n-k)-\frac{k^2+k+4}{2}}.
\]
Now we can conclude that the asymptotics of the right hand side in \eqref{e-inclexcl} is again dominated by that of the first sum, which in turn
is given in \eqref{e-AsympFirstSum}. As in the previous proof we arrive at~\eqref{e-FinalAsymp}, which leads to the stated result.
\end{proof}

As an immediate corollary of the previous two theorems (and of the famous Singleton bound), we obtain that self-orthogonal codes with good distance properties are dense within the set of self-orthogonal codes sharing the same dimension. More precisely, the following hold.

\begin{corollary}
Let $4 \le d<n$ and $1 \le k \le \min\{n-d+1,w(\F_q^n)\}$.
Then
\[
\frac{|\{C \in \Sigma_q(n,k) \mid d(C) \le d-1\}|}{\sigma_q(n,k)} \in O \left(q^{d+k-n-2} \right) \quad \mbox{ as $q \rightarrow \infty$}.
\]
In particular, a uniformly random $k$-dimensional self-orthogonal code $C \le \F_q^n$ has minimum distance $n-k+1$ with probability approaching 1 as $q \rightarrow \infty$.
\end{corollary}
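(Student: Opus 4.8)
The plan is to read the corollary off from \cref{thm:asympdense} and \cref{thm:asympdense2}, splitting into the two regimes $k\le n-d$ and $k=n-d+1$ and then invoking the Singleton bound for the ``in particular'' part. First I would treat the case $1\le k\le\min\{n-d,\,w(\F_q^n,\,\cdot\,)\}$: here \cref{thm:asympdense} applies verbatim and gives that the ratio in question is asymptotic to $\binom{n}{d-1}q^{d+k-n-2}$, so it certainly lies in $O(q^{d+k-n-2})$, and since $d+k-n-2\le-2$ in this regime it tends to $0$.

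Next I would handle the remaining case $k=n-d+1\le w(\F_q^n,\,\cdot\,)$, in which $d+k-n-2=-1$. If $k<n/2$, part~(a) of \cref{thm:asympdense2} gives that the ratio is asymptotic to $\binom{n}{k}q^{-1}$, and since $\binom{n}{k}=\binom{n}{n-d+1}=\binom{n}{d-1}$ this is exactly $\binom{n}{d-1}q^{d+k-n-2}$, hence in $O(q^{d+k-n-2})$. If $k=n/2$, part~(b) of \cref{thm:asympdense2} shows in particular that the ratio is $O(q^{-1})=O(q^{d+k-n-2})$ (the stated two-sided estimate pins down the order of growth as $q^{-1}$). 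In all cases the exponent $d+k-n-2$ is negative, which is what the first displayed assertion claims, and moreover the ratio $\to 0$.

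For the ``in particular'' statement I would argue as follows. By the Singleton bound every nonzero $[n,k]_q$-code $C$ satisfies $d(C)\le n-k+1$, so a $k$-dimensional self-orthogonal code has $d(C)=n-k+1$ unless $d(C)\le n-k$. For $k\ge 2$ I would apply the first part of the corollary with $d$ replaced by $\hat d:=n-k+1$; this is admissible because $4\le\hat d<n$ (the lower bound since $k\le n-d+1\le n-3$, the upper bound since $k\ge 2$) and $k\le\min\{n-\hat d+1,\,w(\F_q^n,\,\cdot\,)\}=\min\{k,\,w(\F_q^n,\,\cdot\,)\}=k$, whence the proportion of $C$ with $d(C)\le n-k$ lies in $O(q^{\hat d+k-n-2})=O(q^{-1})$ and tends to $0$. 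The corner case $k=1$ (for which $\hat d=n$ falls outside the allowed range) I would dispatch directly: a $1$-dimensional self-orthogonal code is $\langle v\rangle$ with $v\neq 0$ self-orthogonal and has minimum distance $\wt(v)$, so the relevant proportion equals $\zeta_q(n,n)/\big((q-1)\,\sigma_q(n,1)\big)$, which by \cref{prop:AsympZeta} and \cref{cor:SigmaDotAsymp} is asymptotic to $q^{n-1}/\big((q-1)q^{n-2}\big)\to 1$.

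I do not expect a real obstacle here: all the analytic work is already done in \cref{thm:asympdense} and \cref{thm:asympdense2}, and the corollary is essentially a bookkeeping consequence of them. The only points that need care are checking that the exponent $d+k-n-2$ is genuinely negative in each regime (so that the $O$-bound forces the probability to vanish), matching up the type-dependent leading constants, and not overlooking the $k=1$ boundary of the ``in particular'' claim.
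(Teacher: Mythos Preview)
Your proposal is correct and follows exactly the approach the paper indicates: the corollary is read off directly from \cref{thm:asympdense} and \cref{thm:asympdense2} together with the Singleton bound. Your explicit handling of the boundary case $k=1$ for the ``in particular'' claim is a welcome addition that the paper leaves implicit. One small quibble: in the $k=n/2$ case your parenthetical remark that part~(b) of \cref{thm:asympdense2} ``pins down the order of growth as $q^{-1}$'' overstates what is proved there (the lower bound is of order $q^{-2}$), but this is irrelevant since only the upper bound $O(q^{-1})$ is needed for the corollary.
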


We conclude this paper by mentioning that with the same approach as above (and in fact without requiring the tedious case-by-case analyses) one can prove that ``unrestricted'' codes follow the same asymptotics as self-orthogonal ones in \cref{thm:asympdense}.

\begin{theorem}
Let $2 \le d \le n$ and $1 \le k \le n-d+1$. Then
\[
\frac{|\{C \le \F_q^n \mid \dim(C)=k, \,  d(C) \le d-1\}|}{\GaussianD{n}{k}_q} \sim \binom{n}{d-1} q^{d+k-n-2} \quad \mbox{ as $q \rightarrow \infty$}.
\]
\end{theorem}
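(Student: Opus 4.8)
The strategy is to mimic the proof of \cref{thm:asympdense}, but without the complications coming from self-orthogonality. Let $S_1,\dots,S_L$ be the $(d-1)$-subsets of $\{1,\dots,n\}$, where $L=\binom{n}{d-1}$, and for a non-empty $J\subseteq\{1,\dots,L\}$ set
\[
  \Gamma_J=\{C\le\F_q^n\mid \dim(C)=k,\ C\cap\F_q^n(S_j)\neq\{0\}\text{ for all }j\in J\},
\]
writing $\Gamma_j=\Gamma_{\{j\}}$. As before, $\bigcap_{j\in J}\Gamma_j=\Gamma_J$ and $\{C\le\F_q^n\mid \dim(C)=k,\ d(C)\le d-1\}=\bigcup_{j=1}^L\Gamma_j$, so inclusion-exclusion reduces everything to estimating $|\Gamma_J|$ for all $J$.

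\textbf{First sum.} The first step is to evaluate $\sum_{j=1}^L|\Gamma_j|$. For a single $(d-1)$-set $S$, the number of $k$-dimensional codes meeting $\F_q^n(S)$ nontrivially can be computed by the same M\"obius-inversion argument used in \cref{lem:prr2}, but now the count of $k$-dimensional codes containing a fixed $i$-dimensional subspace is simply $\Gaussian{n-i}{k-i}_q$ (no self-orthogonality constraint, no special role for $\mathbf 1$). Thus
\[
  |\Gamma_j|=\sum_{i=1}^{\min\{k,d-1\}}\Gaussian{d-1}{i}_q\,\Gaussian{n-i}{k-i}_q\,(-1)^{i-1}q^{\binom i2},
\]
and using $\Gaussian ab_q\sim q^{b(a-b)}$ one checks that the $i=1$ term dominates, giving $|\Gamma_j|\sim q^{(d-2)+(k-1)(n-k)}=q^{d+k-n-2}\,q^{k(n-k)-\binom{k+1}{2}+\binom k2+\cdots}$; more precisely $\sum_{j=1}^L|\Gamma_j|\sim\binom{n}{d-1}q^{d+k-n-2}\Gaussian nk_q$, since $\Gaussian nk_q\sim q^{k(n-k)}$. (One should double-check that the exponent from the $i=1$ term, namely $(d-1-1)+(n-1-k+1)(k-1)=d-2+(n-k)(k-1)$, equals $d+k-n-2+k(n-k)$, which it does.)

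\textbf{Higher terms and conclusion.} The remaining step is to show that $|\Gamma_J|$ for $|J|\ge 2$ is asymptotically negligible compared with the first sum. Exactly as in Step~3 of \cref{thm:asympdense}, it suffices to bound $|\Gamma_{\{1,2\}}|$ for two distinct $(d-1)$-sets $S_1\neq S_2$. Writing $D_i=\F_q^n(S_i)$, a code in $\Gamma_{\{1,2\}}$ either meets $D_1\cap D_2=\F_q^n(S_1\cap S_2)$ nontrivially, or contains a nonzero vector of $D_1$ and an independent nonzero vector of $D_2$; the first case is bounded by (number of lines in $D_1\cap D_2$) times $\Gaussian{n-1}{k-1}_q$, and the second by (number of lines in $D_1$)(number of lines in $D_2$)$\Gaussian{n-2}{k-2}_q$. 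Since $|D_1\cap D_2|\le q^{d-2}$ and $|D_i|=q^{d-1}$, these give upper bounds of order $q^{(d-3)+(k-1)(n-k)}$ and $q^{(2d-4)+(k-2)(n-k)}$ respectively; using $k\le n-d+1$, i.e.\ $d+k-n-1\le 0$, one verifies both exponents are strictly smaller than $d+k-n-2+k(n-k)-\binom{k+1}{2}+\binom k2$, hmm—more cleanly, both are $o(q^{k(n-k)+d+k-n-2})$ after the appropriate normalisation, so the first sum dominates. The number of pairs, triples, etc.\ is a constant independent of $q$, so all higher-order terms in inclusion-exclusion are absorbed. Dividing the resulting asymptotics $\sum_{j=1}^L|\Gamma_j|\sim\binom{n}{d-1}q^{d+k-n-2}\Gaussian nk_q$ by $\Gaussian nk_q$ yields the claim.

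\textbf{Main obstacle.} There is no genuine obstacle here: every difficult feature of the self-orthogonal case—the type dependence, the special treatment of the all-ones vector in type (\NNA), the borderline case $k=n/2$—simply disappears, because $\Gaussian nk_q$ and the subspace-counting formulas carry no hidden combinatorial constants. The only point requiring a little care is the exponent bookkeeping in Step~3: one must confirm that $k\le n-d+1$ (rather than the stricter $k\le n-d$ needed in \cref{thm:asympdense}) already suffices to make the $|J|\ge2$ terms subdominant, which it does precisely because the unrestricted counts lose a full factor of $q$ per extra constraint dimension without any compensating subtlety. This is why the statement can be proved "without requiring the tedious case-by-case analyses," as remarked before the theorem.
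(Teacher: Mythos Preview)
Your approach is exactly the one the paper has in mind: the paper does not give a proof but states that ``with the same approach as above (and in fact without requiring the tedious case-by-case analyses) one can prove'' the result, and that is precisely what you do. The inclusion-exclusion over $(d-1)$-sets, the M\"obius computation of $|\Gamma_j|$, the dominance of the $i=1$ term, and the $U_a$/$U_b$-style bound on $|\Gamma_J|$ for $|J|\ge 2$ are all direct transcriptions of the proof of \cref{thm:asympdense} with $\sigma_q$, $\tau_q$, $\delta_q$ replaced by Gaussian binomials.

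A few cosmetic points. Your exponent bookkeeping gets muddled in places: the expressions involving $\binom{k+1}{2}$ and $\binom{k}{2}$ do not belong in the unrestricted count and should simply be dropped (the first-sum exponent is $d-2+(k-1)(n-k)=d+k-n-2+k(n-k)$, the $U_a$ exponent is $d-3+(k-1)(n-k)$, and the $U_b$ exponent is $2d-4+(k-2)(n-k)$). With these clean exponents the comparison is immediate: $U_a$ is down by $q^{-1}$, and $U_b$ is down by $q^{d+k-n-2}\le q^{-1}$ since $k\le n-d+1$. In particular your final observation is right on target: the hypothesis $k\le n-d+1$ already suffices here (no separate MDS case as in \cref{thm:asympdense2}), precisely because the unrestricted counts do not carry the extra $\binom{k+1}{2}$ correction.
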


Together with the results from \cref{sec:5} we conclude that, with respect to distance properties and asymptotic density, self-orthogonal codes
behave similarly to ``unrestricted'' codes, despite being a sparse family within the set of codes sharing a given dimension.

\newpage

\bibliographystyle{abbrv}
\bibliography{bilin.bib}

\end{document}